\documentclass[journal]{IEEEtran}
\usepackage{cite}
\usepackage{amsmath,amsfonts,amscd}
\usepackage{epsfig} 
\usepackage{mathptmx} 
\DeclareMathAlphabet{\mathcal}{OMS}{cmsy}{m}{n}
\usepackage{helvet}
\usepackage{amssymb}
\usepackage{courier}
\usepackage{type1cm}  
\usepackage{bbm}
\usepackage{dblfloatfix}
\usepackage{bm}
\usepackage{array}
\usepackage{mathtools}
\usepackage{wrapfig}
\usepackage[utf8]{inputenc}
\usepackage{algorithmic}
\usepackage{algorithm}
\usepackage[T1]{fontenc}
\usepackage{url}
\usepackage{float}
\usepackage{booktabs}
\usepackage{nicefrac}
\usepackage{microtype}   
\usepackage{graphicx}
\usepackage{multirow}
\usepackage{subcaption}
\graphicspath{{Figs/}}
\usepackage{xcolor}

\usepackage{amsthm}

\newtheorem{lemma}{Lemma}
\newtheorem{theorem}{Theorem}

\newtheorem{corollary}{Corollary}
\newtheorem{proposition}{Proposition}
\newtheorem{assumption}{Assumption}
\theoremstyle{definition}
\newtheorem{definition}{Definition}
\newtheorem{remark}{Remark}

\DeclareMathOperator*{\argmin}{arg\,min}



\usepackage{tikz}
\usetikzlibrary{arrows.meta,positioning,fit,backgrounds,calc}
\usepackage{hyperref}
\usepackage{cleveref}

\usepackage{textcomp}
\def\BibTeX{{\rm B\kern-.05em{\sc i\kern-.025em b}\kern-.08em
    T\kern-.1667em\lower.7ex\hbox{E}\kern-.125emX}}

\begin{document}
\title{Safety-Critical Contextual Control via Online Riemannian Optimization with World Models}
\author{Tongxin~Li,~\IEEEmembership{Member,~IEEE}
}
\maketitle

\begin{abstract}
Modern world models are becoming too complex to admit explicit dynamical descriptions. We study \emph{safety-critical contextual control}, where a Planner must optimize a task objective using only feasibility samples from a black-box Simulator, conditioned on a context signal $\xi_t$. We develop a sample-based Penalized Predictive Control (PPC) framework grounded in online Riemannian optimization, in which the Simulator compresses the feasibility manifold into a score-based density $\hat{p}(u \mid \xi_t)$ that endows the action space with a Riemannian geometry guiding the Planner's gradient descent. The barrier curvature $\kappa(\xi_t)$, the minimum curvature of the conditional log-density $-\ln\hat{p}(\cdot\mid\xi_t)$, governs both convergence rate and safety margin, replacing the Lipschitz constant of the unknown dynamics. Our main result (Theorem~\ref{thm:iss_rigorous}) is a contextual safety bound showing that the distance from the true feasibility manifold is controlled by the score estimation error and a ratio that depends on $\kappa(\xi_t)$, both of which improve with richer context. Simulations on a dynamic navigation task confirm that contextual PPC substantially outperforms marginal and frozen density models, with the advantage growing after environment shifts.
\end{abstract}

\section{Introduction}
\label{sec:intro}

\IEEEPARstart{T}{he} increasing complexity of modern world models is fundamentally changing the interface between planning and control. Neural physics simulators~\cite{sanchez2020learning}, video-generation models~\cite{yang2024video}, and foundation-model-based world models~\cite{ha2018world,bruce2024genie} can predict the consequences of actions with remarkable fidelity, yet their internal representations (multi-modal embeddings, high-dimensional latent states, or raw image sequences) are far too complex to admit explicit and known dynamical descriptions of the form $x_{t+1} = f_t(x_t, u_t)$. These black-box world models offer a compelling advantage over classical model-based control because they capture richer dynamics than any tractable closed-form approximation, generalize across environments without manual re-derivation, and directly leverage the rapid advances in foundation models and learned simulators. However, a controller that must guarantee safety while interacting with such a \textit{black-box world model} faces a fundamental challenge. It cannot write down the dynamics, cannot form the constraint Jacobians that classical Model Predictive Control (MPC) requires, and cannot construct the explicit barrier functions used in safety-critical control (\Cref{fig:paradigm}).

\begin{figure*}[t]
\centering
\begin{tikzpicture}[
    block/.style={draw, rounded corners, minimum height=1.05cm, minimum width=2.0cm, align=center, font=\footnotesize},
    arr/.style={-{Stealth[length=2.5mm]}, thick},
    lbl/.style={font=\scriptsize, fill=white, inner sep=1.5pt},
]
\begin{scope}[local bounding box=classical]
\node[block, fill=green!6] (envA) {Environment\\$x_{t+1}{=}f_t(x_t,u_t)$};
\node[block, right=1.2cm of envA] (modA) {Explicit\\Model $f_t$};
\node[block, right=1.0cm of modA] (jacA) {Jacobians\\$\nabla_u f_t$};
\node[block, right=1.0cm of jacA] (ctrlA) {Controller\\$\min\, c(u)$};
\node[block, right=1.0cm of ctrlA] (qpA) {CBF-QP\\Filter};
\draw[arr] (envA) -- node[lbl,above]{$x_t$} (modA);
\draw[arr] (modA) -- (jacA);
\draw[arr] (jacA) -- node[lbl,above]{$\nabla h$} (ctrlA);
\draw[arr] (ctrlA) -- (qpA);
\draw[arr, rounded corners=5pt] (qpA.south) -- ++(0,-0.45) -| node[lbl,below,pos=0.22]{$u_t$} (envA.south);
\end{scope}
\node[font=\footnotesize\bfseries, anchor=east] at ([xshift=-6pt]classical.west) {(a)};

\begin{scope}[yshift=-2.4cm, local bounding box=ours]
\node[block, fill=green!6] (envB) {Environment};
\node[block, fill=blue!8, right=1.2cm of envB] (simB) {Black-Box\\Simulator};
\node[block, fill=blue!8, right=1.0cm of simB] (kdeB) {KDE\\$\hat{p}_t,\,\hat{s}_t$};
\node[block, fill=orange!10, right=1.0cm of kdeB] (planB) {Planner\\$\min\, \mathcal{F}(u)$};
\node[block, fill=blue!8, right=1.0cm of planB] (filtB) {Safety\\Filter};
\draw[arr] (envB) -- node[lbl,above]{$x_t$} (simB);
\draw[arr] (simB) -- node[lbl,above]{samples} (kdeB);
\draw[arr] (kdeB) -- node[lbl,above]{$\hat{s}_t,\,\beta_t$} (planB);
\draw[arr] (planB) -- node[lbl,above]{$u_{\mathrm{ppc}}$} (filtB);
\draw[arr, rounded corners=5pt] (filtB.south) -- ++(0,-0.45) -| node[lbl,below,pos=0.22]{$u_t$} (envB.south);
\draw[arr, dashed, rounded corners=4pt] (envB.north) -- ++(0,0.5) -| node[font=\small, fill=white, inner sep=2pt, pos=0.3]{context $\xi_t$} (planB.north);
\end{scope}
\node[font=\footnotesize\bfseries, anchor=east] at ([xshift=-6pt]ours.west) {(b)};
\end{tikzpicture}
\caption{Two closed-loop control paradigms. \textbf{(a)}~Classical model-based control requires an explicit dynamics model $f_t$ to form constraint Jacobians and a CBF-QP safety filter that projects the controller's action onto the safe set. \textbf{(b)}~The contextual control framework replaces the explicit model with a black-box Simulator that produces feasibility samples; a KDE compresses them into a score signal $\hat{s}_t$, which the Planner uses for gradient-based optimization. A safety filter (score-ascent retraction) projects the candidate $u_{\mathrm{ppc}}$ onto $\hat{\mathcal{M}}_t^\alpha$ if needed. The Planner also receives a context signal $\xi_t$ from the environment (dashed arrow), enabling conditional density estimation $\hat{p}(u \mid \xi_t)$. See~\Cref{fig:alg_flow} for the detailed per-step information flow.}
\label{fig:paradigm}
\end{figure*}
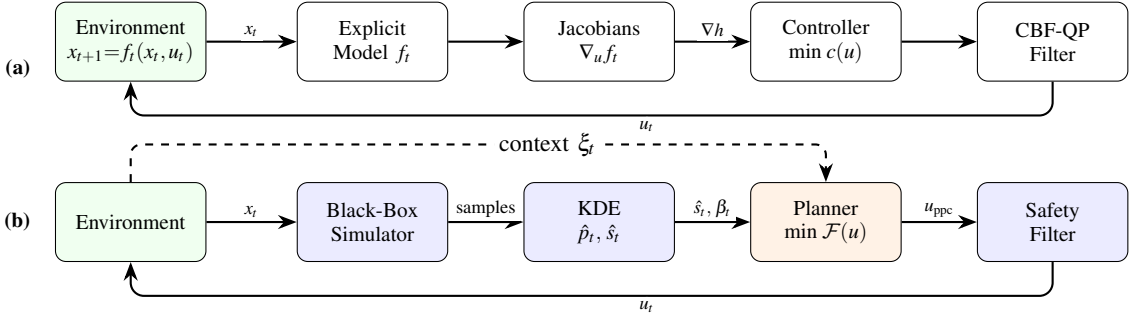

We address this challenge by developing a framework for safe control through black-box world-model samples. The controller optimizes an objective over a feasibility set whose geometry is accessible only through \textit{feasibility samples} drawn from the world model, rather than through an explicit dynamics model. The world model encapsulates all complexity of the physical environment, and the controller's access to safety-relevant information is mediated entirely by a learned \textit{density signal} that compresses the simulator's knowledge of the feasibility manifold. This architecture separates \textit{what is feasible} (determined by the black-box world model) from \textit{what is desirable} (the controller's objective), and the score function of the learned density provides a differentiable interface between the two.

This separation is not merely an engineering convenience; it is an inevitability as world models become richer and more expressive. The architecture arises naturally in robotics with visual world models~\cite{ha2018world,guo2026ctrlworld}, where a planner cannot extract closed-form obstacle constraints from a neural simulator's latent state. It appears equally in multi-agent energy dispatch~\cite{li2021information,li2021learning}, where an aggregator coordinates distributed resources whose local feasibility sets are private and time-varying, so the aggregator must plan using only sampled feasibility signals rather than explicit local models, and in autonomous driving~\cite{rhinehart2019precog}, where a predictive model simulates traffic scenarios but its internal state is not available to the planner in closed form. In each scenario, the feasibility manifold $\mathcal{M}_t$ is a time-varying, non-convex subset of the decision space whose geometry is determined by the world model's internal state. The controller must optimize over $\mathcal{M}_t$ \textit{without} an explicit dynamics model, learning the manifold structure from online samples while guaranteeing safety at every step.

This paper addresses the following question. \textit{Can a controller guarantee safety when its only access to the constraint geometry is through black-box world-model samples?}

We answer affirmatively. Our approach treats the world model as a \textit{Simulator} that compresses the geometry of the time-varying feasibility manifold into a score-based density signal, which a \textit{Planner} then uses for safe gradient-based optimization. When the Planner's observation is a \emph{context} $\xi_t$, a fixed-dimensional embedding of rich sensory data rather than the Simulator's full state, we estimate a \emph{conditional} feasibility density $\hat{p}(u \mid \xi_t)$ so that the learned manifold tracks changes in the environment that are reflected in $\xi_t$. This perspective builds on classical connections between control and probabilistic inference~\cite{todorov2006linearly,kappen2005linear} and extends recent score-based manifold optimization~\cite{kharitenko2025landing} to the online, safety-critical setting.

\subsubsection{Model-Free and Simulator-Based Control}
Classical MPC~\cite{garcia1989model} and its robust variants~\cite{bemporad1999robust,mayne2005robust} require an explicit dynamics model to construct constraints and compute Jacobians. Data-driven MPC~\cite{berberich2020data,bold2024data,rosolia2018data} learns surrogate models but still assumes the learned model is amenable to explicit constraint formulation. Recent work on predictive safety filters~\cite{wabersich2021predictive} and robust adaptive NMPC~\cite{buerger2026robust} further improves constraint handling, yet remains tied to an explicit or partially identified plant model. The Cross-Entropy Method (CEM)~\cite{rubinstein1999cross} is a popular sampling-based alternative that draws candidate actions from a parametric proposal, evaluates them against a model, and refits the proposal to an elite subset. CEM-MPC variants~\cite{chua2018deep,pinneri2021sample} combine CEM with learned dynamics models, but provide no safety certificates and scale poorly with action dimension. The control-as-inference viewpoint~\cite{todorov2006linearly,kappen2005linear} and rate-distortion theory~\cite{tishby2000information} provide a principled way to trade off performance and information cost, but existing instantiations assume known dynamics. Our framework makes no such assumption. The world model is accessed purely through feasibility samples, and the information-theoretic penalty emerges as the Lagrange multiplier of the sampling capacity, providing a mechanism for adaptive trust calibration unavailable in model-based or sampling-based approaches.

\subsubsection{Riemannian Optimization and Density Models}
Score matching~\cite{hyvarinen2005score,vincent2011connection} estimates the gradient of the log-density without computing the normalizing constant. Classical Riemannian optimization~\cite{boumal2023introduction,absil2008optimization} requires explicit manifold descriptions, but recent work~\cite{kharitenko2025landing} shows that score functions of smoothed densities can recover the manifold projection and tangent-space operations needed for optimization in a static, offline regime. Normalizing Flows~\cite{rezende2015variational,kobyzev2020normalizing} have been applied to trajectory forecasting~\cite{rhinehart2019precog} and policy parameterization~\cite{chao2024maximum}, while the information-aggregation framework of~\cite{li2021information,li2021learning} encodes distributed constraint geometry into a feasibility density. None of these works addresses \emph{online} score estimation with provable safety guarantees. Our framework fills this gap by extending density-based manifold operations to the online setting where the manifold is unknown and time-varying, and provides the first safety certificates for score-based feasibility learned from black-box samples.

\subsubsection{Data-Driven Safety Certificates}
Control Barrier Functions (CBFs)~\cite{taylor2020learning,ames2019control,ames2016control} parameterize safe sets as superlevel sets $\{x : h(x) \geq 0\}$ of learned functions. Recent extensions address parametric uncertainty~\cite{pati2025robust} and enable purely data-driven CBF synthesis from input-output measurements~\cite{he2026direct,bajelani2026data}. These methods suffer from \textit{structural bias}: maintaining tractability requires restricting $h$ to specific function classes (polynomials, GPs~\cite{cheng2019end}), which under-approximate complex, non-convex geometries. GP-based approaches additionally scale poorly with state dimension due to cubic kernel-matrix inversion. Our score-based approach can represent arbitrary manifold topologies without structural assumptions on $h$, scales favorably in the control dimension, and refines its safety certificate online as samples accumulate.

\textbf{Our contributions} are three-fold.

\emph{First}, we formalize \emph{contextual control with black-box world models} as a new problem class (\Cref{sec:problem}) in which the controller has no access to the dynamics $f_t$ and receives only feasibility samples from a black-box Simulator, yet must certify safety at every step. To our knowledge, this is the first framework that unifies black-box simulator access, online density estimation, and provable safety guarantees under a single formulation.

\emph{Second}, we develop a sample-based instantiation of Penalized Predictive Control (PPC)~\cite{li2021information} for the contextual black-box setting. We derive a rate-distortion formulation (Proposition~\ref{prop:thermo}) that yields the penalty weight $\beta$ as the Lagrange multiplier of the sampling capacity, together with an online score-estimation pipeline that replaces the oracle density with a learned estimate. The resulting controller uses the \emph{score} of the learned density as its sole interface to the constraint geometry, replacing the explicit Jacobians and barrier functions of classical MPC. A central novelty is that the \emph{barrier curvature} $\kappa$, the minimum eigenvalue of the Fisher information $\mathcal{I}_{\hat{p}}$, plays the role that the dynamics Lipschitz constant plays in standard safety-critical control. Because $\kappa$ is a property of the learned density rather than the unknown dynamics, it can be estimated from data and used to set $\beta$ adaptively, which is impossible in classical barrier-function methods.

\emph{Third}, we establish a sequence of convergence and safety guarantees, each building on the preceding one. Theorem~\ref{thm:contraction} shows that PPC iterates contract geometrically at a rate set by the barrier curvature $\kappa$, and Proposition~\ref{prop:betacrit} identifies the critical stiffness
above which the equilibrium is guaranteed to lie inside the learned manifold. When the manifold drifts, Theorem~\ref{thm:dynamic_regret} bounds the dynamic regret in terms of the score variation $\mathcal{V}_T$, revealing a \emph{dual role} of $\kappa$, namely it speeds contraction and simultaneously dampens sensitivity to manifold drift. Theorem~\ref{thm:learning} then bounds the score estimation error at rate $O(N^{-2/(m+4)})$, which feeds into Theorem~\ref{thm:iss_rigorous} to yield a safety bound whose residual $G_c/(\beta\kappa)$ depends on the learned density's curvature rather than a Lipschitz constant of the unknown dynamics. Finally, Proposition~\ref{prop:ctx_curvature} derives, via a variational-inference argument, a mixture-Hessian identity showing that marginalizing out $\xi_t$ subtracts a posterior-score covariance from the learned curvature, and Theorem~\ref{thm:ctx_gap} uses this to give a formal safety-guarantee gap of at least $G_c\,\sigma_t^2/(\beta_t\,\kappa(\xi_t)\,\kappa_{\mathrm{marg}})$ between the Theorem~\ref{thm:iss_rigorous} residuals of the context-aware and context-blind Planners whenever the current context is identifiable from the action, with $\sigma_t^2$ the minimum eigenvalue of the posterior conditional-score covariance. To our knowledge, this is the first formal characterization of when and by how much context improves safety in a black-box control setting.

\section{Problem Formulation}
\label{sec:problem}
We consider an architecture composed of a \textit{Planner} responsible for optimizing a global objective and a black-box \textit{Simulator} (world model) that encapsulates the physical environment. The system evolves according to time-varying nonlinear dynamics:
\begin{equation}
\label{eq:system}
    x_{t+1} = f_t(x_t, u_t), \quad t = 0, \ldots, T,
\end{equation}
where $x_t \in \mathcal{X} \subseteq \mathbb{R}^n$ is the system state and $u_t \in \mathcal{U} \subseteq \mathbb{R}^m$ is the control input. The dynamics $f_t$ are \textit{embedded inside the Simulator} and are not available to the Planner in closed form. The time-dependence of $f_t$ captures exogenous factors (moving obstacles, changing network topology, or shifting workload patterns) that the Simulator can observe and simulate but that remain hidden from the Planner.

\textbf{Notation.} Throughout, $\|\cdot\|$ denotes the Euclidean norm and $A \preceq B$ the L\"{o}wner order on symmetric matrices. We write $\mathrm{dist}(u, S) = \inf_{v \in S}\|u - v\|$ for point-to-set distance and $d_H$ for Hausdorff distance. The following symbols recur across sections and are collected here for reference: $T$ is the planning horizon; $K$ is the number of inner gradient steps per PPC update; $N_t$ is the cumulative number of feasibility samples available at time $t$; $\hat{p}_t$ is the learned feasibility density and $\hat{s}_t = \nabla_u \ln \hat{p}_t$ its score function; $\hat{\mathcal{M}}_t^\alpha = \{u : \hat{p}_t(u \mid \xi_t) \geq \alpha\}$ is the learned $\alpha$-superlevel set at time $t$; $\mathcal{I}_{\hat{p}}(u) = -\nabla^2 \ln \hat{p}(u)$ is the local Fisher information of the learned density, with eigenvalue bounds $\kappa$ (lower, the \emph{barrier curvature}) and $\Lambda$ (upper) formalized in Definition~\ref{def:barrier_curvature} and Assumption~\ref{ass:logconcave}.

\subsection{The Unknown Feasibility Manifold}
\label{sec:manifold}

Let $\mathcal{X}_{\mathrm{safe}} \subset \mathcal{X}$ be a closed set of safe states. We define the \textit{feasibility manifold} $\mathcal{M}_t(x_t)$ as the set of all inputs that maintain safety:
\begin{equation}
\label{eq:feasibility_manifold}
    \mathcal{M}_t(x_t) := \{ u \in \mathcal{U} \mid f_t(x_t, u) \in \mathcal{X}_{\mathrm{safe}} \}.
\end{equation}
Under smoothness of $f_t$, the set $\mathcal{M}_t$ constitutes a time-varying, potentially non-convex submanifold embedded in the control space $\mathcal{U}$, whose topology is governed by the nonlinear dynamics inside the Simulator. The manifold $\mathcal{M}_t$ is \textit{unknown to the Planner}, \textit{time-varying}, and accessible only through feasibility samples from the Simulator.

\begin{assumption}[Compactness and Regularity]
\label{ass:compactness}
    $\mathcal{X}_{\mathrm{safe}}$ is compact and $f_t$ is $C^2$-smooth for all $t$, so that $\mathcal{M}_t(x_t)$ is a compact embedded submanifold of $\mathcal{U}$.
\end{assumption}

Compactness of $\mathcal{X}_{\mathrm{safe}}$ is standard in safety-critical control~\cite{ames2016control}. The $C^2$-smoothness of $f_t$ ensures that the feasibility manifold $\mathcal{M}_t$ inherits a well-defined boundary, which is necessary for the score function to be well-defined in the interior.

\subsection{The Simulator--Planner Interface}
\label{sec:architecture}

The Simulator and Planner interact through a sample-based interface (\Cref{fig:architecture}). We formalize the black-box world model and the information flow between the two components.

\begin{definition}[Black-Box World Model]
\label{def:bbwm}
A \emph{black-box world model} is a feasibility oracle $\mathcal{W}_t : \mathcal{X} \times \mathcal{U} \to \{0, 1\}$ defined by $\mathcal{W}_t(x, u) = \bm{1}[f_t(x, u) \in \mathcal{X}_{\mathrm{safe}}]$. The dynamics $f_t$ are embedded inside the oracle and are not accessible in closed form. Given the current state $x_t$ and a proposal distribution $q(u)$, the oracle returns i.i.d.\ \emph{feasibility samples}\footnote{In practice, these samples can be obtained by any method that produces i.i.d.\ draws from the feasibility-conditioned distribution, e.g., rejection sampling from $q$ (used in our experiments; see~\Cref{sec:our_method}), MCMC, or direct forward simulation of the world model.} $\{u_i\}_{i=1}^N$ drawn from the conditional $q(u \mid \mathcal{W}_t(x_t, u) = 1)$. 
\end{definition}

The \textit{Simulator} has access to the full state $x_t$, the oracle $\mathcal{W}_t$, and the feasibility manifold $\mathcal{M}_t$. It produces feasibility samples and transmits a learned density estimate $\hat{p}_t$ (together with its score $\hat{s}_t = \nabla_u \ln \hat{p}_t$) to the Planner. The \textit{Planner} optimizes a stage cost $c(u)$ but has \textit{no access} to $f_t$, $x_t$, or the raw state; it receives only the score signal $\hat{s}_t$ and the cost function $c$. Formally, the information sets are
    \begin{align}
    \label{eq:info_sim}
    \mathcal{I}_t^{\mathrm{Sim}} &= \{ x_t,\, \mathcal{W}_t(\cdot),\, \mathcal{M}_t \}, \\
    \label{eq:info_pl}
    \mathcal{I}_t^{\mathrm{Pl}} &= \{ \hat{s}_t(\cdot),\, c(\cdot) \}.
    \end{align}
This information asymmetry ($f_t \notin \mathcal{I}_t^{\mathrm{Pl}}$) means the Planner cannot form constraint Jacobians or build explicit barrier functions; it must rely entirely on the Simulator's score signal.

\subsection{Score-Based Feasibility Signal}
\label{sec:score_learning}

Rather than transmitting the explicit manifold geometry to the Planner, the Simulator compresses $\mathcal{M}_t$ into a differentiable density whose score function encodes the manifold geometry. The ideal feasibility distribution and its score are from the existing information-aggregation literature~\cite{li2021information,li2021learning}.

\begin{definition}[Feasibility Distribution]
\label{def:feasibility_score}
The \emph{oracle feasibility distribution} $p^*_t(\cdot \mid x_t)$ is a $C^2$ density supported on the closure of the single-step feasibility manifold $\mathcal{M}_t(x_t)$, with $p^*_t > 0$ on $\mathrm{int}\,\mathcal{M}_t$, vanishing continuously on $\partial\mathcal{M}_t$, and having a nondegenerate gradient on $\partial\mathcal{M}_t$.\footnote{A canonical construction is a Gaussian-smoothed uniform density truncated to $\mathcal{M}_t$, i.e., $p^*_t \propto (\mathcal{N}(0,\sigma^2 I) * \bm{1}[\,\cdot\, \in \mathcal{M}_t])\cdot \bm{1}[\,\cdot\,\in\mathcal{M}_t]/Z$, which recovers the manifold projection and tangent-space operations needed for Riemannian optimization~\cite{kharitenko2025landing,hyvarinen2005score}. The smoothing scale $\sigma$ plays the role of the KDE bandwidth $h$ in Section~\ref{sec:kde_instantiation}.} The associated \emph{oracle score} is $s^*_t(u) := \nabla_u \ln p^*_t(u \mid x_t)$.
\end{definition}

We note that the maximum-entropy feasibility (MEF) density in~\cite{li2021information,li2021learning} assigns higher probability to actions that maximize \emph{future} flexibility by weighting with the volume of reachable feasible sets over a prediction horizon. Our single-step density is the special case in which no look-ahead is used. Extending the theory to multi-step MEF densities is an interesting direction but beyond the scope of this work.

The oracle score $s^*_t$ encodes the geometry of $\mathcal{M}_t$ in a form the Planner can use for gradient-based optimization, without ever accessing $f_t$. The critical challenge is that this score must be learned \textit{online} from streaming feasibility samples, rather than from a pretrained model.

\begin{assumption}[Bounded Information Loss]
\label{ass:bounded_loss}
    The Simulator employs a learning algorithm (e.g., KDE or score matching) to produce an estimate $\hat{p}_t$ of the feasibility density. The KL divergence between the oracle and the learned estimate is finite such that
$
    D_{KL}(p^*_t \| \hat{p}_t) < \infty.
$
\end{assumption}

This assumption is satisfied by any consistent estimator~\cite{tsybakov2009introduction}. We instantiate the estimator as a Gaussian KDE in Section~\ref{sec:kde_instantiation}, with convergence rates established in Section~\ref{sec:online_convergence}.
\begin{figure}[t]
    \centering
\begin{tikzpicture}[
    block/.style={draw, rounded corners, minimum height=0.9cm, minimum width=1.6cm, align=center, font=\scriptsize},
    arr/.style={-{Stealth[length=2mm]}, thick},
    lbl/.style={font=\tiny, midway},
    dashed_box/.style={draw, dashed, rounded corners, inner sep=4pt},
]
\node[block, fill=blue!8] (env) {Environment\\$x_{t+1}{=}f_t(x_t,u_t)$};
\node[block, fill=blue!8, right=0.4cm of env] (oracle) {World Model\\$\mathcal{W}_t(x_t,u)$};
\node[block, fill=blue!8, right=0.4cm of oracle] (kde) {KDE /\\Score Model};
\node[block, fill=orange!10, right=0.6cm of kde] (plan) {Planner\\$u_{\mathrm{ppc}}{\approx}\argmin \mathcal{F}$};

\draw[arr] (env) -- node[lbl,above]{$x_t$} (oracle);
\draw[arr] (oracle) -- node[lbl,above]{\{$u_i$\}} (kde);
\draw[arr] (kde) -- node[lbl,above]{$\hat{s}_t$} (plan);
\draw[arr, dashed] (plan.south) -- ++(0,-0.45) -| node[lbl, below, pos=0.25]{$u_{\mathrm{ppc}}$} (env.south);

\begin{scope}[on background layer]
\node[dashed_box, fit=(env)(oracle)(kde), label={[font=\tiny]above:Simulator}] {};
\end{scope}
\end{tikzpicture}
\caption{The Simulator--Planner architecture. The Simulator contains the environment dynamics and the feasibility oracle $\mathcal{W}_t$; it draws feasibility samples $\{u_i\}$ and fits a density model whose score $\hat{s}_t$ is transmitted to the Planner. The Planner minimizes the penalized objective $c(u) - \beta\ln\hat{p}_t(u)$ via gradient descent on $\hat{s}_t$, without access to $f_t$.}
    \label{fig:architecture}
\end{figure}
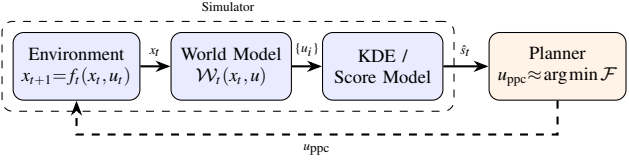

The Planner's goal is to minimize a stage cost $c(u_t)$ subject to the unknown manifold constraint $u_t \in \mathcal{M}_t$, using the \textit{online} score estimate $\hat{s}_t = \nabla_u \ln \hat{p}_t$ transmitted by the Simulator. The Planner does not observe the full state $x_t$ but instead receives a \emph{context signal} $\xi_t \in \mathbb{R}^d$, a fixed-dimensional summary of the environment. For example, in the robot navigation task of Section~\ref{sec:experiments}, $x_t$ includes all obstacle positions and velocities, while $\xi_t$ is a low-dimensional embedding of a top-down image produced by the Planner's sensor. The feasibility manifold $\mathcal{M}_t$ depends on $x_t$ through the obstacle geometry, but $\xi_t$ correlates with $x_t$ and allows the Planner to condition its density estimate as $\hat{p}(u \mid \xi_t)$, so that the learned manifold adapts when the environment changes. Without $\xi_t$, the Planner must rely on the marginal density $\hat{p}(u)$, which averages over all states and cannot track manifold shifts.

State-dependent objectives can be absorbed by conditioning the cost on $\xi_t$ (e.g., $c(u; \xi_t) = \|u - u_{\mathrm{goal}}(\xi_t)\|^2$ for some context-dependent target $u_{\mathrm{goal}}$); we suppress the $\xi_t$-dependence of $c$ for notational clarity. We formalize the fundamental problem as follows.

\textbf{Problem $\mathcal{P}$ (Contextual Control):}
Given a black-box world model $\mathcal{W}_t$ (Definition~\ref{def:bbwm}), a stage cost $c : \mathcal{U} \to \mathbb{R}$, and a context signal $\xi_t \in \mathbb{R}^d$, find a sequence of policies $\{\pi_t\}_{t=1}^T$ that solves
\begin{equation}\label{eq:main_problem}
    \min_{\{\pi_t\}} \;\; \sum_{t=1}^T \mathbb{E}_{u_t \sim \pi_t}\bigl[c(u_t)\bigr] \qquad \text{s.t.} \quad u_t \in \mathcal{M}_t, \;\; \forall\, t,
\end{equation}
using only feasibility samples from $\mathcal{W}_t$ and the context $\xi_t$, without access to the dynamics $f_t$.

Because the Planner observes $\xi_t$ rather than $x_t$, the oracle density from Definition~\ref{def:feasibility_score} induces two related distributions over actions. The \emph{conditional oracle density} $p^*(u \mid \xi_t) \coloneqq \mathbb{E}_{x_t \mid \xi_t}[p^*_t(u \mid x_t)]$ averages over the states consistent with the current context, while the \emph{marginal oracle density} $p^*(u) \coloneqq \int p^*(u \mid \xi)\, p(\xi)\, d\xi$ averages over all contexts under the context-generating distribution $p(\xi)$. Proposition~\ref{prop:ctx_curvature} and Theorem~\ref{thm:ctx_gap} quantify the benefit of working with the former rather than the latter, via the mixture-Hessian identity~\eqref{eq:mixture_hessian} and the resulting safety-guarantee gap~\eqref{eq:ctx_gap}.

Problem~$\mathcal{P}$ cannot be solved directly because the Planner cannot evaluate the hard constraint $u \in \mathcal{M}_t$, and the system dynamics couple the time steps through the state. Following~\cite{li2021information,li2021learning}, we decouple the problem by optimizing each $\pi_t$ independently, treating the current manifold $\mathcal{M}_t$ as given. At each step $t$, the Simulator compresses the constraint into the learned density $\hat{p}_t(\cdot \mid \xi_t)$, and safety is enforced by requiring the policy to remain close to this density. This yields the per-step relaxation:

\textbf{Problem $\mathcal{P}_{RD}$ (Per-Step Rate-Distortion Relaxation):}
\begin{equation}\label{eq:prd}
    \min_{\pi_t} \;\; \mathbb{E}_{u \sim \pi_t}\bigl[c(u)\bigr] \qquad \text{s.t.} \quad D_{KL}(\pi_t \| \hat{p}_t(\cdot \mid \xi_t)) \leq I_c.
\end{equation}
Solving $\mathcal{P}_{RD}$ at each $t$ and summing the costs recovers the structure of $\mathcal{P}$. When $\hat{p}_t \to p^*_t$ and $I_c \to 0$, the KL constraint forces $\pi_t$ to concentrate on $\mathcal{M}_t$, so per-step feasibility is recovered. The cumulative cost suboptimality of this online policy relative to the offline optimum of~$\mathcal{P}$ is bounded by Theorem~\ref{thm:dynamic_regret}. For notational brevity, we often write $\hat{p}_t(u)$ for $\hat{p}_t(u \mid \xi_t)$ when the context is clear; the full conditional notation is used whenever the distinction matters.

\section{Control Synthesis via Free Energy Minimization}
\label{sec:reformulation}

We now solve Problem~$\mathcal{P}_{RD}$. Since $\mathcal{P}_{RD}$ is strictly convex in $\pi$, the KKT conditions yield the penalty weight $\beta$ as the Lagrange multiplier of the capacity constraint $I_c$ (not an arbitrary tuning parameter). The critical \textit{stiffness condition} $\beta > \beta^*_{\mathrm{curv}}$ derived in~\Cref{sec:score_descent} ensures that the Planner adheres to the Simulator's learned manifold geometry with strictness proportional to the barrier curvature $\kappa$.

\begin{proposition}[Thermodynamic Reformulation]
\label{prop:thermo}
  The optimal policy $\pi^*_t(u)$ solving $\mathcal{P}_{RD}$ is the unique minimizer of the Free Energy functional:
  \begin{align}
    \mathcal{G}(\pi) = \mathbb{E}_{u \sim \pi} [ c(u) ] + \beta_t D_{KL}(\pi \| \hat{p}_t).  
  \end{align}
  The solution is given by the Gibbs-Boltzmann distribution:
  \begin{align*}
        \pi^*_t(u) = \frac{1}{Z_t} \hat{p}_t(u) \exp\left( -\frac{c(u)}{\beta_t} \right),
  \end{align*}
  where $\beta_t > 0$ is the Lagrange multiplier associated with the capacity constraint $I_c$, and
  \[
      Z_t = \int_{\mathcal{U}} \hat{p}_t(v)\exp\!\left(-\frac{c(v)}{\beta_t}\right)\,dv
  \]
  is the normalizing constant (partition function).
\end{proposition}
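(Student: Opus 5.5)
The plan is a Lagrangian duality argument combined with the Gibbs variational principle. I first note that $\mathcal{P}_{RD}$ is a convex program over the convex set of probability densities on $\mathcal{U}$. Its objective $\pi\mapsto\mathbb{E}_\pi[c]$ is linear, and its constraint $\pi\mapsto D_{KL}(\pi\|\hat p_t)$ is strictly convex. If $I_c>0$, Slater's condition holds with $\pi=\hat p_t$, which has KL divergence $0<I_c$; Assumption~\ref{ass:bounded_loss} guarantees that $\hat p_t$ is a legitimate reference measure with finite divergences. Strong duality then holds. I form the Lagrangian $L(\pi,\beta)=\mathbb{E}_\pi[c]+\beta\,(D_{KL}(\pi\|\hat p_t)-I_c)$. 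For fixed $\beta_t>0$, minimizing $L$ over $\pi$ is exactly minimizing $\mathcal{G}(\pi)$, because $-\beta I_c$ is a constant. The dual optimum $\beta_t$ is the Lagrange multiplier of the capacity constraint.

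Next I identify the minimizer of $\mathcal{G}$ without calculus of variations, using the Donsker--Varadhan / Gibbs identity. I define $\pi^*_t=\hat p_t e^{-c/\beta_t}/Z_t$. For this to be well defined I need $0<Z_t<\infty$; I will assume $c$ is bounded below on $\mathcal{U}$, which suffices because $\hat p_t$ is a probability density. Substituting $\ln\hat p_t=\ln\pi^*_t+c/\beta_t+\ln Z_t$ into the KL term gives, for any $\pi$ with $D_{KL}(\pi\|\hat p_t)<\infty$ and $\mathbb{E}_\pi|c|<\infty$,
\[
\mathcal{G}(\pi)=\beta_t D_{KL}(\pi\|\pi^*_t)-\beta_t\ln Z_t .
\]
Since $D_{KL}\ge 0$, with equality iff $\pi=\pi^*_t$ a.e., $\pi^*_t$ is the unique minimizer of $\mathcal{G}$, with value $-\beta_t\ln Z_t$. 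This step gives both existence and uniqueness directly, and it avoids differentiating with respect to $\pi$.

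Finally I connect back to $\mathcal{P}_{RD}$ through complementary slackness. If the constraint is active, I choose $\beta_t$ so that $D_{KL}(\pi^*_{\beta}\|\hat p_t)=I_c$. The map $\beta\mapsto D_{KL}(\pi^*_\beta\|\hat p_t)$ is continuous and monotonically decreasing, tending to $0$ as $\beta\to\infty$; the monotonicity follows from differentiating and obtaining a variance of $c$ under $\pi^*_\beta$, scaled by $\beta^{-3}$. By the intermediate value theorem, any $I_c$ below the $\beta\to0$ limit is attained. For any feasible $\pi$ I then have
\[
\mathbb{E}_\pi[c]\ \ge\ \mathcal{G}(\pi)-\beta_t I_c\ \ge\ \mathcal{G}(\pi^*_t)-\beta_t I_c=\mathbb{E}_{\pi^*_t}[c],
\]
so $\pi^*_t$ solves $\mathcal{P}_{RD}$. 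Uniqueness follows from strict convexity of the feasible set's defining constraint together with the uniqueness of the minimizer of $\mathcal{G}$.

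I expect the main obstacle to be the technical setup of this last step rather than the algebra. The issues are integrability of $e^{-c/\beta}$ and the existence of a multiplier $\beta_t>0$ that exactly matches $I_c$. In the degenerate case where the constraint is inactive (the unconstrained minimizer of $\mathbb{E}_\pi[c]$ already lies within the KL budget), the multiplier would be $0$ and the Gibbs form breaks down. I would handle this by stating the result under the assumption that the constraint is active, so that $\beta_t>0$, which is implicit in the proposition's phrasing.
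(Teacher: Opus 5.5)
Your argument is correct, but it takes a different route from the paper's. The paper's proof is a formal first-order calculation. It writes the Lagrangian with a multiplier $\beta$ for the KL budget and a multiplier $\lambda$ for normalization, sets the Fr\'echet derivative $c(u)+\beta(\ln\pi(u)-\ln\hat p(u)+1)+\lambda$ to zero, and exponentiates to read off $\pi^*\propto\hat p\,e^{-c/\beta}$, with $\lambda$ absorbed into $Z_t$. That route is short and mirrors the LMDP and path-integral derivations it cites. However, it only produces a stationary point: sufficiency and uniqueness are left implicit in a convexity remark, and it never shows that a positive multiplier matching $I_c$ exists.

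Your route replaces the variational derivative with the Gibbs identity $\mathcal{G}(\pi)=\beta_t D_{KL}(\pi\|\pi^*_t)-\beta_t\ln Z_t$. This gives existence, uniqueness and the optimal value $-\beta_t\ln Z_t$ at once, without differentiating in $\pi$. Your monotonicity computation $\frac{d}{d\beta}D_{KL}(\pi^*_\beta\|\hat p_t)=-\beta^{-3}\mathrm{Var}_{\pi^*_\beta}(c)$, together with the intermediate value theorem, then actually constructs $\beta_t$. The chain $\mathbb{E}_\pi[c]\ge\mathcal{G}(\pi)-\beta_t I_c\ge\mathcal{G}(\pi^*_t)-\beta_t I_c=\mathbb{E}_{\pi^*_t}[c]$ certifies optimality for $\mathcal{P}_{RD}$. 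Your version also makes explicit hypotheses the statement tacitly needs: integrability of $e^{-c/\beta}$, an active constraint, and $I_c$ below the $\beta\to 0$ limit of the KL.

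Two small remarks. First, the Slater and strong-duality paragraph is not needed, because the final chain is a self-contained sufficiency certificate. Also, Assumption~\ref{ass:bounded_loss} concerns $D_{KL}(p^*_t\|\hat p_t)$, so it plays no role here. Second, uniqueness for $\mathcal{P}_{RD}$ follows most cleanly from that same chain. If a feasible $\pi$ attains $\mathbb{E}_{\pi^*_t}[c]$, equality forces $\mathcal{G}(\pi)=\mathcal{G}(\pi^*_t)$, hence $\pi=\pi^*_t$, with no separate appeal to strict convexity of the constraint.
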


\Cref{prop:thermo} shows that the optimal policy balances economic performance against manifold adherence through a Gibbs-Boltzmann distribution. Here $\mathcal{G}(\pi)$ is a functional over policies; the PPC objective in the next subsection is the pointwise free-energy function $\mathcal{F}(u)=c(u)-\beta_t\ln\hat{p}_t(u)$ obtained by taking the MAP of $\pi^*_t$. The stiffness $\beta_t$ controls how strongly the Planner trusts the Simulator's feasibility signal.
\begin{proof}
We form the Lagrangian functional $\mathcal{L}(\pi, \beta, \lambda)$:
\begin{align}
\nonumber
    \mathcal{L} = & \int \pi(u) c(u) du + \beta \left( \int \pi(u) \ln \frac{\pi(u)}{\hat{p}(u)} du - I_c \right) \\
    \label{eq:lagrangian}
    &+ \lambda \left( \int \pi(u) du - 1 \right).
\end{align}
Taking the Fr\'{e}chet derivative with respect to $\pi(u)$ and setting to zero,
\[\frac{\delta \mathcal{L}}{\delta \pi} = c(u) + \beta (\ln \pi(u) - \ln \hat{p}(u) + 1) + \lambda = 0.\]
Rearranging the terms we obtain $\ln \pi(u) = \ln \hat{p}(u) - {c(u)}/{\beta} - ({\lambda}/{\beta} + 1)$. Exponentiating yields $\pi^*(u) \propto \hat{p}(u) e^{-c(u)/\beta}$, matching the optimal controllers in Linearly Solvable MDPs~\cite{todorov2006linearly} and path integral control~\cite{kappen2005linear}. 
\end{proof}

\subsection{Deterministic Control Law}

The Planner implements a single action $u_t$ by taking the maximum a posteriori (MAP) estimate of $\pi^*_t$, yielding the Penalized Predictive Control (PPC) objective~\cite{li2021information}. While PPC was originally derived for settings where the oracle density $p^*_t$ is available in closed form, our contribution is to instantiate it in the black-box regime where $p^*_t$ is replaced by the online estimate $\hat{p}_t$ learned from simulator samples, and to provide convergence and safety guarantees for this sample-based variant.

\begin{corollary}[The PPC Objective]
\label{cor:ppc}
    The MAP estimate $u_t^{\mathrm{MAP}} = \arg\max_{u} \pi^*_t(u)$ is equivalent to:
    \begin{equation}
    u_t^{\mathrm{MAP}} = \arg\min_{u \in \mathcal{U}} \left( c(u) - \beta_t \ln \hat{p}_t(u) \right).
    \end{equation}
\end{corollary}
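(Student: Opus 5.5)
The plan is to read off the maximizer of the closed-form Gibbs--Boltzmann policy from Proposition~\ref{prop:thermo} and transform it by strictly monotone maps, which preserve the set of optimizers. By that proposition,
\[
\pi^*_t(u) = Z_t^{-1}\,\hat{p}_t(u)\exp\!\bigl(-c(u)/\beta_t\bigr),
\]
with $Z_t$ a constant independent of $u$ and $\beta_t>0$. The first step is to confirm that $Z_t\in(0,\infty)$, which makes the prefactor $Z_t^{-1}$ a positive constant. Positivity follows because $\hat{p}_t>0$ on a set of positive measure and the exponential factor is positive. Finiteness follows from integrability of $\hat{p}_t$, provided $c$ is bounded below on $\mathcal{U}$. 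This lower bound is implicit in the stage-cost setup, and I would state it explicitly in the proof. Dropping the positive constant then gives
\[
\arg\max_u \pi^*_t(u) = \arg\max_u \hat{p}_t(u)e^{-c(u)/\beta_t}.
\]

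Next I would restrict attention to the set $\{u\in\mathcal{U}:\hat{p}_t(u)>0\}$. Off this set $\pi^*_t$ vanishes, so such points cannot be maximizers whenever $\pi^*_t$ attains any positive value. On the same set we have $-\ln\hat{p}_t=+\infty$, so these points are also never minimizers of $c-\beta_t\ln\hat{p}_t$, under the convention $\ln 0=-\infty$. On the positivity set we can apply $\ln$, which is strictly increasing and hence preserves argmax:
\[
\arg\max_u\bigl(\ln\hat{p}_t(u) - c(u)/\beta_t\bigr).
\]
Multiplying by $-\beta_t<0$ turns the argmax into an argmin, giving
\[
\arg\min_u\bigl(c(u)-\beta_t\ln\hat{p}_t(u)\bigr).
\]
This is exactly the PPC objective, and the equivalence holds as an equality of (possibly set-valued) argmin/argmax sets.

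The only delicate point is existence, that is, whether the argmax is attained at all. The statement asserts equivalence of the two optimization problems, so strictly I only need the two optimizer sets to coincide, and that holds without any existence assumption. Even so, I would add a remark covering existence. If $\hat{p}_t$ is continuous (as for the Gaussian KDE of Section~\ref{sec:kde_instantiation}) and $c$ is lower semicontinuous and coercive, or if $\mathcal{U}$ is compact, then $\mathcal{F}(u)=c(u)-\beta_t\ln\hat{p}_t(u)$ is lower semicontinuous with compact sublevel sets, so its minimizer exists. By the equivalence just shown, the MAP is then attained as well.
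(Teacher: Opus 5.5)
Your proposal is correct and is the same argument as the paper's proof: discard the positive constant $Z_t^{-1}$, apply the strictly increasing logarithm, and rescale by $\beta_t>0$ (with a sign flip) to turn the argmax of the Gibbs--Boltzmann policy into the argmin of $c(u)-\beta_t\ln\hat p_t(u)$. Your additional checks on $Z_t\in(0,\infty)$, the zero-density set, and existence of a minimizer add rigor that the paper leaves implicit, and the zero-set case never arises for the Gaussian KDE because that density is strictly positive everywhere.
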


\begin{proof}
By definition,
\begin{align*}
    u_t^{\mathrm{MAP}}
    &= \arg\max_{u}\, \frac{1}{Z_t}\, \hat{p}_t(u)\, e^{-c(u)/\beta_t} \\
    &= \arg\min_{u}\, \bigl( c(u)/\beta_t - \ln \hat{p}_t(u) \bigr).
\end{align*}
    Multiplying by $\beta_t$ yields the objective. Note that $u_t^{\mathrm{MAP}}$ is a conceptual unconstrained minimizer over $\mathcal{U}$; we reserve the notation $u_t^*$ for the restricted minimizer over $\hat{\mathcal{M}}_t^\alpha$ introduced in Lemma~\ref{lem:landscape}(c), which is the object that PPC gradient descent from a warm start in $\hat{\mathcal{M}}_t^\alpha$ converges to (Theorem~\ref{thm:contraction}). Proposition~\ref{prop:betacrit} further shows that $u_t^*$ is itself an unconstrained critical point of $\mathcal{F}$ (i.e., $\nabla\mathcal{F}(u_t^*)=0$) whenever $\beta_t > \beta^*_{\mathrm{curv}}$, although $\mathcal{F}$ may admit additional critical points outside $\hat{\mathcal{M}}_t^\alpha$.
\end{proof} 

The gradient of the PPC objective is
\begin{equation}
\nabla_u \left( c(u) - \beta_t \ln \hat{p}_t(u) \right) = \nabla_u c(u) - \beta_t \hat{s}_t(u),
\end{equation}
where $\hat{s}_t(u) = \nabla_u \ln \hat{p}_t(u)$ is the Simulator's learned score. Each gradient step thus combines the cost gradient $\nabla c$ (what is desirable) with the score $\hat{s}_t$ (what is feasible), and the stiffness $\beta_t$ mediates between minimizing $c$ and adhering to $\mathcal{M}_t$. The Planner never accesses $f_t$; all constraint information enters through $\hat{s}_t$.

The remainder of the paper answers Problem~$\mathcal{P}$ by analyzing the per-step free energy $\mathcal{F}_t(u) = c(u) - \beta_t \ln \hat{p}_t(u)$. When the time index is clear from context, we write $\mathcal{F}(u)$, $\hat{p}(u)$, and $\beta$ for brevity.

\section{Riemannian Score Descent}
\label{sec:score_descent}

In this section we fix a single time step $t$ and context $\xi_t$, analyzing the PPC free energy $\mathcal{F}(u) = c(u) - \beta \ln \hat{p}(u \mid \xi_t)$ with $\hat{p} \equiv \hat{p}_t(\cdot \mid \xi_t)$ and $\beta \equiv \beta_t$ held constant. Similarly, we simplify the learned $\alpha$-superlevel set $\hat{\mathcal{M}}_t^\alpha = \{u : \hat{p}_t(u \mid \xi_t) \geq \alpha\}$ as $\hat{\mathcal{M}}^\alpha$. The subscript $k$ indexes the inner gradient iterations within this step; the extension to time-varying densities is in Section~\ref{sec:dynamic_regret}.

\subsection{Density Estimator}
\label{sec:kde_instantiation}
We instantiate the Simulator's estimator $\hat{p}_t$ (Assumption~\ref{ass:bounded_loss}) as a Gaussian kernel density estimator (KDE)~\cite{tsybakov2009introduction}, which admits a closed-form score and supports the convergence analysis of Section~\ref{sec:online_convergence}. At time $t$, the Simulator has collected $N_t$ cumulative feasibility samples $\{U_1, \ldots, U_{N_t}\} \subset \mathbb{R}^m$, and estimates the marginal feasibility density by
\begin{equation}\label{eq:kde}
    \hat{p}_{N}(u) = \frac{1}{N}\sum_{i=1}^{N} K_h(u - U_i), \quad K_h(z) = \frac{e^{-\|z\|^2/(2h^2)}}{(2\pi h^2)^{m/2}},
\end{equation}
with bandwidth $h = \Theta(N^{-1/(m+4)})$. Setting $\hat{p}_t = \hat{p}_{N_t}$ gives the score $\hat{s}_t(u) = \nabla_u \ln \hat{p}_t(u)$ in closed form. When the Planner observes context $\xi_t \in \mathbb{R}^d$, the Simulator stores joint pairs $(U_i, \xi_i) \in \mathbb{R}^m \times \mathbb{R}^d$ and fits a product-kernel KDE,
\[
    \hat{p}(u, \xi) = \frac{1}{N}\sum_{i=1}^N K_{h_u}(u - U_i)\, K_{h_\xi}(\xi - \xi_i),
\]
where $K_{h_u}$ and $K_{h_\xi}$ are Gaussian kernels with bandwidths $h_u$ and $h_\xi$ for the action and context dimensions respectively. The conditional density $\hat{p}(u \mid \xi_t) = \hat{p}(u, \xi_t) / \int \hat{p}(v, \xi_t)\, dv$ and its score $\nabla_u \ln \hat{p}(u \mid \xi_t)$ are used in place of $\hat{p}_t(u)$ and $\hat{s}_t(u)$ throughout. For the KDE~\eqref{eq:kde}, the KL divergence in Assumption~\ref{ass:bounded_loss} decreases with $N_t$ and is bounded by the integrated squared score error (Theorem~\ref{thm:learning}).

\subsection{The Riemannian Structure of PPC}
This subsection establishes the geometric structure that drives PPC convergence and safety. In particular, it links local density curvature to optimization curvature through the PPC Hessian.
Recall from Corollary~\ref{cor:ppc} that PPC minimizes the free energy $\mathcal{F}(u) = c(u) - \beta \ln \hat{p}(u)$. The \emph{PPC gradient iterates} are defined by
\begin{equation}\label{eq:ppc_iterates}
    u_{k+1} = u_k - \eta\,\nabla\mathcal{F}(u_k), \quad k = 0, 1, \ldots, K{-}1,
\end{equation}
with step size $\eta > 0$, inner step count $K$, and warm start $u_0 = u_{t-1}$. The gradient is $\nabla \mathcal{F} = \nabla c - \beta \hat{s}$, and the Hessian is
\begin{equation}
\label{eq:ppc_hessian}
    H_{\mathcal{F}}(u) = \nabla^2 c(u) + \beta\, \mathcal{I}_{\hat{p}}(u),
\end{equation}
where $\mathcal{I}_{\hat{p}}(u) := -\nabla^2 \ln \hat{p}(u)$ is the \emph{local Fisher information} of the learned density. Thus each PPC gradient step implicitly uses a Riemannian metric shaped by the curvature of the log-density barrier. The action space $\mathcal{U}$, equipped with the positive-definite matrix field $\mathcal{I}_{\hat{p}}(u)$, forms a Riemannian manifold whose metric is the Fisher-Rao metric of the learned density $\hat{p}(\cdot \mid \xi)$. Gradient descent on $\mathcal{F}$ is then equivalent to Euclidean gradient descent with a geometry-aware preconditioner. Near the manifold boundary, $\mathcal{I}_{\hat{p}}$ is large and the effective step size shrinks, preventing the iterate from leaving the feasible region; in the interior, $\mathcal{I}_{\hat{p}}$ is small and the step expands. This geometry-awareness is a structural advantage over CBF-QP~\cite{ames2016control}, which uses a flat Euclidean metric in its safety projection.

Given the learned density $\hat{p}$ and a threshold $\alpha > 0$, let $\hat{\mathcal{M}}^\alpha$ denote the learned $\alpha$-superlevel set. The threshold $\alpha$ controls the conservatism of the learned safe set and is chosen by the Simulator (e.g., as a percentile of the KDE density values; see Algorithm~\ref{alg:nfc}). We now formalize the central geometric quantity of the paper.

\begin{definition}[Barrier Curvature]
\label{def:barrier_curvature}
Let $\hat{p}$ be a learned feasibility density and $\hat{\mathcal{M}}^\alpha = \{u : \hat{p}(u) \geq \alpha\}$ its $\alpha$-superlevel set. The \emph{barrier curvature} $\kappa$ and the \emph{curvature upper bound} $\Lambda$ are the tightest constants satisfying
\begin{equation}\label{eq:curvature_bounds}
    \kappa\, I \;\preceq\; \mathcal{I}_{\hat{p}}(u) \;\preceq\; \Lambda\, I
    \qquad \forall\, u \in \hat{\mathcal{M}}^\alpha,
\end{equation}
where $\mathcal{I}_{\hat{p}}(u) = -\nabla^2 \ln \hat{p}(u)$ is the local Fisher information defined in~\eqref{eq:ppc_hessian}. When the density is conditioned on context $\xi_t$, the curvature bounds depend on $\xi_t$ and we write
\[
    \kappa(\xi_t) \;:=\; \min_{u \in \hat{\mathcal{M}}_t^\alpha}\, \lambda_{\min}\!\bigl(\mathcal{I}_{\hat{p}(\cdot|\xi_t)}(u)\bigr).
\]
When the context is suppressed (as in the single-step analysis of this section), $\kappa$ refers to the barrier curvature of whichever density is in use.
\end{definition}

The barrier curvature $\kappa$ quantifies how sharply the log-density increases toward the interior of $\hat{\mathcal{M}}^\alpha$; it is the analogue, in our score-based setting, of the barrier self-concordance parameter in interior-point methods. Its upper counterpart $\Lambda$ controls the smoothness of the free energy $\mathcal{F}$.

\begin{assumption}[Local Log-Concavity]
\label{ass:logconcave}
    $\kappa > 0$ (equivalently, $-\ln\hat{p}$ is strictly convex on $\hat{\mathcal{M}}^\alpha$), $\Lambda < \infty$, and $\hat{\mathcal{M}}^\alpha$ is a single connected component of the superlevel set $\{u : \hat{p}(u) \geq \alpha\}$ that is convex (equivalently, $\alpha$ exceeds the density value at every saddle point of $\hat{p}$ so that the component lies in a single basin). The cost $c$ is convex, $L_c$-smooth, and has bounded gradient $\|\nabla c\|_\infty \leq G_c$.
\end{assumption}

The condition $\kappa > 0$ requires $-\ln\hat{p}$ to be strictly convex on $\hat{\mathcal{M}}^\alpha$. For the Gaussian KDE~\eqref{eq:kde} used throughout this paper, $\hat{p}(u) = \frac{1}{N}\sum_{i=1}^N K_h(u - U_i)$ is a mixture of Gaussians, and $-\ln\hat{p}$ is smooth everywhere. The single-basin/convexity clause is not automatic for mixtures: a Gaussian mixture can have several modes, and if $\alpha$ is taken below the density at an inter-mode saddle, $\{\hat{p} \geq \alpha\}$ becomes a disjoint union of convex components with the Hessian indefinite between them. Choosing $\alpha$ above every saddle value (equivalently, setting $\alpha$ as a high percentile of observed density values) isolates one basin; within that basin $\hat{p}$ is log-concave, $-\ln\hat{p}$ is strictly convex, and the superlevel set is convex. The Hessian $\mathcal{I}_{\hat{p}}(u)$ is then positive definite throughout, and $\kappa > 0$ holds with $\kappa$ determined by the narrowest bottleneck of the set. For the Gaussian KDE, $\Lambda = 1/h^2$ where $h$ is the bandwidth, so $\Lambda < \infty$ holds automatically. Both $\kappa$ and $\Lambda$ are computable from the KDE in closed form, and the single-basin condition can be checked online by locating the critical points of $\hat{p}$, so Assumption~\ref{ass:logconcave} is \emph{verifiable online} by the Simulator at each step.

The following lemma establishes the landscape properties of $\mathcal{F}$ that underpin the convergence theory.

\begin{lemma}[Free Energy Landscape]
\label{lem:landscape}
Under Assumption~\ref{ass:logconcave}, let $\mu := \beta\kappa$ and $L := L_c + \beta\Lambda$. The PPC free energy $\mathcal{F}(u) = c(u) - \beta\ln\hat{p}(u)$ satisfies:
\begin{enumerate}
    \item[(a)] $\mu$-strong convexity: $H_{\mathcal{F}}(u) \succeq \mu\, I$ for all $u \in \hat{\mathcal{M}}^\alpha$;
    \item[(b)] $L$-smoothness: $H_{\mathcal{F}}(u) \preceq L\, I$ for all $u \in \hat{\mathcal{M}}^\alpha$;
    \item[(c)] \emph{local} uniqueness: $\mathcal{F}$ has a unique minimizer on the compact convex set $\hat{\mathcal{M}}^\alpha$, denoted $u^* := \arg\min_{u\in\hat{\mathcal{M}}^\alpha}\mathcal{F}(u)$, and at most one unconstrained critical point of $\mathcal{F}$ lies in $\hat{\mathcal{M}}^\alpha$; when such a point exists it equals $u^*$ (equivalently, $u^* \in \mathrm{int}(\hat{\mathcal{M}}^\alpha)$ and $\nabla\mathcal{F}(u^*)=0$);
    \item[(d)] Polyak--\L{}ojasiewicz inequality:
    $\|\nabla\mathcal{F}(u)\|^2 \geq 2\mu\bigl(\mathcal{F}(u) - \mathcal{F}(u^*)\bigr)$ for all $u \in \hat{\mathcal{M}}^\alpha$.
\end{enumerate}
The condition number $\chi := L/\mu = (L_c + \beta\Lambda)/(\beta\kappa)$ governs the convergence rate of gradient-based iterates on $\mathcal{F}$. Note that (c) asserts uniqueness only within $\hat{\mathcal{M}}^\alpha$; since $\hat{p}$ is a Gaussian mixture, $\mathcal{F}$ need not be convex outside $\hat{\mathcal{M}}^\alpha$ and may admit additional critical points there.
\end{lemma}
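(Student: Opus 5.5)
The plan is to read everything off the Hessian decomposition~\eqref{eq:ppc_hessian}, $H_{\mathcal{F}}(u) = \nabla^2 c(u) + \beta\,\mathcal{I}_{\hat{p}}(u)$, and then apply standard facts about strongly convex, smooth functions on a convex set. Since $\hat{\mathcal{M}}^\alpha$ is convex by Assumption~\ref{ass:logconcave}, the segment between any two of its points stays in the set. Hence pointwise Hessian bounds on $\hat{\mathcal{M}}^\alpha$ integrate (Taylor with integral remainder) to first-order strong convexity and smoothness inequalities valid for all pairs in $\hat{\mathcal{M}}^\alpha$.

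For (a), convexity of $c$ gives $\nabla^2 c \succeq 0$, and Definition~\ref{def:barrier_curvature} gives $\mathcal{I}_{\hat{p}} \succeq \kappa I$ on $\hat{\mathcal{M}}^\alpha$. Adding them yields $H_{\mathcal{F}} \succeq \beta\kappa I = \mu I$. For (b), $L_c$-smoothness gives $\nabla^2 c \preceq L_c I$, and combining this with $\mathcal{I}_{\hat{p}} \preceq \Lambda I$ yields $H_{\mathcal{F}} \preceq (L_c+\beta\Lambda)I$. If $c$ is only $C^{1,1}$, I would phrase both bounds through gradient monotonicity and Lipschitz inequalities rather than through Hessians.

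For (c), note that $\hat{\mathcal{M}}^\alpha$ is closed because $\hat{p}$ is continuous. It is bounded because the Gaussian mixture $\hat{p}$ decays to zero at infinity and $\alpha>0$. So the set is compact, and the continuous function $\mathcal{F}$ attains its minimum on it. By (a), $\mathcal{F}$ is strictly convex on the convex set, so the minimizer $u^*$ is unique. Now suppose $\bar u \in \hat{\mathcal{M}}^\alpha$ satisfies $\nabla\mathcal{F}(\bar u)=0$. Strong convexity along segments gives $\mathcal{F}(v) \ge \mathcal{F}(\bar u) + \tfrac{\mu}{2}\|v-\bar u\|^2$ for every $v\in\hat{\mathcal{M}}^\alpha$. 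Therefore $\bar u$ is the unique constrained minimizer, i.e.\ $\bar u = u^*$, and at most one such critical point can exist. One subtlety needs care: the claimed equivalence with $u^*\in\mathrm{int}(\hat{\mathcal{M}}^\alpha)$. A critical point on the boundary is still a constrained minimizer, so I would state the equivalence in one direction only: if $u^*$ is interior, then first-order optimality forces $\nabla\mathcal{F}(u^*)=0$. The converse is exactly where the interiority hypothesis of Proposition~\ref{prop:betacrit} comes in, and I would note this.

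For (d), I expect the main obstacle. The textbook PL derivation minimizes the quadratic lower bound over all of $\mathbb{R}^m$, but strong convexity is only known on $\hat{\mathcal{M}}^\alpha$. I would instead use only points of the set. Apply strong convexity between $u$ and $u^*$, both in $\hat{\mathcal{M}}^\alpha$:
\[
\mathcal{F}(u^*) \ge \mathcal{F}(u) + \langle \nabla\mathcal{F}(u), u^*-u\rangle + \tfrac{\mu}{2}\|u^*-u\|^2 .
\]
The right-hand side is a quadratic in $d=u^*-u$. Its value at this particular $d$ is at least its global minimum over $d\in\mathbb{R}^m$, which is $\mathcal{F}(u) - \|\nabla\mathcal{F}(u)\|^2/(2\mu)$. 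Rearranging gives $\|\nabla\mathcal{F}(u)\|^2 \ge 2\mu(\mathcal{F}(u)-\mathcal{F}(u^*))$. This uses only the segment $[u,u^*]\subset\hat{\mathcal{M}}^\alpha$, so it holds even when $u^*$ lies on the boundary.

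Finally, the remark that uniqueness holds only locally needs no proof beyond observing that the bounds of Assumption~\ref{ass:logconcave} are imposed only on $\hat{\mathcal{M}}^\alpha$.
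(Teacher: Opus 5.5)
Your proposal is correct and follows the paper's proof essentially step for step: the Hessian decomposition gives (a)--(b), compactness plus strong convexity gives uniqueness and the critical-point argument in (c), and the quadratic lower bound gives (d). In (d), evaluating at $v=u^*$ before passing to the unconstrained minimum is exactly what the paper's phrase ``minimizing over $v$ (unconstrained quadratic)'' implicitly does.

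Your caveat on (c) is also well taken. The paper's argument only shows that any critical point in $\hat{\mathcal{M}}^\alpha$ equals $u^*$, not that $u^*$ is interior. So the ``equivalently'' clause is only justified in one direction, and interiority needs a separate argument such as Proposition~\ref{prop:betacrit}.
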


\begin{proof}
For every $u \in \hat{\mathcal{M}}^\alpha$, the Hessian decomposes as $H_{\mathcal{F}}(u) = \nabla^2 c(u) + \beta\,\mathcal{I}_{\hat{p}}(u)$ by~\eqref{eq:ppc_hessian}.

\emph{Part~(a).} Since $c$ is convex, $\nabla^2 c(u) \succeq 0$. By Assumption~\ref{ass:logconcave}, $\mathcal{I}_{\hat{p}}(u) \succeq \kappa\, I$. Hence
\begin{equation}\label{eq:strong_cvx}
    H_{\mathcal{F}}(u) \;\succeq\; 0 + \beta\kappa\, I \;=\; \mu\, I.
\end{equation}

\emph{Part~(b).} By $L_c$-smoothness of $c$, $\nabla^2 c(u) \preceq L_c\, I$. By Assumption~\ref{ass:logconcave}, $\mathcal{I}_{\hat{p}}(u) \preceq \Lambda\, I$. Hence
$
    H_{\mathcal{F}}(u) \;\preceq\; L_c\, I + \beta\Lambda\, I \;=\; L\, I.
$

\emph{Part~(c).} $\hat{\mathcal{M}}^\alpha$ is closed (as a superlevel set of the continuous function $\hat{p}$) and bounded (since $-\ln\hat{p}$ is coercive for the Gaussian KDE, the superlevel set $\{\hat{p} \geq \alpha\}$ is contained in a ball of finite radius), hence compact, and convex by Assumption~\ref{ass:logconcave}. $\mathcal{F}$ is $\mu$-strongly convex on this set by part~(a); a continuous strongly convex function on a compact convex set attains its minimum at exactly one point, which we denote $u^*$. Strong convexity further implies that any $u' \in \hat{\mathcal{M}}^\alpha$ with $\nabla\mathcal{F}(u') = 0$ satisfies $\mathcal{F}(u) \geq \mathcal{F}(u') + \tfrac{\mu}{2}\|u-u'\|^2$ for all $u \in \hat{\mathcal{M}}^\alpha$, forcing $u' = u^*$.

\emph{Part~(d).} Fix $u \in \hat{\mathcal{M}}^\alpha$. By Taylor expansion and~\eqref{eq:strong_cvx}, for any $v \in \hat{\mathcal{M}}^\alpha$:
$\mathcal{F}(v) \geq \mathcal{F}(u) + \langle\nabla\mathcal{F}(u), v{-}u\rangle + \frac{\mu}{2}\|v{-}u\|^2$.
Minimizing the right-hand side over $v$ (unconstrained quadratic) yields the lower bound $\mathcal{F}(u^*) \geq \mathcal{F}(u) - \frac{1}{2\mu}\|\nabla\mathcal{F}(u)\|^2$. Rearranging gives the Polyak--\L{}ojasiewicz inequality.
\end{proof}

The landscape established by Lemma~\ref{lem:landscape} directly yields a geometric convergence guarantee. The strong convexity provided by the barrier curvature $\kappa$ forces PPC iterates toward the unique minimizer at a rate controlled by the condition number $\chi$.

\begin{theorem}[Score-Descent Contraction]
\label{thm:contraction} Set $\mu := \beta\kappa$, $L := L_c + \beta\Lambda$, $\chi := L/\mu$.
Let $u^* := \arg\min_{u\in\hat{\mathcal{M}}^\alpha}\mathcal{F}(u)$ be the restricted minimizer of Lemma~\ref{lem:landscape}(c). Suppose Assumption~\ref{ass:logconcave} holds, $u^*\in\mathrm{int}(\hat{\mathcal{M}}^\alpha)$, and the warm start satisfies
\[
    u_0\in\hat{\mathcal{M}}^\alpha, \qquad \|u_0 - u^*\|\leq\mathrm{dist}(u^*,\partial\hat{\mathcal{M}}^\alpha).
\]
Then the PPC iterates $\{u_k\}$ of~\eqref{eq:ppc_iterates} with step size $\eta\leq 1/L$ satisfy $u_k\in\hat{\mathcal{M}}^\alpha$ for all $k$, together with the following contractions for every $K\geq 0$:
\begin{enumerate}
\item[(i)] \emph{Function-value contraction.}
\begin{equation}\label{eq:func_contraction}
    \mathcal{F}(u_K) - \mathcal{F}(u^*) \;\leq\; (1-\eta\mu)^K \bigl[\mathcal{F}(u_0)-\mathcal{F}(u^*)\bigr].
\end{equation}
\item[(ii)] \emph{Iterate contraction.}
\begin{equation}\label{eq:contraction}
    \|u_K - u^*\| \;\leq\; \sqrt{\chi}\,(1-\eta\mu)^{K/2}\,\|u_0 - u^*\|.
\end{equation}
\end{enumerate}
\end{theorem}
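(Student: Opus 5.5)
The plan is to run the standard strongly-convex gradient-descent argument on the convex set $\hat{\mathcal{M}}^\alpha$, where Lemma~\ref{lem:landscape} supplies $\mu$-strong convexity, $L$-smoothness and the PL inequality. The one extra ingredient is an invariance argument that keeps every iterate inside $\hat{\mathcal{M}}^\alpha$, since those properties hold only there.

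\textbf{Step 1: $u^*$ is a critical point.} Because $u^*\in\mathrm{int}(\hat{\mathcal{M}}^\alpha)$ minimizes $\mathcal{F}$ on that set, it is a local unconstrained minimizer. Hence $\nabla\mathcal{F}(u^*)=0$.

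\textbf{Step 2: invariance.} Let $r:=\mathrm{dist}(u^*,\partial\hat{\mathcal{M}}^\alpha)$. The closed ball $B(u^*,r)$ lies in $\hat{\mathcal{M}}^\alpha$: it is connected, contains the interior point $u^*$, and meets $\partial\hat{\mathcal{M}}^\alpha$ at most on its boundary sphere. I argue by induction that $u_k\in B(u^*,r)$, the base case being the warm-start hypothesis. Suppose $u_k\in B(u^*,r)$. The segment $[u^*,u_k]$ lies in this convex ball, so the Hessian bounds of Lemma~\ref{lem:landscape}(a),(b) hold along it. Writing $\nabla\mathcal{F}(u_k)=A_k(u_k-u^*)$ with $A_k=\int_0^1 H_{\mathcal{F}}(u^*+s(u_k-u^*))\,ds$, we have $\mu I\preceq A_k\preceq LI$. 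Then
\[
u_{k+1}-u^*=(I-\eta A_k)(u_k-u^*),
\]
and since $\eta\le 1/L$ the eigenvalues of $I-\eta A_k$ lie in $[0,1-\eta\mu]$. Therefore $\|u_{k+1}-u^*\|\le(1-\eta\mu)\|u_k-u^*\|\le r$, which closes the induction.

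This step is the main obstacle, because the landscape bounds are local and must not be applied at points the iterate has not yet been shown to occupy. The integral-Hessian representation is chosen precisely because it only needs the bounds on $[u^*,u_k]$, which we already control. In fact this step already gives a stronger bound, $\|u_K-u^*\|\le(1-\eta\mu)^K\|u_0-u^*\|$, which implies (ii) directly because $\chi\ge1$. I will still derive (ii) from (i), to match the stated form.

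\textbf{Step 3: function-value contraction (i).} Both $u_k$ and $u_{k+1}$ lie in the convex set $B(u^*,r)$, so the descent lemma holds on $[u_k,u_{k+1}]$ by $L$-smoothness. With $\eta\le1/L$ this gives
\[
\mathcal{F}(u_{k+1})\le\mathcal{F}(u_k)-\tfrac{\eta}{2}\|\nabla\mathcal{F}(u_k)\|^2 .
\]
The PL inequality of Lemma~\ref{lem:landscape}(d) at $u_k$ then gives
\[
\mathcal{F}(u_{k+1})-\mathcal{F}(u^*)\le(1-\eta\mu)\bigl[\mathcal{F}(u_k)-\mathcal{F}(u^*)\bigr].
\]
Iterating this $K$ times yields \eqref{eq:func_contraction}.

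\textbf{Step 4: iterate contraction (ii).} Strong convexity on the segment $[u^*,u_K]$, together with $\nabla\mathcal{F}(u^*)=0$, gives
\[
\mathcal{F}(u_K)-\mathcal{F}(u^*)\ge\tfrac{\mu}{2}\|u_K-u^*\|^2 .
\]
Smoothness on $[u^*,u_0]$ gives
\[
\mathcal{F}(u_0)-\mathcal{F}(u^*)\le\tfrac{L}{2}\|u_0-u^*\|^2 .
\]
Combining these with (i) and taking square roots produces the factor $\sqrt{L/\mu}=\sqrt{\chi}$ and the rate $(1-\eta\mu)^{K/2}$, which is \eqref{eq:contraction}.
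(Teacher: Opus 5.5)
Your proposal is correct and follows essentially the same route as the paper. The paper also proceeds by an induction keeping the iterates in the ball $B(u^*,\|u_0-u^*\|)\subseteq\hat{\mathcal{M}}^\alpha$ via the contraction of $I-\eta\nabla^2\mathcal{F}$, then combines the descent lemma with the PL inequality for (i), and uses the quadratic sandwich $\tfrac{\mu}{2}\|u-u^*\|^2\le\mathcal{F}(u)-\mathcal{F}(u^*)\le\tfrac{L}{2}\|u-u^*\|^2$ for (ii); your integral-Hessian representation along $[u^*,u_k]$ and your explicit use of $\nabla\mathcal{F}(u^*)=0$ just make rigorous what the paper asserts more briefly.
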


\begin{proof}
\emph{Iterates remain in $\hat{\mathcal{M}}^\alpha$.} We first show that $u_k \in \hat{\mathcal{M}}^\alpha$ for every $k$ so that Lemma~\ref{lem:landscape} applies at each iterate. By $\mu$-strong convexity and $L$-smoothness of $\mathcal{F}$ on $\hat{\mathcal{M}}^\alpha$, for $\eta \leq 1/L$ the standard non-expansive bound $\|u_{k+1} - u^*\|^2 \leq (1-\eta\mu)^2\|u_k - u^*\|^2$ holds whenever $u_k \in \hat{\mathcal{M}}^\alpha$~\cite{boumal2023introduction} (by the contractivity of $I - \eta\nabla^2\mathcal{F}$ whose eigenvalues lie in $[1-\eta L, 1-\eta\mu]\subset[0,1-\eta\mu]$), i.e., iterate distance to $u^*$ is monotonically non-increasing. Since $u_0 \in \hat{\mathcal{M}}^\alpha$ and $\|u_0 - u^*\| \leq \mathrm{dist}(u^*, \partial\hat{\mathcal{M}}^\alpha)$, every $u_k$ lies in the closed Euclidean ball $B(u^*, \|u_0 - u^*\|) \subseteq \hat{\mathcal{M}}^\alpha$, using that $\hat{\mathcal{M}}^\alpha$ is convex (Assumption~\ref{ass:logconcave}) and contains this ball by construction. A simple induction then confirms $u_k \in \hat{\mathcal{M}}^\alpha$ for all $k \geq 0$.

By the $L$-smoothness of $\mathcal{F}$ (Lemma~\ref{lem:landscape}(b)),
\begin{equation}
\nonumber
    \mathcal{F}(u_{k+1}) \;\leq\; \mathcal{F}(u_k) + \langle\nabla\mathcal{F}(u_k),\, u_{k+1}{-}u_k\rangle + \frac{L}{2}\|u_{k+1}{-}u_k\|^2.
\end{equation}
Substituting the gradient step $u_{k+1} = u_k - \eta\,\nabla\mathcal{F}(u_k)$,
\begin{align}
\nonumber
    \mathcal{F}(u_{k+1}) &\leq \mathcal{F}(u_k) - \eta\|\nabla\mathcal{F}(u_k)\|^2 + \frac{\eta^2 L}{2}\|\nabla\mathcal{F}(u_k)\|^2 \\
\label{eq:descent2}
    &= \mathcal{F}(u_k) - \eta\Bigl(1 - \frac{\eta L}{2}\Bigr)\|\nabla\mathcal{F}(u_k)\|^2.
\end{align}
For $\eta \leq 1/L$, the factor $1 - \eta L/2 \geq 1/2 > 0$, so every gradient step strictly decreases $\mathcal{F}$.

The Polyak--\L{}ojasiewicz inequality from Lemma~\ref{lem:landscape}(d) gives
\begin{equation}\label{eq:pl}
    \|\nabla\mathcal{F}(u_k)\|^2 \;\geq\; 2\mu\bigl(\mathcal{F}(u_k) - \mathcal{F}(u^*)\bigr).
\end{equation}

Combining these, substituting~\eqref{eq:pl} into~\eqref{eq:descent2} with $1 - \eta L/2 \geq 1/2$,
\begin{align}
\nonumber
    \mathcal{F}(u_{k+1}) - \mathcal{F}(u^*) &\leq \bigl(\mathcal{F}(u_k) - \mathcal{F}(u^*)\bigr) - \eta\mu\bigl(\mathcal{F}(u_k) - \mathcal{F}(u^*)\bigr) \\
\label{eq:func_rate}
    &= (1 - \eta\mu)\bigl(\mathcal{F}(u_k) - \mathcal{F}(u^*)\bigr).
\end{align}
Iterating~\eqref{eq:func_rate} over $K$ steps yields part~(i).

For the iterate distance, by $\mu$-strong convexity and $L$-smoothness,
\begin{equation}\label{eq:sandwich}
    \tfrac{\mu}{2}\|u-u^*\|^2 \;\leq\; \mathcal{F}(u) - \mathcal{F}(u^*) \;\leq\; \tfrac{L}{2}\|u-u^*\|^2.
\end{equation}
Combining~\eqref{eq:sandwich} with part~(i) gives
\[
    \tfrac{\mu}{2}\|u_K-u^*\|^2 \;\leq\; (1-\eta\mu)^K \cdot \tfrac{L}{2}\|u_0-u^*\|^2.
\]
Dividing by $\mu/2$ and taking square roots yields part~(ii) with $\chi = L/\mu$.
\end{proof}

Theorem~\ref{thm:contraction} ensures fast convergence of PPC iterates (warm-started within $\hat{\mathcal{M}}^\alpha$) to the restricted minimizer $u^*$ of Lemma~\ref{lem:landscape}(c), but takes the interior condition $u^* \in \mathrm{int}(\hat{\mathcal{M}}^\alpha)$ as a hypothesis. Whether this condition actually holds is a separate question, since $u^*$ may sit safely interior to $\hat{\mathcal{M}}^\alpha$ or merely on its boundary. Because the KDE $\hat{p}$ is a sum of Gaussians, $\mathcal{F}$ is finite on all of $\mathcal{U}$ and may possess additional critical points outside $\hat{\mathcal{M}}^\alpha$; when $\beta$ is small, the cost gradient dominates the score barrier at the boundary, so the minimum of $\mathcal{F}$ over $\hat{\mathcal{M}}^\alpha$ is pinned to $\partial\hat{\mathcal{M}}^\alpha$, and unconstrained gradient descent on $\mathcal{F}$ drifts out of the high-density region. Proposition~\ref{prop:betacrit} below supplies a sufficient condition on the stiffness, $\beta > \beta^*_{\mathrm{curv}}$, under which $u^*$ is strictly \emph{interior} to $\hat{\mathcal{M}}^\alpha$, or equivalently an unconstrained critical point of $\mathcal{F}$, and hence by Lemma~\ref{lem:landscape}(c) the unique such point within $\hat{\mathcal{M}}^\alpha$. This closes the interior hypothesis of Theorem~\ref{thm:contraction}. Proposition~\ref{prop:betacrit} relies only on Lemma~\ref{lem:landscape} and Assumption~\ref{ass:logconcave}, not on Theorem~\ref{thm:contraction}, so there is no circularity.

Under Assumption~\ref{ass:logconcave}, the negative log-density $g(u)=-\ln\hat{p}(u)$ is $\kappa$-strongly convex on $\hat{\mathcal{M}}^\alpha$, so it has a unique minimizer $\bar{u} := \arg\max_{u\in\hat{\mathcal{M}}^\alpha}\hat{p}(u)$, the point of highest learned density. Define the \emph{level-set radius} $r_\alpha = \inf_{u \in \partial \hat{\mathcal{M}}^\alpha}\|u - \bar{u}\| > 0$.

\begin{proposition}[Critical Stiffness from Manifold Curvature]
\label{prop:betacrit}
Under Assumption~\ref{ass:logconcave}, let $u^* := \arg\min_{u\in\hat{\mathcal{M}}^\alpha}\mathcal{F}(u)$ denote the restricted minimizer from Lemma~\ref{lem:landscape}(c). Whenever the stiffness satisfies
\begin{equation}\label{eq:beta_crit}
    \beta \;>\; \beta^*_{\mathrm{curv}} \;:=\; \frac{G_c}{\kappa \cdot r_\alpha},
\end{equation}
the following hold (the threshold $\beta = \beta^*_{\mathrm{curv}}$ is a limiting case in which the interior margin degenerates to zero):
\begin{enumerate}
    \item[(i)] \emph{Interior placement.} $u^* \in \mathrm{int}(\hat{\mathcal{M}}^\alpha)$, with displacement and margin bounds
    \[
        \|u^*-\bar{u}\| \;\leq\; \frac{G_c}{\beta\kappa},
        \qquad
        \mathrm{dist}(u^*, \partial\hat{\mathcal{M}}^\alpha) \;\geq\; r_\alpha - \frac{G_c}{\beta\kappa} \;>\; 0.
    \]
    \item[(ii)] \emph{Unconstrained stationarity.} $\nabla\mathcal{F}(u^*) = 0$, so $u^*$ is also an unconstrained critical point of $\mathcal{F}$, and it is the unique critical point of $\mathcal{F}$ inside $\hat{\mathcal{M}}^\alpha$.
\end{enumerate}
\end{proposition}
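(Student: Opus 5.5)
The plan is to compare the restricted minimizer $u^*$ of $\mathcal{F}$ with the density mode $\bar{u}$ using the first-order optimality conditions of two strongly convex problems on the convex compact set $\hat{\mathcal{M}}^\alpha$. Write $g=-\ln\hat{p}$, so $\mathcal{F}=c+\beta g$. By Assumption~\ref{ass:logconcave}, $g$ is $\kappa$-strongly convex on $\hat{\mathcal{M}}^\alpha$. Since $r_\alpha>0$, the mode $\bar{u}$ is interior, so $\nabla g(\bar u)=0$ (the unconstrained maximizer of $\hat p$ within the basin).

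\textbf{Step 1: displacement bound.} Variational optimality of $u^*$ on the convex set gives $\langle\nabla c(u^*)+\beta\nabla g(u^*),\,v-u^*\rangle\ge0$ for all $v\in\hat{\mathcal{M}}^\alpha$. I will take $v=\bar u$. Strong monotonicity of $\nabla g$ on the convex set gives
\[
\langle\nabla g(u^*)-\nabla g(\bar u),\,u^*-\bar u\rangle\ge\kappa\|u^*-\bar u\|^2 ,
\]
and $\nabla g(\bar u)=0$. Combining,
\[
\beta\kappa\|u^*-\bar u\|^2\le\langle\nabla c(u^*),\,\bar u-u^*\rangle\le G_c\|u^*-\bar u\| ,
\]
hence $\|u^*-\bar u\|\le G_c/(\beta\kappa)$.

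\textbf{Step 2: margin.} For any $w\in\partial\hat{\mathcal{M}}^\alpha$, the triangle inequality gives
\[
\|w-u^*\|\ge\|w-\bar u\|-\|u^*-\bar u\|\ge r_\alpha-G_c/(\beta\kappa).
\]
Taking the infimum over $w$ gives $\mathrm{dist}(u^*,\partial\hat{\mathcal{M}}^\alpha)\ge r_\alpha-G_c/(\beta\kappa)$. This is $>0$ exactly when $\beta>\beta^*_{\mathrm{curv}}$. So $u^*\notin\partial\hat{\mathcal{M}}^\alpha$, and since $u^*\in\hat{\mathcal{M}}^\alpha$, it lies in the interior, which proves (i).

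\textbf{Step 3: stationarity.} Because $u^*$ is interior, the variational inequality holds for $v=u^*\pm\epsilon d$ for every direction $d$ and small $\epsilon$. This forces $\nabla\mathcal{F}(u^*)=0$. Uniqueness among critical points inside $\hat{\mathcal{M}}^\alpha$ then follows directly from Lemma~\ref{lem:landscape}(c), which proves (ii).

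\textbf{Main obstacle.} The delicate point is Step~1's use of strong monotonicity of $\nabla g$ between $u^*$ and $\bar u$. This needs the segment joining them to stay in $\hat{\mathcal{M}}^\alpha$, which convexity guarantees. It also needs $\nabla g(\bar u)=0$, which I justify from $r_\alpha>0$: the mode is at positive distance from the boundary, so it is an interior minimizer of the smooth $g$. If one prefers to avoid that, one can instead use the variational inequality for $\bar u$ itself, $\langle\nabla g(\bar u),u^*-\bar u\rangle\ge0$, which yields the same bound without assuming interiority.
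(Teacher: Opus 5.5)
Your proof is correct, but it runs in the opposite order from the paper's. The paper argues by contradiction. It supposes $u^*\in\partial\hat{\mathcal{M}}^\alpha$, uses the outward normal $n=\nabla g(u^*)/\|\nabla g(u^*)\|$ with the KKT sign condition $\langle\nabla\mathcal{F}(u^*),n\rangle\le 0$, and combines this with the pointwise score bound $\|\nabla g(u)\|\ge\kappa\|u-\bar u\|$ to force $\beta\kappa r_\alpha\le G_c$. Only after interiority is established does it get the displacement bound, from stationarity $\nabla c(u^*)=\beta\hat{s}(u^*)$ and the same score bound, and then the margin by the triangle inequality.

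You instead test the variational inequality of $u^*$ at the single point $\bar u$ and use strong monotonicity of $\nabla g$. This gives $\|u^*-\bar u\|\le G_c/(\beta\kappa)$ for the constrained minimizer directly, before knowing where it sits, and interiority then falls out of the triangle inequality. Your route buys two things. First, it never has to identify the normal cone of $\hat{\mathcal{M}}^\alpha$ at a boundary point with the ray generated by $\nabla g(u^*)$. The paper takes that step for granted; it tacitly needs $\nabla g(u^*)\neq 0$, which holds here since $\|\nabla g(u^*)\|\ge\kappa r_\alpha$ but is not stated. Second, it shows the displacement bound holds for every $\beta>0$, not only above the threshold. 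Your fallback using $\langle\nabla g(\bar u),u^*-\bar u\rangle\ge 0$ even removes the need for interiority of $\bar u$ in that step.

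The paper's route, in exchange, makes the meaning of $\beta^*_{\mathrm{curv}}$ transparent as a force balance on $\partial\hat{\mathcal{M}}^\alpha$: the inward score pull of at least $\beta\kappa r_\alpha$ against the outward cost pull of at most $G_c$. It also isolates the reusable score lower bound $\|\hat{s}(u)\|\ge\kappa\|u-\bar u\|$.
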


\begin{proof}
Let $g(u) := -\ln \hat{p}(u)$. By Assumption~\ref{ass:logconcave}, $g$ is $\kappa$-strongly convex on $\hat{\mathcal{M}}^\alpha$ with $\bar{u}$ its unique minimizer on the set. Since $r_\alpha > 0$ by its definition, $\bar{u} \in \mathrm{int}(\hat{\mathcal{M}}^\alpha)$, so the unconstrained stationarity $\nabla g(\bar{u}) = 0$ holds. Strong convexity gives the monotonicity inequality $\langle \nabla g(u) - \nabla g(\bar{u}),\, u-\bar{u}\rangle \geq \kappa\|u-\bar{u}\|^2$; combining with $\nabla g(\bar{u}) = 0$ and applying Cauchy--Schwarz to the left-hand side yields
\begin{equation}\label{eq:score_lower}
    \|\nabla g(u)\| \;\geq\; \kappa\|u-\bar{u}\|
    \quad \forall\, u \in \hat{\mathcal{M}}^\alpha,
\end{equation}
and hence $\|\hat{s}(u)\| = \|\nabla g(u)\| \geq \kappa\|u-\bar{u}\|$ on $\hat{\mathcal{M}}^\alpha$.

\emph{Part (i): interior placement.} By Lemma~\ref{lem:landscape}(c), $u^* = \arg\min_{u\in\hat{\mathcal{M}}^\alpha}\mathcal{F}(u)$ exists and is unique. We rule out the boundary case by contradiction.

Suppose $u^* \in \partial\hat{\mathcal{M}}^\alpha$, so $\|u^*-\bar{u}\| \geq r_\alpha$. Since $\hat{\mathcal{M}}^\alpha = \{u : g(u) \leq -\ln\alpha\}$ and $g$ is smooth and convex, the outward unit normal to $\hat{\mathcal{M}}^\alpha$ at $u^*$ is $n = \nabla g(u^*)/\|\nabla g(u^*)\|$. First-order optimality of the constrained minimizer $u^*$ requires that no feasible direction improves $\mathcal{F}$; equivalently, $\langle \nabla\mathcal{F}(u^*), n\rangle \leq 0$. Expanding $\nabla\mathcal{F} = \nabla c + \beta\nabla g$,
\[
    \langle \nabla c(u^*),\, n\rangle + \beta \|\nabla g(u^*)\| \;\leq\; 0.
\]
Using $\langle \nabla c(u^*), n\rangle \geq -\|\nabla c(u^*)\| \geq -G_c$ and $\|\nabla g(u^*)\| \geq \kappa\|u^*-\bar{u}\| \geq \kappa r_\alpha$ from~\eqref{eq:score_lower}, this forces $\beta \kappa r_\alpha \leq G_c$, i.e., $\beta \leq \beta^*_{\mathrm{curv}}$, contradicting the hypothesis $\beta > \beta^*_{\mathrm{curv}}$.

Therefore $u^* \in \mathrm{int}(\hat{\mathcal{M}}^\alpha)$ for $\beta > \beta^*_{\mathrm{curv}}$, and unconstrained stationarity $\nabla\mathcal{F}(u^*) = 0$ gives $\nabla c(u^*) = \beta\,\hat{s}(u^*)$. Taking norms and using~\eqref{eq:score_lower},
\begin{equation}
    \|u^*-\bar{u}\| \;\leq\; \frac{\|\hat{s}(u^*)\|}{\kappa}
    \;=\; \frac{\|\nabla c(u^*)\|}{\beta\kappa}
    \;\leq\; \frac{G_c}{\beta\kappa}.
\end{equation}
Because $\bar{u}\in\mathrm{int}(\hat{\mathcal{M}}^\alpha)$ with nearest boundary point at distance $r_\alpha$, the triangle inequality gives $\mathrm{dist}(u^*, \partial\hat{\mathcal{M}}^\alpha) \geq r_\alpha - G_c/(\beta\kappa) > 0$ for $\beta > \beta^*_{\mathrm{curv}}$.

\emph{Part (ii): unconstrained stationarity and uniqueness inside $\hat{\mathcal{M}}^\alpha$.} Interior placement of $u^*$ makes the constraint inactive, so $\nabla\mathcal{F}(u^*) = 0$ holds in the unconstrained sense. Any other point $v \in \hat{\mathcal{M}}^\alpha$ with $\nabla\mathcal{F}(v) = 0$ would, by Lemma~\ref{lem:landscape}(c), satisfy $v = u^*$. Thus $u^*$ is the unique critical point of $\mathcal{F}$ within $\hat{\mathcal{M}}^\alpha$, although $\mathcal{F}$ may possess additional critical points in $\mathcal{U}\setminus\hat{\mathcal{M}}^\alpha$ where log-concavity need not hold.
\end{proof}

The key mechanism is that strong convexity anchors $u^*$ near $\bar{u}$, which is known to be interior to $\hat{\mathcal{M}}^\alpha$ with margin $r_\alpha$. Above the threshold $\beta^*_{\mathrm{curv}}$, the cost-driven displacement $G_c/(\beta\kappa)$ stays below $r_\alpha$, so $u^*$ cannot reach $\partial\hat{\mathcal{M}}^\alpha$; equivalently, the inward pull of the score barrier dominates the outward pull of the cost gradient everywhere on $\partial\hat{\mathcal{M}}^\alpha$, ruling out boundary optima. The critical stiffness $\beta^*_{\mathrm{curv}}$ depends on the barrier curvature $\kappa$ and the level-set radius $r_\alpha$, both of which are properties of the learned density, rather than on a Lipschitz bound on the unknown dynamics. Both $\kappa$ and $r_\alpha$ can be estimated online from the KDE (see~\Cref{sec:our_method}).

\subsection{Dynamic Regret under Manifold Drift}
\label{sec:dynamic_regret}

When the manifold is time-varying, $\mathcal{F}_t$ changes at every step. We use the \emph{dynamic regret} as our performance metric, which is defined as the cumulative suboptimality relative to the moving restricted minimizer $u^*_t := \arg\min_{u\in\hat{\mathcal{M}}_t^\alpha}\mathcal{F}_t(u)$. Under $\beta > \beta^*_{\mathrm{curv}}$, Proposition~\ref{prop:betacrit} further identifies $u^*_t$ with the unique unconstrained critical point of $\mathcal{F}_t$ in $\hat{\mathcal{M}}_t^\alpha$, so stationarity $\nabla\mathcal{F}_t(u^*_t) = 0$ is available for the analysis. We first bound the sensitivity of $u^*_t$ to score changes, then derive the regret bound.

\begin{lemma}[Comparator Sensitivity]
\label{lem:comparator}
Under the conditions of Lemma~\ref{lem:landscape} and $\beta > \beta^*_{\mathrm{curv}}$ (so each $u^*_t \in \mathrm{int}(\hat{\mathcal{M}}_t^\alpha)$ by Proposition~\ref{prop:betacrit}), suppose $u^*_{t-1}\in\hat{\mathcal{M}}_t^\alpha$ for every $t$. Then the comparator path length $\mathcal{C}_T := \sum_{t=1}^T \|u^*_t - u^*_{t-1}\|$ satisfies
\begin{equation}\label{eq:comparator}
    \mathcal{C}_T \;\leq\; \frac{1}{\kappa}\sum_{t=1}^T \sup_u \|\hat{s}_t(u) - \hat{s}_{t-1}(u)\|.
\end{equation}
\end{lemma}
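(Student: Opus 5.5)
The plan is to prove a per-step bound $\|u^*_t - u^*_{t-1}\| \leq \kappa^{-1}\sup_u\|\hat{s}_t(u)-\hat{s}_{t-1}(u)\|$ and then sum over $t=1,\dots,T$. The proof uses only two facts. First, stationarity of both comparators: since $\beta>\beta^*_{\mathrm{curv}}$, Proposition~\ref{prop:betacrit}(ii) gives $\nabla\mathcal{F}_t(u^*_t)=0$ and $\nabla\mathcal{F}_{t-1}(u^*_{t-1})=0$. Second, strong monotonicity of $\nabla\mathcal{F}_t$ on the convex set $\hat{\mathcal{M}}_t^\alpha$, which is Lemma~\ref{lem:landscape}(a).

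\emph{Step 1 (gradient mismatch).} I evaluate the new gradient at the old comparator. Because $\nabla\mathcal{F}_{t-1}(u^*_{t-1})=0$ and the cost $c$ is the same at both times, we have $\nabla\mathcal{F}_t(u^*_{t-1}) = \nabla\mathcal{F}_t(u^*_{t-1}) - \nabla\mathcal{F}_{t-1}(u^*_{t-1}) = -\beta\bigl(\hat{s}_t(u^*_{t-1}) - \hat{s}_{t-1}(u^*_{t-1})\bigr)$. If $\beta$ is time-varying, I will treat $\beta$ as fixed here, as in the section's convention. Otherwise an extra term appears, which I would fold into the assumption.

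\emph{Step 2 (strong monotonicity).} Both $u^*_t$ and $u^*_{t-1}$ lie in $\hat{\mathcal{M}}_t^\alpha$: the first by definition, the second by the hypothesis $u^*_{t-1}\in\hat{\mathcal{M}}_t^\alpha$. By convexity the segment joining them also lies in this set. There the Hessian satisfies $H_{\mathcal{F}_t}\succeq\beta\kappa I$, so integrating the Hessian along the segment gives
\[
\langle \nabla\mathcal{F}_t(u^*_{t-1})-\nabla\mathcal{F}_t(u^*_t),\,u^*_{t-1}-u^*_t\rangle \geq \beta\kappa\|u^*_{t-1}-u^*_t\|^2 .
\]
Now substitute $\nabla\mathcal{F}_t(u^*_t)=0$ and apply Cauchy--Schwarz. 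This yields $\beta\kappa\|u^*_t-u^*_{t-1}\| \leq \|\nabla\mathcal{F}_t(u^*_{t-1})\|$. Combining with Step 1 bounds the right-hand side by $\beta\sup_u\|\hat{s}_t(u)-\hat{s}_{t-1}(u)\|$, and the factor $\beta$ cancels. Note that a single pointwise score difference would suffice; the supremum only makes the bound uniform.

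\emph{Step 3.} Sum over $t$ to obtain~\eqref{eq:comparator}. Here $\kappa$ is interpreted as a uniform lower bound on the barrier curvature over all $t$, or else $\kappa_t$ appears inside the sum.

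The main obstacle is the geometric one in Step 2: strong convexity holds only on $\hat{\mathcal{M}}_t^\alpha$, while $u^*_{t-1}$ is a priori defined relative to $\hat{\mathcal{M}}_{t-1}^\alpha$. The hypothesis $u^*_{t-1}\in\hat{\mathcal{M}}_t^\alpha$, together with convexity from Assumption~\ref{ass:logconcave}, is exactly what legitimizes integrating the Hessian along the segment. I will state this use explicitly rather than invoke the global monotonicity inequality, since $\mathcal{F}_t$ need not be convex outside the set.
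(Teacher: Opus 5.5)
Your proposal is correct and follows the paper's proof essentially step for step. You evaluate $\nabla\mathcal{F}_t$ at the old comparator using $\nabla\mathcal{F}_{t-1}(u^*_{t-1})=0$ to get the bound $\beta\sup_u\|\hat{s}_t(u)-\hat{s}_{t-1}(u)\|$, then use $\beta\kappa$-strong monotonicity together with $\nabla\mathcal{F}_t(u^*_t)=0$, cancel $\beta$, and sum over $t$. Your extra care makes explicit several points the paper leaves implicit: that the segment between $u^*_{t-1}$ and $u^*_t$ lies in the convex set $\hat{\mathcal{M}}_t^\alpha$ where the Hessian bound holds, that $\beta$ is held constant, and that $\kappa$ must be uniform in $t$.
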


\begin{proof}
Interior stationarity $\nabla\mathcal{F}_t(u^*_t)=0$ holds at each step by Proposition~\ref{prop:betacrit}. From $\nabla\mathcal{F}_{t-1}(u^*_{t-1})=0$, i.e., $\nabla c(u^*_{t-1}) = \beta\,\hat{s}_{t-1}(u^*_{t-1})$, evaluate the gradient of $\mathcal{F}_t$ at the old minimizer:
\begin{align}
\nonumber
    \|\nabla\mathcal{F}_t(u^*_{t-1})\| &= \|\nabla c(u^*_{t-1}) - \beta\,\hat{s}_t(u^*_{t-1})\| \\
    \nonumber
    &= \beta\|\hat{s}_{t-1}(u^*_{t-1}) - \hat{s}_t(u^*_{t-1})\| \\
\label{eq:grad_shift}
    &\leq \beta\sup_u\|\hat{s}_t(u) - \hat{s}_{t-1}(u)\|.
\end{align}
By $\mu$-strong convexity of $\mathcal{F}_t$ (Lemma~\ref{lem:landscape}(a)) together with $\nabla\mathcal{F}_t(u^*_t)=0$ (interior stationarity at step $t$),
\begin{align}\label{eq:sc_bound}
    \|\nabla\mathcal{F}_t(u^*_{t-1})\|
    &= \|\nabla\mathcal{F}_t(u^*_{t-1})-\nabla\mathcal{F}_t(u^*_t)\| \nonumber \\
    &\geq \mu\,\|u^*_{t-1} - u^*_t\|
    \;=\; \beta\kappa\,\|u^*_{t-1} - u^*_t\|.
\end{align}
Combining \eqref{eq:grad_shift} and \eqref{eq:sc_bound} yields $\|u^*_t - u^*_{t-1}\| \leq \sup_u\|\hat{s}_t - \hat{s}_{t-1}\|/\kappa$. Summing over $t$ gives~\eqref{eq:comparator}.
\end{proof}

The hypothesis $u^*_{t-1}\in\hat{\mathcal{M}}_t^\alpha$ is mild. By Proposition~\ref{prop:betacrit}, $u^*_{t-1}$ sits interior to $\hat{\mathcal{M}}_{t-1}^\alpha$ with margin $r_\alpha - G_c/(\beta\kappa) > 0$, so it remains in $\hat{\mathcal{M}}_t^\alpha$ whenever the feasible set drifts by less than that margin, the natural regime in which dynamic regret is informative.

Theorem~\ref{thm:contraction} treats a fixed manifold. In practice, the feasibility set drifts over time. The following result quantifies how the cumulative suboptimality depends on both the contraction rate and the speed of manifold drift, measured through the \emph{score variation} $\mathcal{V}_T := \sum_{t=1}^T \sup_u \|\hat{s}_t(u) - \hat{s}_{t-1}(u)\|^2$.

\begin{theorem}[Dynamic Regret under Manifold Drift]
\label{thm:dynamic_regret}
Let $\{u_t\}$ be the output of PPC with $K$ inner gradient steps (step size $\eta=1/L$) warm-started from $u_{t-1}$, and let
$u_t^* := \arg\min_{u\in\hat{\mathcal{M}}_t^\alpha}\mathcal{F}_t(u)$
be the restricted minimizer of $\mathcal{F}_t(u) := c(u) - \beta\ln\hat{p}_t(u)$.
Assume, for every $t$:
\begin{enumerate}
    \item[(i)] Assumptions~\ref{ass:compactness}--\ref{ass:logconcave} hold for $\hat{p}_t$;
    \item[(ii)] $\beta > \beta^*_{\mathrm{curv}}$;
    \item[(iii)] (Warm-start margin) Let $R^* := r_\alpha - G_c/(\beta\kappa) > 0$ denote the interior margin from Proposition~\ref{prop:betacrit}. The initial distance and consecutive comparator drift are bounded by half this margin, i.e., $\|u_0 - u^*_0\| \leq R^*/2$ and $\|u^*_t - u^*_{t-1}\| \leq R^*/2$ for every $t$;
    \item[(iv)] $K$ is large enough that $\rho_K := \sqrt{\chi}(1-1/\chi)^{K/2} \leq 1/2$ (suffices to take $K \geq \chi\ln(4\chi)$, where $\chi := L/\mu$).
\end{enumerate}
Then the dynamic regret $\mathrm{DReg}_T := \sum_{t=1}^T [\mathcal{F}_t(u_t) - \mathcal{F}_t(u^*_t)]$ satisfies
\begin{equation}\label{eq:dreg_def}
    \mathrm{DReg}_T
    \;\leq\; \frac{L\rho_K^2\, D_0^2}{1-\rho_K^2}
    \;+\; \frac{L\rho_K^2\, \mathcal{V}_T}{(1-\rho_K)^2\,\kappa^2},
\end{equation}
where $D_0 := \|u_0 - u^*_0\|$ and
$\mathcal{V}_T := \sum_{t=1}^T \sup_u\|\hat{s}_t(u) - \hat{s}_{t-1}(u)\|^2$.
\end{theorem}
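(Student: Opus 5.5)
The plan is to reduce the regret to a scalar tracking recursion for $D_t := \|u_t - u^*_t\|$. First we get a one-step contraction from Theorem~\ref{thm:contraction}. Next we bound the comparator drift by Lemma~\ref{lem:comparator}. Then we unroll the recursion, and finally we convert distances into function-value gaps using the upper half of the sandwich~\eqref{eq:sandwich}, i.e.\ $\mathcal{F}_t(u_t)-\mathcal{F}_t(u^*_t)\le \tfrac{L}{2}D_t^2$. Write $\delta_t := \|u^*_t - u^*_{t-1}\|$.

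\emph{Step 1: per-step contraction and invariance.} At step $t$ the warm start is $u_{t-1}$. By the triangle inequality, $\|u_{t-1}-u^*_t\| \le D_{t-1} + \delta_t$.

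We show by induction that $D_{t} \le R^*/2$ for all $t$. The base case is $D_0 \le R^*/2$, which is assumption (iii). For the inductive step, suppose $D_{t-1}\le R^*/2$. Then $\|u_{t-1}-u^*_t\| \le R^*/2 + R^*/2 = R^*$. By Proposition~\ref{prop:betacrit}(i), $R^* \le \mathrm{dist}(u^*_t,\partial\hat{\mathcal{M}}_t^\alpha)$. Convexity of $\hat{\mathcal{M}}_t^\alpha$ then puts the closed ball $B(u^*_t,R^*)$, and hence $u_{t-1}$, inside $\hat{\mathcal{M}}_t^\alpha$. So the warm-start hypotheses of Theorem~\ref{thm:contraction} hold with $\eta = 1/L$ (so $\eta\mu = 1/\chi$), and part (ii) gives
\[
D_t \le \rho_K\,(D_{t-1}+\delta_t) \le \tfrac12 R^* ,
\]
using $\rho_K \le 1/2$, which closes the induction.

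The same ball argument shows $u^*_{t-1}\in\hat{\mathcal{M}}_t^\alpha$, since $\delta_t\le R^*/2$. This is exactly the hypothesis needed for the per-step bound inside Lemma~\ref{lem:comparator}, which gives $\delta_t \le \sup_u\|\hat{s}_t(u)-\hat{s}_{t-1}(u)\|/\kappa$.

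Condition (iv) is a routine check: $(1-1/\chi)^{K/2}\le e^{-K/(2\chi)} \le (4\chi)^{-1/2}$ for $K\ge \chi\ln(4\chi)$, hence $\rho_K\le 1/2$. I expect this invariance induction to be the main obstacle. The delicate point is that the margin $R^*$ must be applied to the \emph{current} set $\hat{\mathcal{M}}^\alpha_t$, with $\kappa$ and $r_\alpha$ understood as uniform over $t$ under assumption (i).

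\emph{Step 2: unrolling and summing.} Iterating $D_t \le \rho_K D_{t-1} + \rho_K\delta_t$ gives
\[
D_t \le \rho_K^t D_0 + \sum_{s=1}^t \rho_K^{t-s+1}\delta_s .
\]
Apply $(a+b)^2\le 2a^2+2b^2$. The factor $2$ cancels the $\tfrac12$ in $\tfrac{L}{2}D_t^2$, so
\[
\mathrm{DReg}_T \le L\sum_t \rho_K^{2t}D_0^2 + L\sum_t\Bigl(\sum_{s\le t}\rho_K^{t-s+1}\delta_s\Bigr)^2 .
\]
The first sum is at most $\rho_K^2D_0^2/(1-\rho_K^2)$. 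For the second, use Cauchy--Schwarz with weights $\rho_K^{t-s+1}$:
\[
\Bigl(\sum_{s\le t}\rho_K^{t-s+1}\delta_s\Bigr)^2 \le \frac{\rho_K}{1-\rho_K}\sum_{s\le t}\rho_K^{t-s+1}\delta_s^2 .
\]
Exchanging the order of summation over $t\ge s$ contributes another factor $\rho_K/(1-\rho_K)$. The second term is therefore at most $L\rho_K^2\sum_s\delta_s^2/(1-\rho_K)^2$.

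\emph{Step 3: conclusion.} Inserting $\delta_s^2 \le \sup_u\|\hat{s}_s-\hat{s}_{s-1}\|^2/\kappa^2$ turns $\sum_s\delta_s^2$ into $\mathcal{V}_T/\kappa^2$. This yields exactly~\eqref{eq:dreg_def}.
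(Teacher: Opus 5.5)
Your proposal is correct and follows the paper's proof essentially step for step. Both arguments use the induction keeping $\|u_{t-1}-u^*_t\|\le R^*$ so that Theorem~\ref{thm:contraction}(ii) gives $D_t\le\rho_K(D_{t-1}+\delta_t)$, then unroll the recursion, apply the weighted Cauchy--Schwarz bound with the summation exchange, convert via $\tfrac{L}{2}D_t^2$, and finish with Lemma~\ref{lem:comparator}. You also check explicitly that $D_t\le R^*/2$ propagates through $\rho_K\le 1/2$ and that $u^*_{t-1}\in\hat{\mathcal{M}}_t^\alpha$, the hypothesis Lemma~\ref{lem:comparator} needs; the paper leaves both implicit.
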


Assumptions (i), (ii), and (iv) are standard, namely (i) and (ii) are the per-step regularity and strict-stiffness conditions of Theorem~\ref{thm:contraction} and Proposition~\ref{prop:betacrit}, while (iv) is the usual choice of $K$ making each inner PPC pass at least a factor-of-two contraction. The warm-start margin (iii) is the natural slow-drift condition needed to keep the warm start inside the current attraction ball. Splitting the interior margin $R^*$ into a half-budget for the initial tracking error and a half-budget for the consecutive comparator shift is exactly what the induction at the start of the proof requires to conclude $\|u_{t-1}-u^*_t\|\leq R^*\leq R_t$ at every step, so that Theorem~\ref{thm:contraction}(ii) applies; by Lemma~\ref{lem:comparator}, the drift bound translates to $\sup_u\|\hat{s}_t-\hat{s}_{t-1}\|\leq\kappa R^*/2$, which is the regime in which dynamic regret is informative.

The bound in~\eqref{eq:dreg_def} consists of two terms. The first, $L\rho_K^2 D_0^2/(1-\rho_K^2)$, is a transient that captures the cost of starting away from the equilibrium and vanishes geometrically with the number of inner steps $K$ through $\rho_K$. The second, $L\rho_K^2\mathcal{V}_T/[(1-\rho_K)^2\kappa^2]$, is the steady-state cost of tracking a moving manifold, proportional to the cumulative score variation $\mathcal{V}_T$ and inversely proportional to $\kappa^2$. In classical online control, the analogous drift sensitivity depends on a Lipschitz constant of the known dynamics; here $\kappa$ plays that role despite $f_t$ being unknown, because all manifold information is mediated through the learned density.

\begin{proof}
Define the tracking error $e_t := \|u_t - u^*_t\|$ and comparator shift $d_t := \|u^*_t - u^*_{t-1}\|$. By induction on $t$, hypotheses~(iii)--(iv) keep $\|u_{t-1} - u^*_t\| \leq e_{t-1} + d_t \leq R^*/2 + R^*/2 = R^* \leq R_t$ (using $R_t := \mathrm{dist}(u^*_t,\partial\hat{\mathcal{M}}_t^\alpha) \geq R^*$ from Proposition~\ref{prop:betacrit}), so $u_{t-1}$ lies in the attraction ball of Theorem~\ref{thm:contraction} around $u^*_t$ at every step. Applying Theorem~\ref{thm:contraction}(ii) with $K$ gradient steps on $\mathcal{F}_t$ starting from $u_{t-1}$ gives
\begin{equation}\label{eq:ek}
    e_t \;\leq\; \rho_K\,\|u_{t-1} - u^*_t\| \;\leq\; \rho_K\,(e_{t-1} + d_t),
\end{equation}
where the second inequality is the triangle inequality.

Iterating~\eqref{eq:ek} leads to (using summation index $\sigma$, with $\tau$ reserved for the manifold reach of Theorem~\ref{thm:learning})
\begin{equation}\label{eq:unroll}
    e_t \;\leq\; \rho_K^t\, D_0 \;+\; \sum_{\sigma=1}^t \rho_K^{t-\sigma+1}\, d_\sigma.
\end{equation}

By the Cauchy--Schwarz inequality applied to the weighted sum (with weights $\rho_K^{t-\sigma+1}$ and $\sum_{\sigma=1}^t\rho_K^{t-\sigma+1}\leq\rho_K/(1-\rho_K)$),
\begin{equation}
    \Bigl(\sum_{\sigma=1}^t \rho_K^{t-\sigma+1} d_\sigma\Bigr)^2 \;\leq\; \frac{\rho_K}{1-\rho_K}\sum_{\sigma=1}^t \rho_K^{t-\sigma+1} d_\sigma^2.
\end{equation}
Summing over $t$ and exchanging the order of summation,
\begin{equation}\label{eq:sum_sq}
    \sum_{t=1}^T e_t^2 \;\leq\; \frac{2\rho_K^2 D_0^2}{1-\rho_K^2} \;+\; \frac{2\rho_K^2}{(1-\rho_K)^2}\sum_{t=1}^T d_t^2.
\end{equation}

Finally, by $L$-smoothness of $\mathcal{F}_t$,
$\mathcal{F}_t(u_t) - \mathcal{F}_t(u^*_t) \leq \frac{L}{2}\,e_t^2$.
Substituting Lemma~\ref{lem:comparator} ($d_t \leq \sup_u\|\hat{s}_t - \hat{s}_{t-1}\|/\kappa$) into~\eqref{eq:sum_sq} yields
\begin{equation}
    \mathrm{DReg}_T \leq \frac{L}{2}\sum_{t=1}^T e_t^2 \leq \frac{L\rho_K^2 D_0^2}{1-\rho_K^2} + \frac{L\rho_K^2\,\mathcal{V}_T}{(1-\rho_K)^2\,\kappa^2}. \qedhere
\end{equation}
\end{proof}

The regret is stated in terms of the free energy $\mathcal{F}_t$ because PPC descends on $\mathcal{F}_t$, not on $c$ directly. A cost regret bound follows since for any $u_t, u^*_t \in \hat{\mathcal{M}}_t^\alpha$ the log-density difference is bounded, and hence $\sum_t [c(u_t) - c(u^*_t)]$ differs from $\mathrm{DReg}_T$ by at most an additive term proportional to $\beta$. Safety is treated separately in Theorem~\ref{thm:iss_rigorous}.

The barrier curvature $\kappa$ plays a dual role in~\eqref{eq:dreg_def}. It enters the contraction factor $\rho_K$ through $\mu = \beta\kappa$, so larger $\kappa$ accelerates convergence, and it appears in the drift sensitivity through $1/\kappa^2$, so larger $\kappa$ also reduces the impact of manifold motion. Both effects favor a density with sharp boundaries. In practice, $\kappa$ is a property of the learned density and cannot be tuned directly. The controllable quantity is $\beta_t$, which the adaptive schedule~\eqref{eq:beta_schedule} sets to $\beta^*_{\mathrm{curv}}(1 + C/\sqrt{N_t})$. When $N_t$ is small the schedule inflates $\beta_t$ above the critical threshold, favoring safety over cost optimality. As $N_t$ grows and the density estimate improves, $\beta_t$ decreases toward $\beta^*_{\mathrm{curv}}$, gradually releasing the Planner to exploit more of the feasible space.

\subsection{Online Score Convergence}
\label{sec:online_convergence}

This subsection provides the statistical bridge from the fixed-density analysis in Sections~IV-A and IV-B to the online setting where the KDE estimate $\hat{p}_t$ from~\eqref{eq:kde} evolves with new simulator samples. Algorithms~\ref{alg:ppc}--\ref{alg:nfc} present the controller using the marginal KDE (no context); the convergence guarantees below are stated for this case, and the contextual case follows analogously with $m$ replaced by $m+d$. Lemma~\ref{lem:kde_uniform} controls pointwise density and gradient errors, and Theorem~\ref{thm:learning} converts those controls into the integrated squared score error $\epsilon_t$.

\begin{lemma}[KDE Uniform Convergence]
\label{lem:kde_uniform}
Let $p^*$ be a $C^2$ density on a compact domain $\Omega \subset \mathbb{R}^m$ satisfying $\inf_\Omega p^* > 0$ and $\|p^*\|_{C^2} < \infty$. Let $\hat{p}_N$ be the Gaussian KDE with bandwidth $h = \Theta(N^{-1/(m+4)})$ built from $N$ i.i.d.\ samples from $p^*$. Then with probability at least $1 - \delta$:
\begin{align}
\label{eq:kde_density}
    \|\hat{p}_N - p^*\|_\infty &\;\leq\; C_1\, N^{-2/(m+4)} \sqrt{\ln(N/\delta)}, \\
\label{eq:kde_gradient}
    \|\nabla\hat{p}_N - \nabla p^*\|_\infty &\;\leq\; C_2\, N^{-1/(m+4)} \sqrt{\ln(N/\delta)},
\end{align}
where $C_1, C_2$ depend on $\|p^*\|_{C^2}$, $m$, and $\mathrm{vol}(\Omega)$.
\end{lemma}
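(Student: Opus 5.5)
We split each error into a deterministic bias term and a stochastic fluctuation term, $\hat{p}_N - p^* = (\mathbb{E}\hat{p}_N - p^*) + (\hat{p}_N - \mathbb{E}\hat{p}_N)$, and treat the gradient in the same way. The bias is controlled by smoothness. The fluctuation is controlled by a concentration inequality, made uniform over $\Omega$ by a covering argument. Substituting $h = \Theta(N^{-1/(m+4)})$ then balances the two terms.

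\emph{Bias.} We have $\mathbb{E}\hat{p}_N(u) = (K_h * p^*)(u) = \int K(z)\,p^*(u - hz)\,dz$. A second-order Taylor expansion of $p^*$ around $u$ kills the linear term, because the Gaussian kernel is symmetric. This gives $\|K_h * p^* - p^*\|_\infty \le \tfrac{1}{2} h^2 \|p^*\|_{C^2} \int \|z\|^2 K(z)\,dz = O(h^2)$. For the gradient, $\nabla(K_h * p^*) = K_h * \nabla p^*$, and $\nabla p^*$ is only $C^1$, so a first-order expansion gives a bias of $O(h)$. Near $\partial\Omega$ the convolution sees mass outside $\Omega$, and $p^*$ must be interpreted as a function on $\mathbb{R}^m$. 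We will assume $p^*$ extends to a $C^2$ function on a neighbourhood of $\Omega$, or else restrict the sup to points at distance $\gtrsim h\sqrt{\ln N}$ from the boundary, where the Gaussian tails contribute a negligible error. This boundary issue is the main technical obstacle. It is exactly where the hypothesis $\inf_\Omega p^* > 0$ (no vanishing at the boundary) is used.

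\emph{Stochastic term.} For fixed $u$, $\hat{p}_N(u)$ is an average of i.i.d.\ terms bounded by $\|K_h\|_\infty = O(h^{-m})$, with variance $O(h^{-m}\|p^*\|_\infty/N)$. Bernstein's inequality therefore gives a deviation of $O\bigl(\sqrt{\ln(1/\delta')/(Nh^m)} + \ln(1/\delta')/(Nh^m)\bigr)$. To make this uniform, cover $\Omega$ by a grid of mesh $N^{-2}$. Its cardinality is $O(\mathrm{vol}(\Omega) N^{2m})$, so a union bound with $\delta' = \delta/|\text{grid}|$ costs only a $\sqrt{\ln(N/\delta)}$ factor. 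Between grid points, both $\hat{p}_N$ and its mean are Lipschitz with constant $O(h^{-m-1})$, which is polynomial in $N$, so the interpolation error is $o(N^{-1})$. The gradient is handled identically: $\nabla K_h$ has sup $O(h^{-m-1})$ and second moment $O(h^{-m-2})$, giving a fluctuation of $O\bigl(\sqrt{\ln(N/\delta)/(Nh^{m+2})}\bigr)$.

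\emph{Balancing.} With $h \asymp N^{-1/(m+4)}$ we get $Nh^m = N^{4/(m+4)}$. The density fluctuation is then $N^{-2/(m+4)}\sqrt{\ln(N/\delta)}$, matching the $h^2$ bias and giving~\eqref{eq:kde_density}. For the gradient, $Nh^{m+2} = N^{2/(m+4)}$, so the fluctuation is $N^{-1/(m+4)}\sqrt{\ln(N/\delta)}$, matching the $O(h)$ bias and giving~\eqref{eq:kde_gradient}. The lower-order Bernstein terms are dominated by the square-root terms. Finally, a union bound over the density and gradient events, with $\delta/2$ each, yields joint probability at least $1-\delta$. The constants $C_1, C_2$ collect $\|p^*\|_{C^2}$, the kernel moments (which depend on $m$), and $\mathrm{vol}(\Omega)$ through the covering number.
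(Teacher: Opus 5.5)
Your argument has the same skeleton as the paper's proof: you split the error into the bias $K_h*p^*-p^*$ and the fluctuation $\hat p_N-\mathbb{E}\hat p_N$, apply concentration at net points with a union bound, and then set $h\asymp N^{-1/(m+4)}$. It is correct (modulo the boundary repair discussed below), and it differs from the paper at three points, each in your favour.

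\begin{itemize}
\item \emph{Concentration tool.} The paper controls the fluctuation with a bounded-difference inequality. One sample moves $\hat p_N(u)$ by up to $\|K_h\|_\infty/N\asymp h^{-m}/N$, so that inequality alone gives a deviation of order $h^{-m}\sqrt{\ln(1/\delta)/N}$. At the prescribed bandwidth this is $N^{(m-4)/(2(m+4))}$, which does not vanish for $m\geq 4$. The variance-adaptive rate $\sqrt{\ln(N/\delta)/(Nh^m)}$ really requires your Bernstein step with $\mathrm{Var}\leq\|p^*\|_\infty\|K\|_2^2h^{-m}$.
\item \emph{Gradient bias.} The paper asserts an $O(h^2)$ gradient bias, which needs a Lipschitz Hessian. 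Under the stated $C^2$ hypothesis only your $O(h)$ (in fact $o(h)$) bound is justified. It suffices, because $h\asymp N^{-1/(m+4)}$ is already the target gradient rate.
\item \emph{Off-net error.} You handle it explicitly, through deterministic Lipschitz bounds polynomial in $h^{-1}$ and a mesh of $N^{-2}$. The paper leaves this step implicit.
\end{itemize}

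Your boundary discussion needs two corrections. First, the hypothesis $\inf_\Omega p^*>0$ does not resolve the boundary problem; it causes it. A density supported on $\Omega$ and bounded below there jumps at $\partial\Omega$. At smooth boundary points $K_h*p^*\approx p^*/2$, so the sup-norm bias over $\Omega$ is of order one. The positivity hypothesis is not used anywhere in this lemma; it matters downstream, to control $\nabla p^*/p^*$ in Theorem~\ref{thm:learning}.

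Second, your first repair has to be an assumption on the sampling law: the samples come from a density that is $C^2$ across $\partial\Omega$, and $\Omega$ is only the evaluation window. The mere existence of a $C^2$ extension of the function $p^*|_\Omega$ is not enough, since otherwise $\mathbb{E}\hat p_N=K_h*(p^*\bm{1}_\Omega)$ regardless of any extension. Your second repair, restricting the supremum to points at distance $\gtrsim h\sqrt{\ln N}$ from $\partial\Omega$, is sound but proves a weaker statement. You are right that one of these modifications is needed, and the paper's proof passes over this silently.

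Finally, discarding the linear Bernstein term needs $\ln(N/\delta)\lesssim Nh^m=N^{4/(m+4)}$. This holds for the choice $\delta=N^{-2/(m+4)}$ made in Theorem~\ref{thm:learning}, but not for arbitrarily small $\delta$. The lemma as stated should therefore carry that restriction.
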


\begin{proof}
The KDE is $\hat{p}_N(u) = \frac{1}{N}\sum_{i=1}^N K_h(u - U_i)$ with $K_h(z) = h^{-m}K(z/h)$ and $K$ the standard Gaussian kernel. By Taylor expansion of $p^*$ under convolution, the bias satisfies $\|\mathbb{E}[\hat{p}_N] - p^*\|_\infty = O(h^2)$, and differentiating under the integral gives $\|\nabla\mathbb{E}[\hat{p}_N] - \nabla p^*\|_\infty = O(h^2)$. The pointwise variance of $\hat{p}_N(u)$ is $\mathrm{Var}[\hat{p}_N(u)] \leq C/(Nh^m)$, and for the gradient $\mathrm{Var}[\nabla\hat{p}_N(u)] \leq C/(Nh^{m+2})$, since $\nabla K_h$ introduces an extra $1/h$ factor. A bounded-difference inequality over an $\varepsilon$-net of $\Omega$ (cardinality $O((\mathrm{diam}(\Omega)/\varepsilon)^m)$) with a union bound yields, for any $\delta \in (0,1)$ and with probability at least $1-\delta$,
\[
    \sup_{u \in \Omega}|\hat{p}_N(u) - \mathbb{E}[\hat{p}_N(u)]|
    \leq O\!\left(\sqrt{\tfrac{\ln(N/\delta)}{Nh^m}}\right),
\]
and the analogous bound for $\nabla\hat{p}_N$ with $h^m$ replaced by $h^{m+2}$.

Setting $h = \Theta(N^{-1/(m+4)})$ balances the density bias $O(N^{-2/(m+4)})$ with its standard deviation $O(\sqrt{\ln N}\cdot N^{-2/(m+4)})$, yielding~\eqref{eq:kde_density} (the logarithmic factor is absorbed into $C_1$). For the gradient, the bias is still $O(N^{-2/(m+4)})$ but the standard deviation is $O(\sqrt{\ln N}\cdot N^{-1/(m+4)})$, which dominates~\cite{tsybakov2009introduction}, giving~\eqref{eq:kde_gradient}.
\end{proof}

\begin{theorem}[Score Tracking Bound]
\label{thm:learning}
Let $\hat{p}_t$ be a Gaussian KDE with bandwidth $h = \Theta(N_t^{-1/(m+4)})$ built from $N_t$ cumulative feasibility samples on a manifold with reach $\tau > 0$. Then the integrated squared score error satisfies
\begin{equation}\label{eq:score_rate}
    \mathbb{E}\!\left[\int_{\hat{\mathcal{M}}_t^\alpha} \|\hat{s}_t(u) - s^*_t(u)\|^2\, \hat{p}_t(u)\, du\right] \;\leq\; \frac{C(m,\tau,\alpha)}{N_t^{2/(m+4)}},
\end{equation}
where $C(m,\tau,\alpha)$ depends polynomially on $m$, $1/\tau$, and $1/\alpha$; the $1/\alpha$ dependence enters through the denominator $\hat{p}_t\geq\alpha$ on $\hat{\mathcal{M}}_t^\alpha$ via $p_{\min} := \alpha/2$.
\end{theorem}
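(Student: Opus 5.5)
The plan is to reduce the score error to the pointwise density and gradient errors of Lemma~\ref{lem:kde_uniform}, using the identity $\hat{s} - s^* = \nabla\hat{p}/\hat{p} - \nabla p^*/p^*$. On $\hat{\mathcal{M}}_t^\alpha$ we have $\hat{p}_t \geq \alpha$. On the high-probability event of Lemma~\ref{lem:kde_uniform}, for $N_t$ large enough that $\|\hat{p}_t - p^*_t\|_\infty \leq \alpha/2$, we also get $p^*_t \geq \alpha/2 =: p_{\min}$ there. Writing
\[
\hat{s}_t - s^*_t = \frac{\nabla\hat{p}_t - \nabla p^*_t}{\hat{p}_t} + \nabla p^*_t\,\frac{p^*_t - \hat{p}_t}{\hat{p}_t\, p^*_t},
\]
the pointwise bound follows:
\[
\|\hat{s}_t - s^*_t\| \leq \frac{\|\nabla\hat{p}_t - \nabla p^*_t\|_\infty}{\alpha} + \frac{\|p^*_t\|_{C^1}\,\|\hat{p}_t - p^*_t\|_\infty}{\alpha\, p_{\min}}.
\]
This is how the polynomial dependence on $1/\alpha$ enters $C(m,\tau,\alpha)$. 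Since $\int_{\hat{\mathcal{M}}_t^\alpha}\hat{p}_t \leq 1$, the weighted integral is bounded by the squared sup of this expression.

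The rate of that crude bound has to be checked. Squaring the gradient term from~\eqref{eq:kde_gradient} only gives $N_t^{-2/(m+4)}\ln N_t$, which matches the target up to a logarithm; the density term contributes the smaller $N_t^{-4/(m+4)}$. To remove the logarithm and pass to the expectation, I would not use the sup bound directly. Instead I would work with mean-square errors: $\mathbb{E}\|\nabla\hat{p}_t(u) - \nabla p^*_t(u)\|^2 \leq \mathrm{bias}^2 + \mathrm{Var} = O(h^4) + O(1/(N_t h^{m+2}))$. With $h = \Theta(N_t^{-1/(m+4)})$ this is $O(N_t^{-2/(m+4)})$, coming from the variance $N_t^{-1}h^{-(m+2)}$, and the density term is $O(N_t^{-4/(m+4)})$. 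Integrating over the compact set $\hat{\mathcal{M}}_t^\alpha$, whose volume is controlled, then gives~\eqref{eq:score_rate}. The complement of the good event has probability $\delta = N_t^{-1}$. There, $\hat{s}_t$ is still bounded on $\{\hat{p}_t\geq\alpha\}$ by $\|\nabla\hat{p}_t\|/\alpha \lesssim h^{-m-1}/\alpha$, so its contribution is polynomially small once $\delta$ is taken to be a suitable power of $N_t$.

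The reach $\tau$ enters through the hypotheses of Lemma~\ref{lem:kde_uniform}: $\inf p^*>0$ and $\|p^*\|_{C^2}<\infty$ on the domain. Near $\partial\mathcal{M}_t$ the kernel smoothing sees boundary curvature, and the $O(h^2)$ bias constant carries a factor of order $1/\tau$ from the second fundamental form of the boundary. I would restrict to the subset of $\hat{\mathcal{M}}_t^\alpha$ at distance at least $h$ from $\partial\mathcal{M}_t$ (valid for $h<\tau$) and absorb the remaining boundary layer into the constant.

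The main obstacle is making this boundary and denominator control honest. The random set $\hat{\mathcal{M}}_t^\alpha$ depends on the same samples, so I would first bound the integral over the deterministic set $\{p^*_t\geq\alpha/2\}$, which contains $\hat{\mathcal{M}}_t^\alpha$ on the good event. I would also check that $p^*_t$ is bounded below there, so that Lemma~\ref{lem:kde_uniform} applies on a compact $\Omega$. This is where I expect the polynomial dependence on $1/\tau$ and $1/\alpha$ to be fixed.
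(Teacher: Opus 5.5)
Your core argument is the same as the paper's. Your second term is exactly the paper's $-B(u)$, and you use the same lower bounds $\hat p_t\geq\alpha$ and $p^*_t\geq p_{\min}=\alpha/2$ on $\hat{\mathcal{M}}_t^\alpha$, together with the rates of Lemma~\ref{lem:kde_uniform}. Where you depart from the paper, you improve on it. Pointwise bias--variance bounds, integrated by Tonelli over the deterministic superset $\{p^*_t\geq\alpha/2\}$, handle the sample dependence of $\hat{\mathcal{M}}_t^\alpha$. They also remove the $\ln N_t$ factor that the paper drops without comment in~\eqref{eq:A_bound}; its sup-norm route with $\delta=N_t^{-2/(m+4)}$ actually gives $N_t^{-2/(m+4)}\ln N_t$. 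To carry this out, first bound the integrand on $\hat{\mathcal{M}}_t^\alpha$ by $\frac{2}{\alpha}\|\nabla(\hat p_t-p^*_t)\|^2+\frac{8\|\nabla p^*_t\|_\infty^2}{\alpha^3}(\hat p_t-p^*_t)^2$, which has no random denominator. Only then enlarge the domain and take expectations.

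The step that genuinely fails is the bad event. You bound $\hat s_t$ there, but the integrand also contains $s^*_t$, and off the good event nothing keeps $\hat{\mathcal{M}}_t^\alpha$ away from $\partial\mathcal{M}_t$. With positive probability all $N_t$ samples fall in an $h$-ball touching $\partial\mathcal{M}_t$, so $\hat p_t\gtrsim h^{-m}\gg\alpha$ there and $\hat{\mathcal{M}}_t^\alpha$ contains a neighborhood of a boundary point. Since $p^*_t$ vanishes on $\partial\mathcal{M}_t$ with nondegenerate gradient (Definition~\ref{def:feasibility_score}), $\|s^*_t(u)\|\asymp 1/\mathrm{dist}(u,\partial\mathcal{M}_t)$ near the boundary. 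Hence $\int\|\hat s_t-s^*_t\|^2\hat p_t=\infty$, and $s^*_t$ is not even defined outside $\mathcal{M}_t$. The bad event therefore contributes $+\infty$, not a polynomially small term, and no choice of $\delta$ fixes this: the expectation in~\eqref{eq:score_rate}, read literally, is infinite. The paper's proof hides the same problem by bounding the complement by $O(\delta S_{\max}^2)$, although $S_{\max}$ bounds $s^*_t$ only on the good event. What your argument does prove, at the claimed rate, is one of the following: a bound on $\mathbb{E}[\mathbf{1}_E\int_{\hat{\mathcal{M}}_t^\alpha}\cdots]$ with $E$ the good event, a statement holding with probability $1-\delta$, or a bound on the integral over $\hat{\mathcal{M}}_t^\alpha\cap\{p^*_t\geq\alpha/2\}$. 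You should state the result in one of these forms.

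Three smaller points.

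\emph{Containment.} The inclusion $\hat{\mathcal{M}}_t^\alpha\subseteq\{p^*_t\geq\alpha/2\}$ needs $\hat p_t<\alpha$ on all of $\{p^*_t<\alpha/2\}$, including near and outside $\partial\mathcal{M}_t$. Lemma~\ref{lem:kde_uniform} says nothing there, since it requires $\inf_\Omega p^*>0$. You need a global uniform deviation bound for the Gaussian KDE plus uniform continuity of $p^*_t$; the paper also assumes this silently.

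\emph{Bias order.} An $O(h^4)$ squared gradient bias needs $p^*_t\in C^3$. For $C^2$ you get $o(h^2)$, which still suffices because the variance is already of order $h^2$.

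\emph{Boundary layer.} Your boundary-layer plan is misdirected, though harmlessly so. Because $p^*_t$ vanishes on $\partial\mathcal{M}_t$ and is Lipschitz, $\{p^*_t\geq\alpha/2\}$ lies at distance at least $\alpha/(2\|\nabla p^*_t\|_\infty)$ from $\partial\mathcal{M}_t$, independently of $N_t$. So the boundary kink enters the bias only through a Gaussian tail $e^{-c\alpha^2/h^2}$, and there is no layer to absorb. A genuine layer could not have been absorbed anyway. The gradient jumps across $\partial\mathcal{M}_t$, so a layer of width of order $h$ carries $O(1)$ gradient bias and contributes $O(N_t^{-1/(m+4)})$, which exceeds the target rate. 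For the same reason, the $1/\tau$ dependence does not arise from boundary curvature; in both your argument and the paper's it is nominal.
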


\begin{proof}
By the chain rule, $\hat{s}(u) = \nabla_u \ln \hat{p}(u) = \nabla\hat{p}(u)/\hat{p}(u)$ and similarly $s^*(u) = \nabla p^*(u)/p^*(u)$. Subtracting, we obtain
\begin{equation}\label{eq:score_decomp}
    \hat{s}(u) - s^*(u) \;=\; \underbrace{\frac{\nabla(\hat{p} - p^*)(u)}{\hat{p}(u)}}_{=:\, A(u)} \;-\; \underbrace{\frac{\hat{p}(u) - p^*(u)}{\hat{p}(u)}\, s^*(u)}_{=:\, B(u)},
\end{equation}
where the identity follows from a direct computation:
\begin{align*}
    A(u) - B(u)
    &= \frac{\nabla\hat{p} - \nabla p^*}{\hat{p}} - \frac{(\hat{p}-p^*)\,s^*}{\hat{p}} \\
    &= \frac{\nabla\hat{p}\cdot p^* - \nabla p^*\cdot\hat{p}}{\hat{p}\cdot p^*}
    \;=\; \hat{s} - s^*.
\end{align*}

On $\hat{\mathcal{M}}_t^\alpha$ we have $\hat{p}_t \geq \alpha$, and $\|\hat{p}_t - p^*_t\|_\infty = O(N^{-2/(m+4)})$ by Lemma~\ref{lem:kde_uniform}; therefore, for $N$ large enough that $\|\hat{p}_t - p^*_t\|_\infty \leq \alpha/2$, we obtain $p^*_t \geq \alpha/2$ on $\hat{\mathcal{M}}_t^\alpha$. Set $p_{\min} := \alpha/2 > 0$; then $\hat{p}_t \geq p_{\min}$ on $\hat{\mathcal{M}}_t^\alpha$ as well (in fact $\hat{p}_t \geq \alpha = 2p_{\min}$). Note that this is \emph{not} an appeal to the oracle $\inf_{\mathcal{M}_t} p^*_t$ bound of Definition~\ref{def:feasibility_score}, which fails near $\partial\mathcal{M}_t$; the threshold $\alpha > 0$ itself guarantees the bound on the learned set. The score $s^*$ is bounded on the compact set $\hat{\mathcal{M}}_t^\alpha$: $\|s^*\|_\infty \leq S_{\max} < \infty$. Therefore:
\begin{align}
    \|A(u)\| &\leq \frac{2}{p_{\min}}\|\nabla(\hat{p}-p^*)\|_\infty \;\leq\; \frac{2C_2}{p_{\min}}\, N^{-1/(m+4)}, \label{eq:A_bound} \\
    \|B(u)\| &\leq \frac{2S_{\max}}{p_{\min}}\|\hat{p}-p^*\|_\infty \;\leq\; \frac{2C_1 S_{\max}}{p_{\min}}\, N^{-2/(m+4)}. \label{eq:B_bound}
\end{align}

By \eqref{eq:score_decomp} and $\|A-B\|^2 \leq 2\|A\|^2 + 2\|B\|^2$:
\begin{align}
\nonumber
    \int_{\hat{\mathcal{M}}_t^\alpha}\|\hat{s} - s^*\|^2\hat{p}\, du &\leq 2\int_{\hat{\mathcal{M}}_t^\alpha}\bigl(\|A\|^2 + \|B\|^2\bigr)\hat{p}\, du \\
\label{eq:integrated_bound}
    &\leq 2\bigl(\|A\|_\infty^2 + \|B\|_\infty^2\bigr) \;\leq\; \frac{C(m,\tau,\alpha)}{N^{2/(m+4)}},
\end{align}
where $C(m,\tau,\alpha)$ absorbs the constants from \eqref{eq:A_bound}--\eqref{eq:B_bound} and depends on $p_{\min} = \alpha/2$, $S_{\max}$, $\|p^*\|_{C^2}$, and $\mathrm{vol}(\hat{\mathcal{M}}_t^\alpha)$. These quantities depend polynomially on $m$, $1/\tau$ (through the manifold geometry), and $1/\alpha$ (through the $1/p_{\min}^2$ prefactors of~\eqref{eq:A_bound}--\eqref{eq:B_bound}). The bound~\eqref{eq:integrated_bound} holds on the event of Lemma~\ref{lem:kde_uniform}, which has probability at least $1-\delta$. The expectation in~\eqref{eq:score_rate} follows by choosing $\delta = N^{-2/(m+4)}$ and noting that the complementary event contributes at most $O(\delta \cdot S_{\max}^2) = O(N^{-2/(m+4)})$, which is absorbed into $C(m,\tau,\alpha)$.

The rate $N^{-2/(m+4)}$ is the minimax-optimal rate for nonparametric gradient estimation in $m$ dimensions~\cite{tsybakov2009introduction}. The plug-in score estimator inherits this rate because $\hat{s} = \nabla\hat{p}/\hat{p}$ and the gradient error in the numerator dominates the density error in the denominator. For comparison, the density mean integrated squared error (MISE), $\mathbb{E}[\int (\hat{p}_N - p^*)^2\, du]$, converges at the faster rate $N^{-4/(m+4)}$, reflecting the additional difficulty of derivative estimation.
\end{proof}

\begin{remark}
For $m = 2$ (our experiments), the integrated squared score error decays as $N^{-1/3}$, which is slower than the density MISE rate $N^{-2/3}$ but still sufficient for the safety bound of Theorem~\ref{thm:iss_rigorous} to shrink monotonically as $N$ grows (\Cref{fig:sample_budget}).
\end{remark}

\subsection{Safety Bound via Score Descent}

We now compose the contraction guarantee (Theorem~\ref{thm:contraction}) with the statistical bound (Theorem~\ref{thm:learning}) to obtain the main safety result. The bound is analogous to Input-to-State Safety (ISSf)~\cite{ames2019control} in that the distance to the safe set is bounded by a function of the estimation error $\epsilon_t$, but the sensitivity is governed by $\kappa$ rather than a Lipschitz constant of the dynamics. A supporting lemma first controls the set approximation error.

\begin{lemma}[Level-Set Stability]
\label{lem:hausdorff}
Fix a time step $t$. Let $\hat{\mathcal{M}}_t^\alpha := \{u : \hat{p}_t(u \mid \xi_t) \geq \alpha\}$ and let $\epsilon_t$ be the score error of Theorem~\ref{thm:learning}. Assume the oracle density $p^*_t$ (Definition~\ref{def:feasibility_score}) has nondegenerate inward-normal derivative in a boundary tube. Specifically, there exist $c_\partial,r_\partial>0$ with $r_\partial\leq\tau$ such that
\begin{equation}\label{eq:bdry_reg}
    \langle \nabla_u p^*_t(u\mid\xi_t),\,\hat{n}(u)\rangle \;\geq\; c_\partial
\end{equation}
for all $u$ satisfying $\mathrm{dist}(u,\partial\mathcal{M}_t)\leq r_\partial$, where $\hat{n}(u)$ is the inward unit normal at the nearest boundary point. If $\alpha$ and $\sqrt{\epsilon_t}$ are small enough that
\begin{equation}\label{eq:hausdorff_smallness}
    \sqrt{\epsilon_t} < \alpha
    \qquad\text{and}\qquad
    (\alpha+\sqrt{\epsilon_t})/c_\partial \leq r_\partial,
\end{equation}
then
\begin{equation}\label{eq:hausdorff}
    d_H(\hat{\mathcal{M}}_t^\alpha,\,\mathcal{M}_t) \;\leq\; \frac{\alpha + \sqrt{\epsilon_t}}{c_\partial}.
\end{equation}
\end{lemma}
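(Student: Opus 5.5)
The plan is to treat the proof as a standard level-set perturbation argument. The score error $\epsilon_t$ will be converted into a uniform density error of size at most $\sqrt{\epsilon_t}$. The boundary-tube condition~\eqref{eq:bdry_reg} will then convert a density gap into a distance gap. Concretely, I would first record the working inequality $\|\hat{p}_t(\cdot\mid\xi_t) - p^*_t(\cdot\mid\xi_t)\|_\infty \leq \sqrt{\epsilon_t}$. In the paper's setting this comes from Lemma~\ref{lem:kde_uniform}: the sup-norm density error is $O(N^{-2/(m+4)})$, which is of the same order as the bound on $\epsilon_t$ in Theorem~\ref{thm:learning}. I will state this identification explicitly as the bridge, with constants absorbed. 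The Hausdorff distance then splits into two one-sided terms: $\sup_{u\in\hat{\mathcal{M}}_t^\alpha}\mathrm{dist}(u,\mathcal{M}_t)$ and $\sup_{v\in\mathcal{M}_t}\mathrm{dist}(v,\hat{\mathcal{M}}_t^\alpha)$.

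\emph{First direction (learned set lies near the true set).} Take $u \in \hat{\mathcal{M}}_t^\alpha$. Then $p^*_t(u) \geq \hat{p}_t(u) - \sqrt{\epsilon_t} \geq \alpha - \sqrt{\epsilon_t} > 0$ by the first smallness condition. Since $p^*_t$ is supported on $\overline{\mathcal{M}_t}$, this forces $u \in \mathcal{M}_t$, so this one-sided distance is zero. If one insists on a quantitative version, the same tube argument as below gives a bound of $(\alpha+\sqrt{\epsilon_t})/c_\partial$.

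\emph{Second direction (true set is covered by the learned set up to a boundary layer).} Take $v \in \mathcal{M}_t$. If $\mathrm{dist}(v,\partial\mathcal{M}_t) \geq \delta := (\alpha+\sqrt{\epsilon_t})/c_\partial$, I would show $p^*_t(v) \geq \alpha + \sqrt{\epsilon_t}$, hence $\hat{p}_t(v) \geq \alpha$ and $v \in \hat{\mathcal{M}}_t^\alpha$. Otherwise $v$ lies in the tube, because $\delta \leq r_\partial$ by the second smallness condition. Let $b$ be its nearest boundary point, with $p^*_t(b)=0$, and move along the inward normal $\hat{n}$. Integrating~\eqref{eq:bdry_reg} along the segment $b + s\hat{n}$ for $s\in[0,\delta]$ gives $p^*_t(b+\delta\hat{n}) \geq c_\partial\delta = \alpha+\sqrt{\epsilon_t}$, so $b+\delta\hat{n}\in\hat{\mathcal{M}}_t^\alpha$. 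This point is within distance $\delta$ of $v$, since both lie on the same normal segment and $v$ is at depth less than $\delta$.

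\emph{Main obstacle.} The delicate step is the deep-interior claim, together with keeping the normal segment inside the tube. Two facts are needed. First, the nearest-point projection and the inward normal must be well defined along the segment; this is where $r_\partial \leq \tau$ (the reach) is used, since it makes the normal segments of length at most $r_\partial$ non-crossing. Second, $p^*_t$ must stay at least $c_\partial\delta$ beyond the tube. For the latter I would argue as follows. On the inner boundary of the tube (depth $r_\partial$) the integration already gives $p^*_t \geq c_\partial r_\partial \geq c_\partial\delta$. For points deeper than the tube I would need an interior lower bound of $p^*_t$, and I expect to add the mild hypothesis $\inf\{p^*_t(u) : \mathrm{dist}(u,\partial\mathcal{M}_t)\geq r_\partial\} \geq c_\partial r_\partial$. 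This holds for the canonical smoothed-uniform density of Definition~\ref{def:feasibility_score}, since that density increases monotonically away from the boundary. Alternatively, I would circumvent the deep-interior case by observing that deep points are trivially within distance $\delta$ of some point of the learned set only if the learned set is nonempty there, so the added hypothesis is the cleaner route. Combining the two directions yields~\eqref{eq:hausdorff}.
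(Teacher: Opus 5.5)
Your argument follows the paper's proof. You use the same bridge $\|\hat p_t-p^*_t\|_\infty\le\sqrt{\epsilon_t}$ from Lemma~\ref{lem:kde_uniform} and the same containment $\hat{\mathcal M}_t^\alpha\subseteq\mathcal M_t$. You also use the same fundamental-theorem-of-calculus step along the inward normal, with $b+\delta\hat n$ as the witness point for shallow $v$. One correction on the bridge: constants cannot be ``absorbed'' into a hard inequality. What actually happens is that the sup-norm error $C_1N^{-2/(m+4)}\sqrt{\ln N}$ is of smaller order than $\sqrt{\epsilon_t}\asymp N^{-1/(m+4)}$. So the inequality holds for $N$ large, on the high-probability event of Lemma~\ref{lem:kde_uniform}; this is the paper's ``eventually dominated'' step.

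The deep-interior issue you flag is real, and it is a gap in the paper's proof as well. The paper treats all $v$ with $\mathrm{dist}(v,\partial\mathcal M_t)\ge s_\alpha$ by asserting that $p^*_t$ increases along the inward ray. However, \eqref{eq:bdry_reg} only gives this up to depth $r_\partial$. Under the stated hypotheses the lemma is in fact false, reading \eqref{eq:bdry_reg} on the inner tube as both you and the paper do. For a counterexample, take $m=1$, $\mathcal M_t=[-1,1]$ (so $r_\partial\le\tau=1$), $c_\partial=1$ and $r_\partial=0.3$. Use a $C^2$ density equal to $1-|u|$ for $0.7\le|u|\le1$ and equal to $0.01$ for $|u|\le0.3$, with the remaining mass in two bumps on $0.3<|u|<0.7$. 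With $\alpha=0.1$ and $\sqrt{\epsilon_t}=0.05$, both \eqref{eq:bdry_reg} and \eqref{eq:hausdorff_smallness} hold. Yet every $\hat p_t$ within $\sqrt{\epsilon_t}$ of $p^*_t$ is at most $0.06<\alpha$ on $[-0.3,0.3]$. Hence $\mathrm{dist}(0,\hat{\mathcal M}_t^\alpha)\ge0.3>0.15=(\alpha+\sqrt{\epsilon_t})/c_\partial$.

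So your interior lower bound is a necessary extra hypothesis, not a convenience, and it should be written into the statement. Your form $\inf\{p^*_t(u):\mathrm{dist}(u,\partial\mathcal M_t)\ge r_\partial\}\ge c_\partial r_\partial$ suffices, because $c_\partial r_\partial\ge\alpha+\sqrt{\epsilon_t}$ by \eqref{eq:hausdorff_smallness}. With it, your case split closes the proof: integrate along the normal for depths in $[\delta,r_\partial]$, and use the hypothesis beyond that.

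Two things should be dropped. First, the ``alternatively'' sentence is not an argument, and the example shows no such workaround exists. Second, the footnote's smoothed-uniform density does not justify the hypothesis. It is truncated by the indicator, so it jumps at $\partial\mathcal M_t$ rather than vanishing continuously. Also, monotonicity in depth of a smoothed indicator can fail for non-convex $\mathcal M_t$, for example in narrow gaps between obstacles.
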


The inward-normal nondegeneracy~\eqref{eq:bdry_reg} is a standard transversality condition that rules out flat-approach boundaries where $\nabla p^*_t$ vanishes; it is what lets density perturbations translate linearly into boundary displacements, and it holds generically for smooth densities whose support has a $C^2$ boundary. The smallness condition~\eqref{eq:hausdorff_smallness} simply requires the level-set threshold $\alpha$ and the sup-norm error $\sqrt{\epsilon_t}$ to stay inside the boundary tube of width $r_\partial$, which is automatic for $N$ large enough since $\epsilon_t\to 0$.

\begin{proof}
Lemma~\ref{lem:kde_uniform} gives $\|\hat{p}_t - p^*_t\|_\infty^2 = O(N^{-4/(m+4)}\ln N)$, which is eventually dominated by $\epsilon_t = O(N^{-2/(m+4)})$; we use the resulting sup-norm bound $\|\hat{p}_t - p^*_t\|_\infty \leq \sqrt{\epsilon_t}$. Write $s_\alpha := (\alpha+\sqrt{\epsilon_t})/c_\partial$; by~\eqref{eq:hausdorff_smallness}, $s_\alpha \leq r_\partial$.

\emph{Step 1: $\hat{\mathcal{M}}_t^\alpha \subseteq \mathcal{M}_t$.} If $u \in \hat{\mathcal{M}}_t^\alpha$, then $p^*_t(u) \geq \hat{p}_t(u) - \sqrt{\epsilon_t} \geq \alpha - \sqrt{\epsilon_t} > 0$ by~\eqref{eq:hausdorff_smallness}, so $u\in\mathcal{M}_t$ (the oracle density vanishes outside $\mathcal{M}_t$). Hence $\sup_{u\in\hat{\mathcal{M}}_t^\alpha}\mathrm{dist}(u,\mathcal{M}_t) = 0$.

\emph{Step 2: every $v\in\mathcal{M}_t$ is within $s_\alpha$ of $\hat{\mathcal{M}}_t^\alpha$.} Let $v\in\mathcal{M}_t$. If $\mathrm{dist}(v,\partial\mathcal{M}_t) \geq s_\alpha$, let $u_b = \pi_{\partial\mathcal{M}_t}(v)$ (well-defined since $s_\alpha \leq r_\partial\leq\tau$) and parametrize the inward normal ray $\gamma(s) = u_b + s\hat{n}(v)$ for $s\in[0,s_\alpha]$, which stays in the $r_\partial$-tube. The fundamental theorem of calculus and~\eqref{eq:bdry_reg} give
\begin{equation}\label{eq:ftc_pstar}
    p^*_t(\gamma(s_\alpha)) \;=\; \int_0^{s_\alpha} \langle\nabla p^*_t(\gamma(s)),\hat{n}(v)\rangle\,ds \;\geq\; c_\partial s_\alpha \;=\; \alpha+\sqrt{\epsilon_t}.
\end{equation}
Because $p^*_t$ is increasing along the inward ray in the tube, monotonicity gives $p^*_t(v)\geq p^*_t(\gamma(s_\alpha)) \geq \alpha+\sqrt{\epsilon_t}$ whenever $\mathrm{dist}(v,\partial\mathcal{M}_t)\geq s_\alpha$, hence $\hat{p}_t(v) \geq p^*_t(v) - \sqrt{\epsilon_t} \geq \alpha$ and $v\in\hat{\mathcal{M}}_t^\alpha$, i.e., $\mathrm{dist}(v,\hat{\mathcal{M}}_t^\alpha) = 0$.

If instead $\mathrm{dist}(v,\partial\mathcal{M}_t) < s_\alpha$, let $u_b = \pi_{\partial\mathcal{M}_t}(v)$ and set $v' := u_b + s_\alpha\hat{n}(v)$. By the same FTC argument, $v'\in\hat{\mathcal{M}}_t^\alpha$, and $\|v - v'\| \leq s_\alpha$, so $\mathrm{dist}(v,\hat{\mathcal{M}}_t^\alpha) \leq s_\alpha$.

Combining Steps 1 and 2, $d_H(\hat{\mathcal{M}}_t^\alpha,\mathcal{M}_t) \leq s_\alpha = (\alpha+\sqrt{\epsilon_t})/c_\partial$.
\end{proof}

Composing the contraction guarantee (Theorem~\ref{thm:contraction}) with the statistical bound (Theorem~\ref{thm:learning}) yields the main safety result, a safety guarantee whose ultimate bound depends on manifold geometry rather than a Lipschitz constant of unknown dynamics. Let $N_t$ denote the cumulative number of feasibility samples available at time $t$, and let $\epsilon_t := C(m,\tau,\alpha)/N_t^{2/(m+4)}$ denote the integrated score-error rate from Theorem~\ref{thm:learning}.

\begin{theorem}[Contextual Safety Bound]
\label{thm:iss_rigorous}
Let $\{u_t\}$ be the output of contextual PPC with stiffness $\beta_t$ and $K$ inner gradient steps warm-started from $u_{t-1}$, and let $\kappa(\xi_t)$ denote the barrier curvature of $\hat{p}_t(\cdot\mid\xi_t)$ (Definition~\ref{def:barrier_curvature}). Assume, for every $t$:
\begin{enumerate}
    \item[(i)] Assumptions~\ref{ass:compactness}--\ref{ass:logconcave} hold for $\hat{p}_t(\cdot\mid\xi_t)$;
    \item[(ii)] $\beta_t > \beta^*_{\mathrm{curv}}$;
    \item[(iii)] the inward-normal nondegeneracy~\eqref{eq:bdry_reg} and smallness conditions~\eqref{eq:hausdorff_smallness} of Lemma~\ref{lem:hausdorff} hold;
    \item[(iv)] the warm-start margin of Theorem~\ref{thm:dynamic_regret}(iii) holds, so that $u_{t-1}$ lies in the attraction ball of Theorem~\ref{thm:contraction} around $u^*_t$;
    \item[(v)] $K$ is large enough that $\rho_K\,\mathrm{diam}(\hat{\mathcal{M}}_t^\alpha)$ is negligible relative to the other terms of~\eqref{eq:iss_bound} (quantified in the proof), where $\rho_K := \sqrt{\chi}\,(1-1/\chi)^{K/2}$ is the iterate contraction factor of Theorem~\ref{thm:contraction}(ii) with condition number $\chi := L/\mu$, $\mu = \beta_t\kappa(\xi_t)$, and $L = L_c + \beta_t\Lambda$.
\end{enumerate}
Then
\begin{equation}\label{eq:iss_bound}
    \mathrm{dist}(u_t, \mathcal{M}_t) \;\leq\;
    \underbrace{\frac{\sqrt{\epsilon_t} + \alpha}{c_\partial}}_{\text{set approx.}}
    \;+\;
    \underbrace{\frac{G_c}{\beta_t\kappa(\xi_t)}}_{\text{cost--safety}},
\end{equation}
where $\epsilon_t := C(m,\tau,\alpha)/N_t^{2/(m+4)}$ is the integrated squared score error from Theorem~\ref{thm:learning}.
\end{theorem}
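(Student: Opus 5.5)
The plan is a triangle-inequality decomposition along the chain $u_t \to u^*_t \to \bar u_t \to \hat{\mathcal{M}}_t^\alpha \to \mathcal{M}_t$. Here $u^*_t$ is the restricted minimizer of $\mathcal{F}_t$ on $\hat{\mathcal{M}}_t^\alpha$, and $\bar u_t$ is the mode of $\hat p_t(\cdot\mid\xi_t)$ on that set. Concretely, I would write
\[
\mathrm{dist}(u_t,\mathcal{M}_t) \le \|u_t-u^*_t\| + \mathrm{dist}(u^*_t,\mathcal{M}_t),
\]
and bound each term with an earlier result. The set-approximation term will come from Lemma~\ref{lem:hausdorff}, and the cost--safety term from Proposition~\ref{prop:betacrit}. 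The optimization error $\|u_t-u^*_t\|$ is what hypothesis (v) is meant to absorb.

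\textbf{Optimization error.} By (iv), $u_{t-1}$ lies in the attraction ball of Theorem~\ref{thm:contraction} around $u^*_t$. By (ii) and Proposition~\ref{prop:betacrit}, $u^*_t\in\mathrm{int}(\hat{\mathcal{M}}_t^\alpha)$, so Theorem~\ref{thm:contraction} applies with $\mu=\beta_t\kappa(\xi_t)$. Part (ii) of that theorem gives $u_t\in\hat{\mathcal{M}}_t^\alpha$ and
\[
\|u_t-u^*_t\|\le\rho_K\|u_{t-1}-u^*_t\|\le\rho_K\,\mathrm{diam}(\hat{\mathcal{M}}_t^\alpha),
\]
which (v) declares negligible. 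I would quantify (v) by requiring $\rho_K\,\mathrm{diam}(\hat{\mathcal{M}}_t^\alpha)$ to be at most a fixed fraction of the right-hand side of~\eqref{eq:iss_bound}, or to be dominated by it.

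\textbf{Remaining terms.} A sharper route avoids carrying this term at all. Since $u_t\in\hat{\mathcal{M}}_t^\alpha$, Step~1 of the proof of Lemma~\ref{lem:hausdorff} (which uses $\sqrt{\epsilon_t}<\alpha$) already gives $\hat{\mathcal{M}}_t^\alpha\subseteq\mathcal{M}_t$, hence $\mathrm{dist}(u_t,\mathcal{M}_t)=0$, and~\eqref{eq:iss_bound} holds trivially. To make the two advertised terms appear, as the statement intends, I would still present the decomposition
\[
\mathrm{dist}(u_t,\mathcal{M}_t)\le \|u_t-u^*_t\|+\|u^*_t-\bar u_t\|+\mathrm{dist}(\bar u_t,\mathcal{M}_t).
\]
\begin{itemize}
\item Proposition~\ref{prop:betacrit}(i) bounds the middle term by $G_c/(\beta_t\kappa(\xi_t))$, using the $\xi_t$-dependent curvature.
\item The Hausdorff bound~\eqref{eq:hausdorff} of Lemma~\ref{lem:hausdorff}, applied under (iii), bounds the last term by $(\alpha+\sqrt{\epsilon_t})/c_\partial$, since $\bar u_t\in\hat{\mathcal{M}}_t^\alpha$.
\item The first term is absorbed via (v).
\end{itemize}
Finally, I would substitute $\epsilon_t=C(m,\tau,\alpha)N_t^{-2/(m+4)}$ from Theorem~\ref{thm:learning}.

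\textbf{Main obstacle.} The delicate step is making the absorption of $\rho_K\,\mathrm{diam}$ rigorous, since strictly it adds a term rather than vanishing. I would handle this in one of two ways. The first is the inclusion argument above, which shows the term is irrelevant once $u_t\in\hat{\mathcal{M}}_t^\alpha$. The second is to state (v) as $\rho_K\,\mathrm{diam}(\hat{\mathcal{M}}_t^\alpha)\le r_\alpha-G_c/(\beta_t\kappa(\xi_t))$, which guarantees $u_t$ stays in the learned set via the margin in Proposition~\ref{prop:betacrit}. A second, lesser check is that the sup-norm error $\|\hat p_t-p^*_t\|_\infty\le\sqrt{\epsilon_t}$ is valid for the conditional KDE. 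I would cite Lemma~\ref{lem:kde_uniform} with $m$ replaced by $m+d$, as in the remark opening Section~\ref{sec:online_convergence}.
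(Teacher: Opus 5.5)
Your three-term decomposition $\|u_t-u^*_t\|+\|u^*_t-\bar u_t\|+\mathrm{dist}(\bar u_t,\mathcal{M}_t)$ is exactly the paper's proof: Proposition~\ref{prop:betacrit} (applied to the conditional density) handles the middle term, Lemma~\ref{lem:hausdorff} the last, and Theorem~\ref{thm:contraction} with (iv) the first. The paper also never truly absorbs the planner residual. It only notes that for finite $K$ an additive $\rho_K\,\mathrm{diam}(\hat{\mathcal{M}}_t^\alpha)$ appears.

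Your inclusion observation is the genuinely different part, and it is correct. The invariance clause of Theorem~\ref{thm:contraction}, together with (iv), puts $u_t$ in $\hat{\mathcal{M}}_t^\alpha$; this comes from the invariance clause, not from part (ii) as you wrote. Step~1 of Lemma~\ref{lem:hausdorff} then gives $\hat{\mathcal{M}}_t^\alpha\subseteq\mathcal{M}_t$, so $\mathrm{dist}(u_t,\mathcal{M}_t)=0$. That step rests on the same sup-norm bound $\|\hat p_t-p^*_t\|_\infty\le\sqrt{\epsilon_t}$ that the paper's set-approximation term already uses, so you introduce no new assumption.

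What your route buys is a fully rigorous proof that needs no hypothesis (v), with a strictly stronger conclusion. The paper even remarks that $u_t\in\hat{\mathcal{M}}_t^\alpha$ in steady state, but never combines this with Step~1. What the paper's decomposition buys instead is the explicit $G_c/(\beta_t\kappa(\xi_t))$ term, which Theorem~\ref{thm:ctx_gap} compares across Planners. Your observation shows that, under the stated hypotheses, this term is slack rather than needed. The contextual gap is therefore a gap between upper bounds, not between actual distances.

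Your second proposed fix, strengthening (v) to $\rho_K\,\mathrm{diam}\le r_\alpha-G_c/(\beta_t\kappa(\xi_t))$, is redundant. Membership $u_t\in\hat{\mathcal{M}}_t^\alpha$ already follows from (iv). On its own it also does not remove the additive remainder from the triangle inequality; only the inclusion argument does.
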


\begin{remark}[Dependencies and asymptotic tightening]\label{rem:iss_dependencies}
The bound~\eqref{eq:iss_bound} separates a statistical set-approximation term from a deterministic cost--safety term. The score error $\epsilon_t$ depends on the cumulative sample count $N_t$ and the ambient dimension $m$ (or $m+d$ for the conditional KDE), not on $\xi_t$ directly; the context enters the bound through $\kappa(\xi_t)$, which reflects the local geometry of the conditional density. As $N_t\to\infty$, $\epsilon_t\to 0$ and the bound tightens to $\alpha/c_\partial + G_c/(\beta_t\kappa(\xi_t))$.
\end{remark}

\begin{proof}
Let $\bar{u}_t := \arg\max_{u\in\hat{\mathcal{M}}_t^\alpha}\hat{p}_t(u\mid\xi_t)$ denote the maximum-density point of the learned conditional density, which is interior to $\hat{\mathcal{M}}_t^\alpha$ with $\mathrm{dist}(\bar{u}_t,\partial\hat{\mathcal{M}}_t^\alpha)\geq r_\alpha$ (Proposition~\ref{prop:betacrit}, applied to $\hat{p}_t(\cdot\mid\xi_t)$). Decompose the distance via $\bar{u}_t$ using the triangle inequality:
\begin{align}
\nonumber
     \mathrm{dist}(u_t, \mathcal{M}_t)
   \leq \|u_t - u^*_t\| + \|u^*_t - \bar{u}_t\| &+ \mathrm{dist}(\bar{u}_t, \mathcal{M}_t) \\
\label{eq:iss_decomp}
    = \underbrace{\|u_t - u^*_t\|}_{\text{planner residual}}
       + \underbrace{\|u^*_t - \bar{u}_t\|}_{\text{cost--safety offset}}
       &+ \underbrace{\mathrm{dist}(\bar{u}_t, \mathcal{M}_t)}_{\text{set approximation}}.
\end{align}

\emph{Cost--safety offset.} Applying Proposition~\ref{prop:betacrit} to the conditional density $\hat{p}_t(\cdot\mid\xi_t)$, whose barrier curvature is $\kappa(\xi_t)$ by Definition~\ref{def:barrier_curvature}, yields
\begin{equation}\label{eq:betacrit_contextual}
    \|u^*_t - \bar{u}_t\| \;\leq\; \frac{G_c}{\beta_t\,\kappa(\xi_t)},
\end{equation}
and ensures $u^*_t\in\mathrm{int}(\hat{\mathcal{M}}_t^\alpha)$ with $\mathrm{dist}(u^*_t,\partial\hat{\mathcal{M}}_t^\alpha)\geq r_\alpha-G_c/(\beta_t\kappa(\xi_t))>0$ whenever $\beta_t>\beta^*_{\mathrm{curv}}$. This is where the contextual curvature enters the bound, since the $G_c/(\beta\kappa)$ term of Proposition~\ref{prop:betacrit} measures the \emph{offset of the cost-penalized minimizer from the density peak}, not a residual of the gradient iterates.

\emph{Set approximation.} Since $\bar{u}_t\in\hat{\mathcal{M}}_t^\alpha$, Lemma~\ref{lem:hausdorff} gives
\[
    \mathrm{dist}(\bar{u}_t, \mathcal{M}_t) \;\leq\; d_H(\hat{\mathcal{M}}_t^\alpha, \mathcal{M}_t) \;\leq\; \frac{\sqrt{\epsilon_t}+\alpha}{c_\partial}.
\]

\emph{Planner residual.} By Theorem~\ref{thm:contraction}(ii), $K$ inner gradient steps warm-started from $u_{t-1}$ yield
\[
    \|u_t - u^*_t\| \;\leq\; \rho_K\,\|u_{t-1} - u^*_t\|,
    \qquad \rho_K = \sqrt{\chi}\,(1-1/\chi)^{K/2}.
\]
Since $\rho_K$ decays geometrically and $\|u_{t-1}-u^*_t\| \leq D_{\mathcal{M},t} := \mathrm{diam}(\hat{\mathcal{M}}_t^\alpha) < \infty$ (the learned superlevel set is compact as the closed $\alpha$-superlevel set of a Gaussian KDE, which vanishes at infinity; $D_{\mathcal{M},t}$ is a \emph{diameter}, distinct from the initial-distance symbol $D_0=\|u_0-u^*_0\|$ of Theorem~\ref{thm:dynamic_regret}), any tolerance $\delta_K>0$ is reached with $K\geq 2\chi\ln(\sqrt{\chi}\,D_{\mathcal{M},t}/\delta_K)$. Taking $\delta_K$ smaller than the interior margin $r_\alpha-G_c/(\beta_t\kappa(\xi_t))$ also guarantees $u_t\in\hat{\mathcal{M}}_t^\alpha$, so the score-ascent safety filter of Algorithm~\ref{alg:nfc} is inactive in steady state.

Summing the three terms of~\eqref{eq:iss_decomp} and taking $K$ large enough that the planner residual is negligible relative to the other two terms yields~\eqref{eq:iss_bound}. For finite $K$, an additive remainder $\rho_K\,\mathrm{diam}(\hat{\mathcal{M}}_t^\alpha)$ is added to the right-hand side.
\end{proof}

The bound is PPC-specific in that the residual $G_c/(\beta\kappa)$ depends on the barrier curvature $\kappa$ rather than a Lipschitz constant of the unknown dynamics $f_t$. As $N_t$ grows, $\epsilon_t \to 0$ at rate $N_t^{-2/(m+4)}$ (Theorem~\ref{thm:learning}), so the bound tightens to $G_c/(\beta\kappa)$. The adaptive schedule~\eqref{eq:beta_schedule} can then reduce $\beta$ toward $\beta^*_{\mathrm{curv}}$, allowing the Planner to exploit more of the feasible space as estimation improves.

\textbf{The value of context.}
The safety bound~\eqref{eq:iss_bound} depends on the context $\xi_t$ through $\epsilon_t$, $c_\partial$, and $\kappa(\xi_t)$. The first two are properties of the oracle density $p^*_t(\cdot\mid\xi_t)$ (complexity for the score-estimation rate and inward-normal nondegeneracy for the set-approximation constant) that any Planner inherits from the underlying geometry. The only $\xi_t$-dependent quantity the Planner itself \emph{controls} is the barrier curvature of the density it uses in its free energy, namely $\kappa(\xi_t)$ for a context-aware Planner and $\kappa_{\mathrm{marg}} := \inf_{u\in\hat{\mathcal{M}}_t^\alpha}\lambda_{\min}\!\bigl(-\nabla_u^2\ln\hat p_t(u)\bigr)$ (i.e., Definition~\ref{def:barrier_curvature} applied to the marginal density) for a context-blind Planner that replaces $\hat p_t(\cdot\mid\xi_t)$ with the marginal $\hat p_t(u) := \int \hat p_t(u\mid\xi)\,\pi(\xi)\,d\xi$, where $\pi(\xi)$ denotes the \emph{context prior}---the $\xi$-marginal of the joint product-kernel KDE of Section~\ref{sec:kde_instantiation}, i.e., $\pi(\xi) = \tfrac{1}{N_t}\sum_{i=1}^{N_t} K_{h_\xi}(\xi - \xi_i)$---and the distribution underlying every expectation $\mathbb{E}_\pi[\,\cdot\,]$ over contexts below (to be distinguished from the Planner's policy $\pi_t$, which is indexed by time). Since $\kappa$ enters the cost--safety residual $G_c/(\beta_t\kappa)$ inversely, the value of context reduces to comparing these two curvatures. Proposition~\ref{prop:ctx_curvature} records the variational-inference decomposition that underlies the comparison, and Theorem~\ref{thm:ctx_gap} turns it into a quantitative lower bound on the gap between the safety guarantees.

\begin{proposition}[Contextual Curvature Gain]
\label{prop:ctx_curvature}
With $\pi(\xi)$ the context prior defined above, let $w(\xi\mid u) \propto \hat p_t(u\mid\xi)\,\pi(\xi)$ be the posterior over contexts given action $u$, and $\hat s_t(u\mid\xi) := \nabla_u\ln\hat p_t(u\mid\xi)$ the conditional score. The marginal log-density Hessian $\nabla_u^2\ln\hat p_t(u)$ equals
\begin{equation}\label{eq:mixture_hessian}
\mathbb{E}_{w(\xi\mid u)}\!\bigl[\nabla_u^2\ln\hat p_t(u\mid\xi)\bigr]
         \;+\; \mathrm{Cov}_{w(\xi\mid u)}\!\bigl(\hat s_t(u\mid\xi)\bigr).
\end{equation}
\end{proposition}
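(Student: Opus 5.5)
The identity is the standard log-of-mixture Hessian (the Fisher-information form of the law of total variance). The plan is to differentiate $\ln\hat p_t(u)=\ln\int \hat p_t(u\mid\xi)\,\pi(\xi)\,d\xi$ twice in $u$, keeping everything in terms of the posterior $w(\xi\mid u)=\hat p_t(u\mid\xi)\pi(\xi)/\hat p_t(u)$.

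First I justify differentiating under the integral. Each $\hat p_t(\cdot\mid\xi)$ is a normalized Gaussian product-kernel ratio, hence $C^\infty$ in $u$. Its first and second $u$-derivatives are dominated by integrable envelopes in $\xi$, because $\pi$ is a finite Gaussian mixture and the kernel derivatives are polynomial times Gaussian. Also $\hat p_t(u)>0$ everywhere, so $w$ and all logarithms are well defined.

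Next I compute the gradient. Using $\nabla_u\hat p_t(u\mid\xi)=\hat p_t(u\mid\xi)\,\hat s_t(u\mid\xi)$,
\[
\nabla_u\ln\hat p_t(u)=\frac{\int \hat p_t(u\mid\xi)\hat s_t(u\mid\xi)\pi(\xi)\,d\xi}{\hat p_t(u)}=\mathbb{E}_{w(\xi\mid u)}\bigl[\hat s_t(u\mid\xi)\bigr].
\]
So the marginal score is the posterior-averaged conditional score.

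I then differentiate this expectation again. Rather than differentiate $w$ directly, I write $\nabla_u^2\ln\hat p_t=\nabla_u^2\hat p_t/\hat p_t-(\nabla_u\ln\hat p_t)(\nabla_u\ln\hat p_t)^\top$. The conditional second derivative satisfies $\nabla_u^2\hat p_t(u\mid\xi)=\hat p_t(u\mid\xi)\bigl[\nabla_u^2\ln\hat p_t(u\mid\xi)+\hat s_t\hat s_t^\top\bigr]$. Integrating against $\pi$ and dividing by $\hat p_t(u)$ gives
\[
\frac{\nabla_u^2\hat p_t(u)}{\hat p_t(u)}=\mathbb{E}_w\bigl[\nabla_u^2\ln\hat p_t(u\mid\xi)\bigr]+\mathbb{E}_w\bigl[\hat s_t\hat s_t^\top\bigr].
\]
Subtracting $\mathbb{E}_w[\hat s_t]\,\mathbb{E}_w[\hat s_t]^\top$ turns the second term into $\mathrm{Cov}_w(\hat s_t(u\mid\xi))$, which is exactly \eqref{eq:mixture_hessian}.

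As an interpretive remark, negating the identity gives $\mathcal{I}_{\hat p_t}(u)=\mathbb{E}_w[\mathcal{I}_{\hat p_t(\cdot\mid\xi)}(u)]-\mathrm{Cov}_w(\hat s_t)$. This is the variational-inference reading: the covariance is the curvature lost by marginalizing out the latent context, and it is what feeds Theorem~\ref{thm:ctx_gap}.

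The only step needing care is the interchange of differentiation and integration. The algebra is routine once the conditional-second-derivative identity is used instead of differentiating $w$ directly. The argument should also stay at the level of pointwise identities for fixed $u$, so no convexity or single-basin hypothesis from Assumption~\ref{ass:logconcave} is needed.
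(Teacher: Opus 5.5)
Your proof is correct, but you organize the second differentiation differently from the paper.

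\emph{The paper's route.} After the same marginal-score identity $\nabla_u\ln\hat p_t(u)=\mathbb{E}_{w}[\hat s_t(u\mid\xi)]$, the paper derives the centered posterior log-derivative $\nabla_u\ln w(\xi\mid u)=\hat s_t(u\mid\xi)-\mathbb{E}_w[\hat s_t(u\mid\xi)]$. This identity absorbs the $u$-dependence of the normalizer $\hat p_t(u)$ inside $w$. The paper then differentiates $\int w(\xi\mid u)\,\hat s_t(u\mid\xi)\,d\xi$ by the product rule. The covariance comes from the term $\int \nabla_u w\,\hat s_t^{\top}\,d\xi$, which measures how sensitive the posterior weights are to $u$. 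The averaged conditional Hessian comes from $\int w\,\nabla_u\hat s_t\,d\xi$.

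\emph{Your route.} You never differentiate $w$. You apply $\nabla^2\ln p=\nabla^2p/p-(\nabla\ln p)(\nabla\ln p)^{\top}$ at both the conditional and marginal levels. The second moment $\mathbb{E}_w[\hat s_t\hat s_t^{\top}]$ then emerges from integrating $\nabla_u^2\hat p_t(u\mid\xi)=\hat p_t(u\mid\xi)[\nabla_u^2\ln\hat p_t(u\mid\xi)+\hat s_t\hat s_t^{\top}]$, and the squared mean from the marginal correction.

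\emph{What each buys.} Your argument is more elementary. It also makes the derivative--integral interchange transparent, because you only differentiate under the $u$-independent measure $\pi(\xi)\,d\xi$. In fact $\hat p_t(u\mid\xi)\pi(\xi)$ is exactly the joint product-kernel KDE, a finite Gaussian sum, so the interchange is immediate. The paper invokes the product rule under the integral sign without justifying this. The paper's route, in return, exhibits the centered-score identity explicitly. That identity is what supports its variational-inference reading of $\mathrm{Cov}_w$ as the sensitivity of the context posterior to the action, i.e., the curvature lost by marginalizing out $\xi$.
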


\begin{proof}
Write $q(\xi) := \hat p_t(u\mid\xi)$, so that $\hat p_t(u) = \mathbb{E}_\pi[q(\xi)]$ and $w(\xi\mid u) = q(\xi)\pi(\xi)/\hat p_t(u)$. Using $\nabla_u q(\xi) = q(\xi)\,\hat s_t(u\mid\xi)$ and differentiating $\ln\hat p_t(u) = \ln\mathbb{E}_\pi[q(\xi)]$ once gives the marginal-score identity
\begin{equation}\label{eq:marg_score}
    \nabla_u\ln\hat p_t(u)
    \;=\; \frac{\mathbb{E}_\pi[\nabla_u q(\xi)]}{\hat p_t(u)}
    \;=\; \mathbb{E}_{w(\xi\mid u)}\!\bigl[\hat s_t(u\mid\xi)\bigr].
\end{equation}
Differentiating $\ln w(\xi\mid u) = \ln q(\xi) + \ln\pi(\xi) - \ln\hat p_t(u)$ in $u$ and substituting~\eqref{eq:marg_score} yields the standard variational-inference identity
\begin{equation}\label{eq:post_logderiv}
    \nabla_u\ln w(\xi\mid u)
    \;=\; \hat s_t(u\mid\xi) - \mathbb{E}_{w(\xi\mid u)}\!\bigl[\hat s_t(u\mid\xi)\bigr],
\end{equation}
i.e., the centered conditional score. Differentiating~\eqref{eq:marg_score} once more and writing the expectation explicitly as $\nabla_u\ln\hat p_t(u) = \int w(\xi\mid u)\,\hat s_t(u\mid\xi)\,d\xi$, the product rule under the integral sign gives
\begin{align*}
    \nabla_u^2\ln\hat p_t(u)
    &= \underbrace{\int [\nabla_u w(\xi\mid u)]\,\hat s_t(u\mid\xi)^{\!\top}\,d\xi}_{=:\,T_1} \\
    &\quad + \underbrace{\int w(\xi\mid u)\,\nabla_u\hat s_t(u\mid\xi)\,d\xi}_{=:\,T_2}.
\end{align*}
For $T_1$, the chain-rule identity $\nabla_u w(\xi\mid u) = w(\xi\mid u)\,\nabla_u\ln w(\xi\mid u)$ combined with the posterior log-derivative~\eqref{eq:post_logderiv} yields $\nabla_u w(\xi\mid u) = w(\xi\mid u)\bigl(\hat s_t(u\mid\xi) - \mathbb{E}_w[\hat s_t(u\mid\xi)]\bigr)$, so
\begin{align*}
    T_1 &= \mathbb{E}_w\!\bigl[\bigl(\hat s_t(u\mid\xi) - \mathbb{E}_w[\hat s_t(u\mid\xi)]\bigr)\,\hat s_t(u\mid\xi)^{\!\top}\bigr] \\
        &= \mathbb{E}_w\!\bigl[\hat s_t(u\mid\xi)\,\hat s_t(u\mid\xi)^{\!\top}\bigr] - \mathbb{E}_w[\hat s_t(u\mid\xi)]\,\mathbb{E}_w[\hat s_t(u\mid\xi)]^{\!\top} \\
        &= \mathrm{Cov}_w\!\bigl(\hat s_t(u\mid\xi)\bigr),
\end{align*}
where the second line pulls the $\xi$-independent factor $\mathbb{E}_w[\hat s_t(u\mid\xi)]$ out of the expectation. For $T_2$, the conditional score satisfies $\nabla_u\hat s_t(u\mid\xi) = \nabla_u^2\ln\hat p_t(u\mid\xi)$ by definition, so $T_2 = \mathbb{E}_w\!\bigl[\nabla_u^2\ln\hat p_t(u\mid\xi)\bigr]$. Adding $T_1$ and $T_2$ yields~\eqref{eq:mixture_hessian}.
\end{proof}

Proposition~\ref{prop:ctx_curvature} is the classical ``total-Hessian'' decomposition: the marginal log-density Hessian equals the posterior-averaged conditional Hessian plus the posterior score covariance. Flipping signs to the positive-semidefinite curvature $-\nabla_u^2\ln\hat p_t \succeq 0$ used elsewhere in the paper, this says that marginalizing \emph{subtracts} $\mathrm{Cov}_w \succeq 0$ from the expected conditional curvature, flattening the barrier pointwise in the L\"owner order; the gap is exactly the posterior variance of the conditional score, which vanishes when the posterior concentrates on a single context and grows as different contexts push the conditional score in substantially different directions. Theorem~\ref{thm:ctx_gap} converts this PSD decomposition into a scalar safety-guarantee gap via Weyl's inequality.

\begin{theorem}[Contextual Safety Gap]
\label{thm:ctx_gap}
Let $B_{\mathrm{ctx}}$ and $B_{\mathrm{marg}}$ denote the Theorem~\ref{thm:iss_rigorous} steady-state safety bounds for the context-aware and context-blind Planners, both sharing the statistical error $\epsilon_t$, boundary constant $c_\partial$, and feasible set $\hat{\mathcal{M}}_t^\alpha$, so that $\kappa_{\mathrm{marg}} := \inf_{u\in\hat{\mathcal{M}}_t^\alpha}\lambda_{\min}(-\nabla_u^2\ln\hat p_t(u))$. Suppose the uniform identifiable-regime condition
\begin{equation}\label{eq:ident_hess}
    \mathbb{E}_{w(\xi\mid u)}\!\bigl[\nabla_u^2\ln\hat p_t(u\mid\xi)\bigr]
    \;=\; \nabla_u^2\ln\hat p_t(u\mid\xi_t)
    \quad \forall\,u\in\hat{\mathcal{M}}_t^\alpha
\end{equation}
holds, and let $\sigma_t^2 := \inf_{u\in\hat{\mathcal{M}}_t^\alpha}\lambda_{\min}\!\bigl(\mathrm{Cov}_{w(\xi\mid u)}(\hat s_t(u\mid\xi))\bigr)\geq 0$. Then
\begin{equation}\label{eq:ctx_gap}
    B_{\mathrm{marg}} - B_{\mathrm{ctx}}
    \;\geq\;
    \frac{G_c\,\sigma_t^2}{\beta_t\,\kappa(\xi_t)\,\kappa_{\mathrm{marg}}}
    \;\geq\; 0,
\end{equation}
with strict inequality whenever $\sigma_t^2 > 0$.
\end{theorem}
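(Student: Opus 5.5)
Since $B_{\mathrm{ctx}}$ and $B_{\mathrm{marg}}$ share $\epsilon_t$, $c_\partial$ and $\hat{\mathcal{M}}_t^\alpha$, the set-approximation term $(\sqrt{\epsilon_t}+\alpha)/c_\partial$ of~\eqref{eq:iss_bound} cancels in the difference. Only the cost--safety residuals survive:
\[
B_{\mathrm{marg}}-B_{\mathrm{ctx}} = \frac{G_c}{\beta_t}\Bigl(\frac{1}{\kappa_{\mathrm{marg}}}-\frac{1}{\kappa(\xi_t)}\Bigr) = \frac{G_c\,(\kappa(\xi_t)-\kappa_{\mathrm{marg}})}{\beta_t\,\kappa(\xi_t)\,\kappa_{\mathrm{marg}}}.
\]
Both curvatures are taken positive, by Assumption~\ref{ass:logconcave} applied to each density. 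The whole theorem then reduces to the scalar curvature comparison $\kappa(\xi_t)-\kappa_{\mathrm{marg}}\geq\sigma_t^2$.

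To get that comparison, I would first apply Proposition~\ref{prop:ctx_curvature} at each $u\in\hat{\mathcal{M}}_t^\alpha$ and substitute the identifiable-regime condition~\eqref{eq:ident_hess}. Negating both sides gives the pointwise L\"owner identity
\[
-\nabla_u^2\ln\hat p_t(u) = \mathcal{I}_{\hat p(\cdot\mid\xi_t)}(u) - \mathrm{Cov}_{w(\xi\mid u)}\bigl(\hat s_t(u\mid\xi)\bigr).
\]
For the $\lambda_{\min}$ of the left side, Weyl's inequality $\lambda_{\min}(A-C)\leq\lambda_{\min}(A)-\lambda_{\min}(C)$ (valid for symmetric $A$ and PSD $C$) is the natural tool, used with $A=\mathcal{I}_{\hat p(\cdot\mid\xi_t)}(u)$ and $C=\mathrm{Cov}_w$. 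It gives
\[
\lambda_{\min}\bigl(-\nabla^2\ln\hat p_t(u)\bigr)\leq\lambda_{\min}\bigl(\mathcal{I}_{\hat p(\cdot\mid\xi_t)}(u)\bigr)-\sigma_t^2 .
\]
I would derive this from the Rayleigh quotient at the minimizing eigenvector of $A$, which is cleaner than quoting Weyl.

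The main obstacle is passing from this pointwise inequality to the infima over $\hat{\mathcal{M}}_t^\alpha$. Taking the infimum over $u$ on both sides does work, because $\kappa_{\mathrm{marg}}\leq\lambda_{\min}(-\nabla^2\ln\hat p_t(u))$ holds for every $u$. Choosing $u_0$ as (a minimizing sequence for) the point attaining $\kappa(\xi_t)$, which exists by compactness and continuity of the KDE Hessian, gives
\[
\kappa_{\mathrm{marg}}\leq\kappa(\xi_t)-\lambda_{\min}\bigl(\mathrm{Cov}_w(u_0)\bigr)\leq\kappa(\xi_t)-\sigma_t^2 ,
\]
since $\sigma_t^2$ is the infimum of that covariance eigenvalue. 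So the direction of the infima is favorable, and I only need to be careful to evaluate at the context-curvature minimizer rather than the marginal one.

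Substituting $\kappa(\xi_t)-\kappa_{\mathrm{marg}}\geq\sigma_t^2$ into the difference of residuals yields~\eqref{eq:ctx_gap}. Nonnegativity follows from $\sigma_t^2\geq0$ and positivity of the denominators, and strictness follows when $\sigma_t^2>0$. I would also note that $\kappa_{\mathrm{marg}}>0$ must be assumed, via Assumption~\ref{ass:logconcave} for the marginal density, so that $B_{\mathrm{marg}}$ is finite.
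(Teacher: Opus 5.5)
Your proposal is correct and follows the paper's proof essentially step for step. You cancel the shared set-approximation term to reduce everything to $\kappa(\xi_t)-\kappa_{\mathrm{marg}}\geq\sigma_t^2$, combine Proposition~\ref{prop:ctx_curvature} with the identifiable-regime condition to get $-\nabla_u^2\ln\hat p_t(u)=\mathcal{I}_{\hat p(\cdot\mid\xi_t)}(u)-\mathrm{Cov}_{w(\xi\mid u)}(\hat s_t(u\mid\xi))$, apply Weyl pointwise, and pass to infima over $\hat{\mathcal{M}}_t^\alpha$. Your Rayleigh-quotient derivation and your evaluation at the context-curvature minimizer just unpack the paper's Weyl step and its ``take the infimum of both sides''; your note that $\kappa_{\mathrm{marg}}>0$ must be assumed makes explicit what the paper uses implicitly when it applies Theorem~\ref{thm:iss_rigorous} to the context-blind Planner.
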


\begin{proof}
Theorem~\ref{thm:iss_rigorous} applied separately to the two Planners gives
\[
    B_{\mathrm{ctx}} = \tfrac{\sqrt{\epsilon_t}+\alpha}{c_\partial} + \tfrac{G_c}{\beta_t\kappa(\xi_t)},
    \quad
    B_{\mathrm{marg}} = \tfrac{\sqrt{\epsilon_t}+\alpha}{c_\partial} + \tfrac{G_c}{\beta_t\kappa_{\mathrm{marg}}},
\]
and the common pre-residual cancels on subtraction:
\begin{equation}\label{eq:bound_gap}
    B_{\mathrm{marg}} - B_{\mathrm{ctx}}
    \;=\; \frac{G_c\bigl(\kappa(\xi_t) - \kappa_{\mathrm{marg}}\bigr)}{\beta_t\,\kappa_{\mathrm{marg}}\,\kappa(\xi_t)}.
\end{equation}
It thus suffices to show $\kappa(\xi_t) - \kappa_{\mathrm{marg}} \geq \sigma_t^2$. Fix $u\in\hat{\mathcal{M}}_t^\alpha$ and write $H(u) := -\nabla_u^2\ln\hat p_t(u)$, $\Sigma(u) := \mathrm{Cov}_{w(\xi\mid u)}(\hat s_t(u\mid\xi))\succeq 0$, and $\kappa(\xi_t)(u) := \lambda_{\min}(-\nabla_u^2\ln\hat p_t(u\mid\xi_t))$ for the pointwise context-aware curvature. Proposition~\ref{prop:ctx_curvature} combined with the identifiable-regime condition~\eqref{eq:ident_hess} yields
\[
    H(u) \;=\; -\nabla_u^2\ln\hat p_t(u\mid\xi_t) \;-\; \Sigma(u).
\]
Weyl's inequality for Hermitian matrices states $\lambda_{\min}(A+B) \leq \lambda_{\min}(A) + \lambda_{\max}(B)$. Applying it with $A = -\nabla_u^2\ln\hat p_t(u\mid\xi_t)$ and $B = -\Sigma(u)$, and using $\lambda_{\max}(-\Sigma(u)) = -\lambda_{\min}(\Sigma(u))$ together with $\lambda_{\min}(\Sigma(u)) \geq \sigma_t^2$,
\begin{equation}\label{eq:weyl_pointwise}
    \lambda_{\min}(H(u))
    \;\leq\; \kappa(\xi_t)(u) - \lambda_{\min}(\Sigma(u))
    \;\leq\; \kappa(\xi_t)(u) - \sigma_t^2.
\end{equation}
Taking the infimum over $u\in\hat{\mathcal{M}}_t^\alpha$ of both sides of~\eqref{eq:weyl_pointwise}, and using $\kappa_{\mathrm{marg}} = \inf_u\lambda_{\min}(H(u))$ and $\kappa(\xi_t) = \inf_u\kappa(\xi_t)(u)$,
\[
    \kappa_{\mathrm{marg}} \;\leq\; \kappa(\xi_t) - \sigma_t^2,
\]
i.e., $\kappa(\xi_t) - \kappa_{\mathrm{marg}} \geq \sigma_t^2$. Substituting into~\eqref{eq:bound_gap} yields~\eqref{eq:ctx_gap}; strictness is immediate from $\sigma_t^2 > 0$.
\end{proof}

Theorem~\ref{thm:ctx_gap} turns the informal intuition behind Problem~$\mathcal{P}$ into a quantitative lower bound on the safety-guarantee gap: the context-aware Planner's Theorem~\ref{thm:iss_rigorous} bound improves over the context-blind one by at least $G_c\,\sigma_t^2/(\beta_t\kappa(\xi_t)\kappa_{\mathrm{marg}})$, where $\sigma_t^2$ is the uniform infimum (over the feasible set) of the smallest posterior score-covariance eigenvalue. This gap vanishes when the posterior $w$ collapses to a single context, and is largest when different contexts place the Planner at substantially different conditional scores, which is exactly the regime in which the contextual advantage is most visible in~\Cref{fig:contextual}.

\begin{figure*}[t]
\centering
\begin{tikzpicture}[
    block/.style={draw, rounded corners, minimum height=1.0cm, minimum width=2.1cm, align=center, font=\footnotesize},
    arr/.style={-{Stealth[length=2.5mm]}, thick},
    lbl/.style={font=\scriptsize, fill=white, inner sep=1pt},
]
\node[block, fill=blue!8] (env) {Environment\\[-1pt]$x_{t+1}{=}f_t(x_t,u_t)$};
\node[block, fill=blue!8, right=1.75cm of env] (ora) {World Model\\[-1pt]$\mathcal{W}_t$};
\node[block, fill=blue!8, right=1.75cm of ora] (kde) {KDE\\[-1pt]$\hat{p}_t,\hat{s}_t$};
\node[block, fill=orange!10, right=1.75cm of kde] (ppc) {PPC\\[-1pt]$K$ grad steps};
\node[block, fill=blue!8, right=1.75cm of ppc] (flt) {Safety\\[-1pt]Filter};

\draw[arr] (env) -- node[lbl,below] {$x_t$} (ora);
\draw[arr] (ora) -- node[lbl,below] {$\{u_i\}$} (kde);
\draw[arr] (kde) -- node[lbl,below] {$\hat{s}_t,\beta_t$} (ppc);
\draw[arr] (ppc) -- node[lbl,below] {$u_{\mathrm{ppc}}$} (flt);
\draw[arr, rounded corners=5pt] (flt.south) -- ++(0,-0.45) -| node[lbl,below,pos=0.25] {$u_t$} (env.south);

\node[draw, dashed, rounded corners, fit=(ora)(kde)(flt), inner sep=7pt] (simfit) {};
\node[font=\footnotesize, anchor=south west, inner sep=0pt]
    at ([yshift=3pt]simfit.north west) {\textbf{Simulator} (Alg.~\ref{alg:nfc})};
\node[draw, dashed, rounded corners, fit=(ppc), inner sep=7pt] (plfit) {};
\node[font=\footnotesize, anchor=south, inner sep=0pt]
    at ([yshift=3pt]plfit.north) {\textbf{Planner} (Alg.~\ref{alg:ppc})};
\end{tikzpicture}
\caption{Per-step information flow (Algorithms~\ref{alg:ppc}--\ref{alg:nfc}). The Simulator draws $N$ feasibility samples, fits a KDE to produce $(\hat{s}_t,\beta_t)$, and after receiving the Planner's candidate $u_{\mathrm{ppc}}$, applies a safety filter (score-ascent retraction if $\hat{p}_t(u_{\mathrm{ppc}}){<}\alpha$) before returning $u_t$ to the environment.}
\label{fig:alg_flow}
\end{figure*}
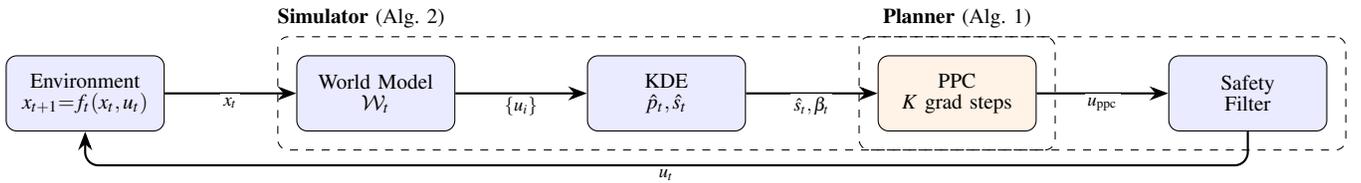

\subsection{Algorithms}

The control law is realized through two interacting components whose information flow is depicted in~\Cref{fig:alg_flow}. The information sets $\mathcal{I}_t^{\mathrm{Sim}}$ and $\mathcal{I}_t^{\mathrm{Pl}}$ defined in~\eqref{eq:info_sim}--\eqref{eq:info_pl} make the asymmetry concrete. The Simulator's algorithm may read $x_t$, $\mathcal{W}_t$, and $\mathcal{M}_t$, whereas the Planner's algorithm may use only $\hat{s}_t$ and $c$. At each time step $t$, the Simulator (Algorithm~\ref{alg:nfc}) draws $N$ feasibility samples from the oracle $\mathcal{W}_t$, updates the density model $\hat{p}_t$, and transmits the score function $\hat{s}_t$ and the stiffness parameter $\beta_t$ to the Planner. The Planner (Algorithm~\ref{alg:ppc}) then performs $K$ gradient descent steps on the PPC free energy using $\hat{s}_t$, and returns the candidate action $u_{\mathrm{ppc}}$ to the Simulator for safety projection. If the candidate lies outside $\hat{\mathcal{M}}_t^\alpha$, the Simulator retracts it onto the learned manifold via score ascent before applying it to the plant.

\begin{algorithm}[!h]
\caption{Planner: Score-Based PPC}
\label{alg:ppc}
\begin{algorithmic}[1]
\REQUIRE Score function $\hat{s}_t(\cdot)$ from Simulator, cost $c(\cdot)$, stiffness $\beta_t$, step size $\eta$, inner steps $K$, previous action $u_{t-1}$.
\ENSURE Candidate action $u_{\mathrm{ppc}}$.

\STATE $u \leftarrow u_{t-1}$ \hfill \textcolor{teal}{// Warm start}
\FOR{$k = 1, \ldots, K$}
    \STATE $u \leftarrow u - \eta \bigl(\nabla_u c(u) - \beta_t\, \hat{s}_t(u)\bigr)$
\ENDFOR
\RETURN $u_{\mathrm{ppc}} \leftarrow u$
\end{algorithmic}
\end{algorithm}

\begin{algorithm}[!h]
\caption{Simulator: Score Oracle and Safety Filter}
\label{alg:nfc}
\begin{algorithmic}[1]
\REQUIRE State $x_t$, feasibility oracle $\mathcal{W}_t$, density model $\hat{p}_{t-1}$, candidate $u_{\mathrm{ppc}}$ from Planner, threshold $\alpha$, retraction steps $J$, retraction rate $\eta_r$, stiffness-schedule constant $C>0$ (\Cref{eq:beta_schedule}).
\ENSURE Safe action $u_t \in \hat{\mathcal{M}}_t^\alpha$, updated score $\hat{s}_t(\cdot)$, stiffness $\beta_t$.

\STATE Draw $N$ feasibility samples $\{u_i\} \sim \mathcal{W}_t(x_t, \cdot)$
\STATE Update density model: $\hat{p}_t \leftarrow \mathrm{KDE}(\{u_i\} \cup \text{buffer})$
\STATE Compute score $\hat{s}_t(u) = \nabla_u \ln \hat{p}_t(u)$
\STATE Set $\beta_t = \beta^*_{\mathrm{curv}}\bigl(1 + C/\sqrt{N_t}\bigr)$
\STATE \textbf{Transmit} $\hat{s}_t(\cdot)$ and $\beta_t$ to Planner
\STATE Receive $u_{\mathrm{ppc}}$ from Planner
\IF{$\hat{p}_t(u_{\mathrm{ppc}}) \geq \alpha$}
    \STATE $u_t \leftarrow u_{\mathrm{ppc}}$
\ELSE
    \STATE $u_t \leftarrow u_{\mathrm{ppc}}$
    \FOR{$j = 1, \ldots, J$}
        \STATE $u_t \leftarrow u_t + \eta_r\, \hat{s}_t(u_t)$ \hfill \textcolor{teal}{// Score ascent}
        \IF{$\hat{p}_t(u_t) \geq \alpha$}
            \STATE \textbf{break}
        \ENDIF
    \ENDFOR
\ENDIF
\STATE Apply $u_t$ to plant
\end{algorithmic}
\end{algorithm}

\subsection{Computational Properties}

\subsubsection{Compositionality}
For composite manifolds $\mathcal{M} = \bigcap_i \mathcal{M}_i$, the total barrier decomposes additively and $\hat{s}_{\mathrm{total}} = \sum_i \hat{s}_i$, enabling modular fusion of heterogeneous constraints.

\subsubsection{Scalability}
Each PPC gradient step requires evaluating the KDE score $\hat{s}_t(u)$ at a single point, which costs $O(Nm)$ where $N$ is the sample count and $m$ is the control dimension, plus an $O(m)$ gradient update. In contrast, CBF-QP methods~\cite{ames2016control} must form and solve a quadratic program whose per-step cost is $O(m^3)$ (dominated by the dense linear system solve), with additional overhead when many constraints are active. The first-order structure of PPC avoids this cubic dependence on $m$, and the score evaluation is embarrassingly parallel across samples.

\subsubsection{Online Refinement}
Theorem~\ref{thm:learning} gives the integrated score error rate $\epsilon_t = O(N_t^{-2/(m+4)})$, so the set-approximation term of the safety bound in Theorem~\ref{thm:iss_rigorous} shrinks as the cumulative sample count $N_t$ grows. The decreasing statistical error then allows the Planner to drive the stiffness $\beta_t$ toward its critical value $\beta^*_{\mathrm{curv}}$, approached from above so that the strict inequality required by Proposition~\ref{prop:betacrit} continues to hold. Lowering $\beta_t$ relaxes the score penalty and lets the Planner exploit a larger portion of the manifold interior, trading conservative safety margins for improved task performance once the data have certified the tighter bound.

\begin{figure*}[t]
    \centering
    \includegraphics[width=\textwidth]{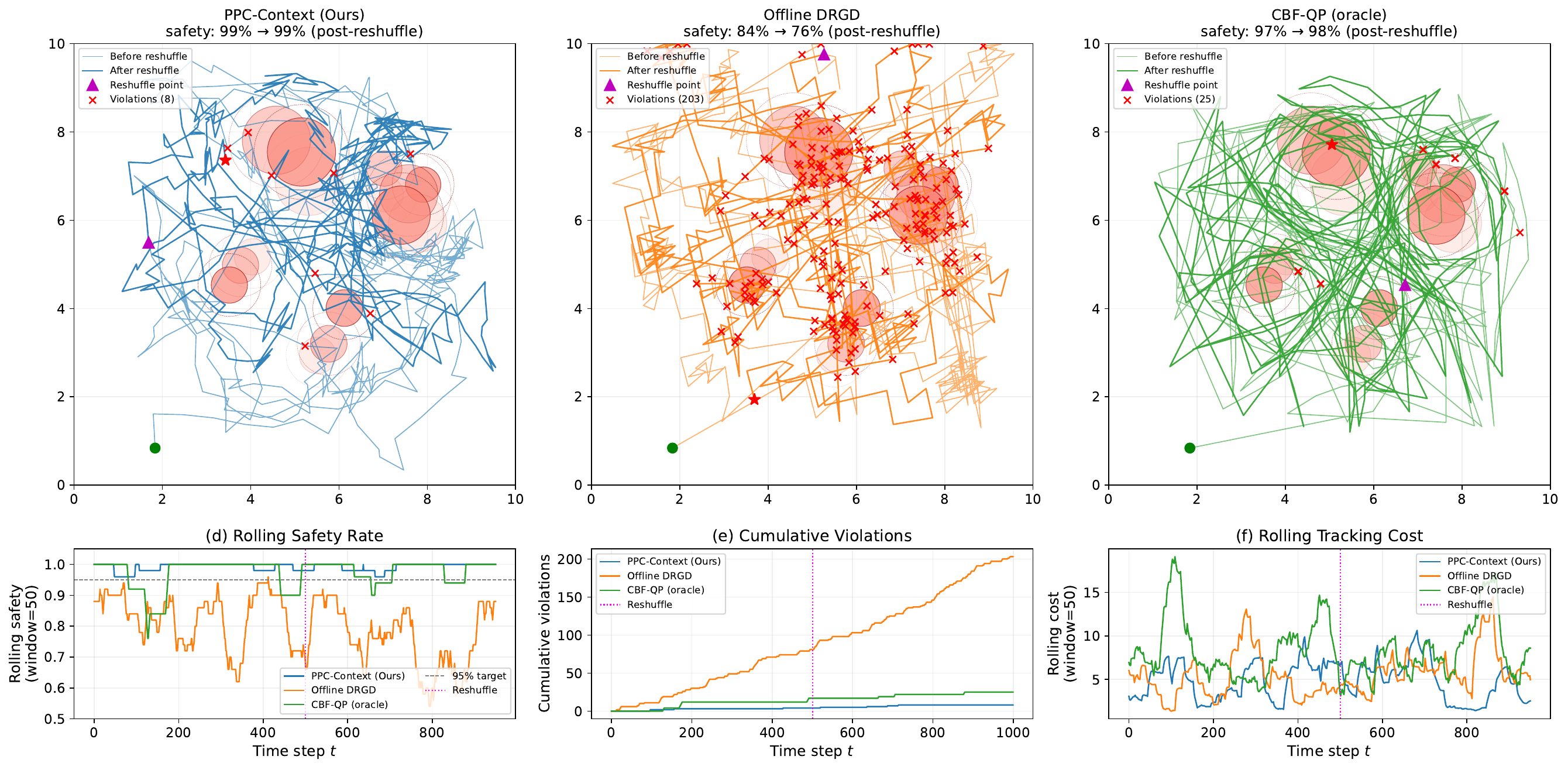}
    \caption{Contextual adaptation under obstacle reshuffle ($T{=}1000$, $N{=}300$ samples/step, reshuffle at $t{=}T/2$). \textbf{Top row:} Trajectories colored by phase (lighter before reshuffle, darker after); red~$\times$ markers indicate safety violations. \textbf{Left:}~PPC-Context~(Ours) leverages the context signal $\xi_t$ to rapidly adapt its density model after the manifold shift, achieving safety competitive with the oracle baseline using only black-box samples. \textbf{Center:}~Offline DRGD uses a frozen score pretrained on $500$ samples at $t{=}0$; after the reshuffle its stale model fails to track the new manifold, producing frequent violations. \textbf{Right:}~CBF-QP~(oracle) has direct access to exact obstacle positions and radii at every step, which is an information advantage unavailable in the black-box setting. \textbf{Bottom row:} (d)~Rolling safety rate with reshuffle marked (magenta line); (e)~cumulative violations; (f)~rolling tracking cost. PPC-Context approaches oracle-level safety using only simulator samples, while offline pretraining without online updates collapses after the shift.}
    \label{fig:trajectories}
\end{figure*}

\section{Experiments}
\label{sec:experiments}

We validate the online score-based framework on a 2D robot navigation task with dynamic obstacles, designed to exercise the core challenges addressed by this paper, namely a \textit{non-convex}, \textit{time-varying} feasibility manifold that is \textit{unknown} to the central planner and must be learned online from black-box interactions.

\begin{figure*}[t]
    \centering
    \includegraphics[width=0.96\textwidth]{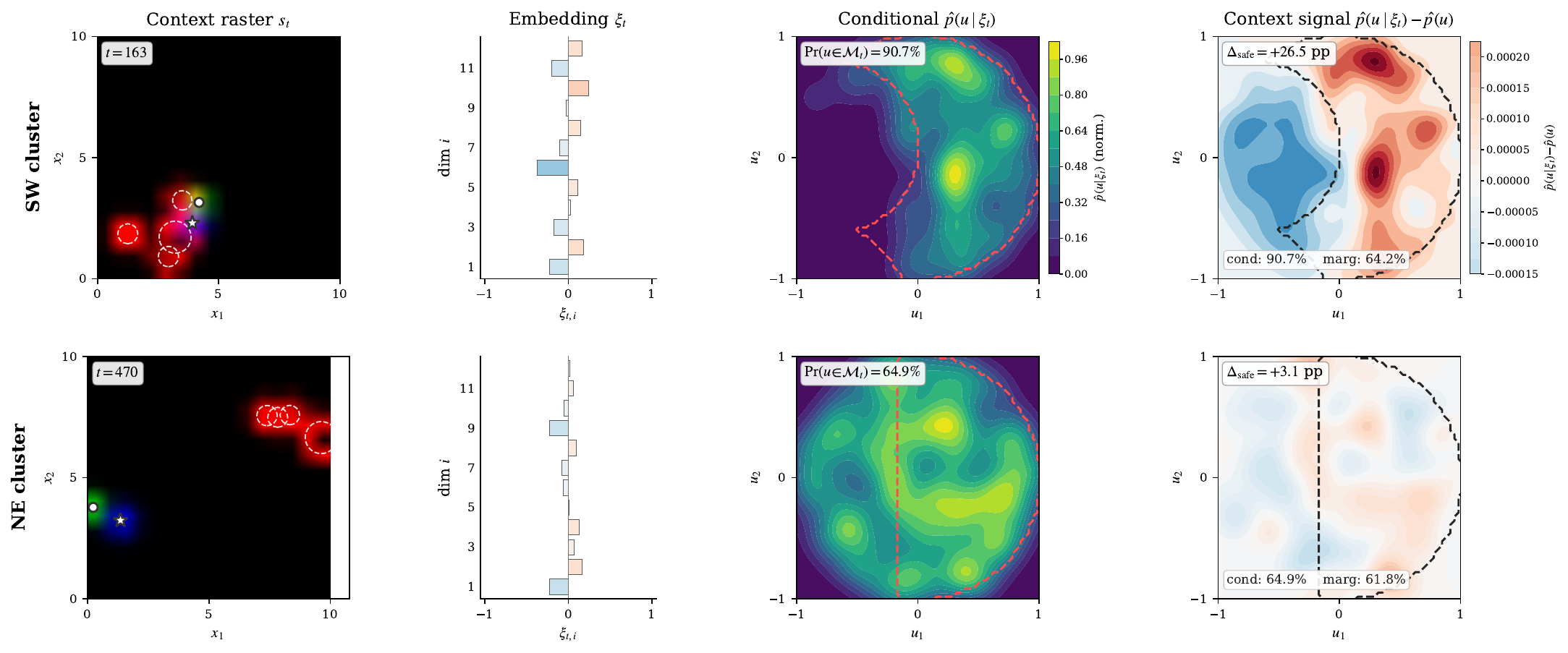}
    \caption{Contextual observation pipeline at the most constraining warmed-up step of two structurally opposite obstacle modes (top: SW cluster; bottom: NE cluster), out of the four quadrant-clustered modes the experiment cycles through every $40$ steps. \textbf{Column~1:} three-channel raster $s_t$ (red~=~obstacles, green~=~robot, blue~=~goal). \textbf{Column~2:} embedding $\xi_t = \tanh(R\,\mathrm{vec}(s_t))$; the two rows carry visibly distinct patterns, so the context kernel can separate the modes. \textbf{Column~3:} conditional density $\hat p(u \mid \xi_t)$ over the admissible action disk, with feasibility boundary $\partial\mathcal{M}_t$ overlaid in red dashed; annotations give the conditional feasibility-mass. \textbf{Column~4:} \emph{context signal} $\hat p(u \mid \xi_t) - \hat p(u)$---red adds mass, blue suppresses it. The key quantity $\Delta_{\mathrm{safe}} = \Pr(u\!\in\!\mathcal{M}_t\mid \hat p(\cdot\mid\xi_t)) - \Pr(u\!\in\!\mathcal{M}_t\mid \hat p(\cdot))$ reported above each Column~4 panel measures the \emph{directional} value of context---the excess feasibility-mass the conditional concentrates inside $\mathcal{M}_t$ relative to the pooled marginal---and is the empirical counterpart of the strictly positive $\sigma_t^2$ that drives the safety gap in Theorem~\ref{thm:ctx_gap}.}
    \label{fig:context_vis}
\end{figure*}

\subsection{Problem Setup}
\label{sec:exp_setup}

A robot navigates a bounded workspace $[0, 10]^2$ to reach a goal position while avoiding $5$ moving circular obstacles. The implementation uses single-integrator dynamics $q_{t+1} = q_t + u_t$ with $\|u_t\| \leq u_{\max} = 1.0$, which yields a non-convex feasible action set at each step while keeping the planning problem two-dimensional; the same interface extends to higher-order dynamics by letting $u_t$ denote the first control block passed to the Simulator. Obstacle $k$ has radius $r_k \in [0.4, 0.8]$ and its center $o_k(t) \in \mathbb{R}^2$ follows a Lissajous curve:
\begin{equation}
\label{eq:obstacle_motion}
    o_k(t) = c_k + \begin{bmatrix} A_k \sin(\omega_k t + \phi_k) \\ B_k \cos(\nu_k t + \psi_k) \end{bmatrix},
\end{equation}
where $c_k$ is a base position and $(A_k, B_k, \omega_k, \nu_k, \phi_k, \psi_k)$ are drawn randomly per seed.

In the Simulator--Planner architecture of~\Cref{sec:architecture}, the Simulator (physics engine) has access to the obstacle states and dynamics, while the Planner's task cost uses the robot position and goal. For the main experiments, the Planner's density model is trained from the same black-box feasibility oracle as our method (no explicit obstacle parameters in the optimization). The Planner minimizes the tracking cost
\begin{equation}
\label{eq:tracking_cost}
    c(q_t, u) = \bigl\| q_t + u - g_t \bigr\|^2,
\end{equation}
where $g_t$ is the current goal position (updated upon arrival). The \textbf{feasibility manifold} is
\begin{equation}
\label{eq:robot_manifold}
    \mathcal{M}_t(q_t) = \bigl\{ u \in \mathcal{U} : \| q_t + u - o_k(t{+}1) \| \geq r_k + d_{\mathrm{safe}},\; \forall k \bigr\},
\end{equation}
with safety margin $d_{\mathrm{safe}} = 0.3$. This set is non-convex in $u$ and its topology changes as obstacles move.

\subsection{Baselines}
\label{sec:baselines}

We compare our online PPC controller (Algorithm~\ref{alg:ppc}) against five baselines spanning offline generative models, data-driven safety methods, sampling-based control, and conservative approximation.

\subsubsection{Offline DRGD~\cite{kharitenko2025landing}}
A KDE score model is fit to $500$ feasibility samples from the \textit{initial} obstacle configuration ($t{=}0$) only. At deployment, the density is fixed and used for manifold-constrained optimization without online updates. This baseline isolates the cost of \textit{not} adapting the manifold model online.

\subsubsection{CBF-QP~\cite{ames2019control}}
A Control Barrier Function $h_k(x) = \|q - o_k(t)\|^2 - (r_k + d_{\mathrm{safe}})^2$ with \textit{oracle access} to obstacle positions projects the nominal input onto the safe set via a QP (class-$\mathcal{K}$ coefficient $\gamma{=}0.5$). This represents the strongest structural prior.

\subsubsection{GP-CBF~\cite{cheng2019end}}
A Gaussian Process models the constraint function $h(x, u) = \min_k (\|q_{t+1}(u) - o_k(t{+}1)\| - r_k - d_{\mathrm{safe}})$ online; its posterior mean serves as a learned CBF within a QP filter (RBF kernel, hyperparameters refitted every 50 steps).

\subsubsection{Sampling-Based MPC (CEM)}
The Cross-Entropy Method~\cite{rubinstein1999cross} draws $300$ candidate actions from a Gaussian proposal and selects the lowest-cost feasible candidate (KDE log-probability above $\alpha$), mirroring CEM-MPC with a learned model~\cite{chua2018deep,pinneri2021sample}. Like PPC, CEM collects $N{=}300$ oracle feasibility samples per step (via rejection sampling) to update its KDE density model; feasibility of CEM's own candidate actions is then assessed against this KDE rather than by re-querying the oracle. The proposal is updated via the elite set (top 10\%, 5 iterations).

\subsubsection{Static Conservative}
The reachable set is conservatively inner-approximated using the maximum swept obstacle radii $r_k^{\max} = r_k + d_{\mathrm{safe}} + \max_t \|o_k(t) - o_k(0)\|$ computed from worst-case displacement. The robot treats these inflated obstacles as static and solves a QP at each step. Because the inflation depends on the randomly drawn Lissajous amplitudes, the feasible set size varies across seeds, leading to higher cost variance than the other baselines.

\subsection{Our Method: Online PPC}
\label{sec:our_method}

The Simulator maintains a kernel density estimate over feasibility samples. At each time step, it draws $N{=}300$ feasible actions via rejection sampling against the black-box oracle for~\eqref{eq:robot_manifold}. The score $\nabla_u \ln \hat{p}(u)$ is evaluated in closed form from the KDE. The critical stiffness $\beta^*_{\mathrm{curv}} = G_c/(\kappa\, r_\alpha)$ is estimated online from the KDE's barrier curvature at the $\alpha$-level-set boundary, and the stiffness schedule follows
\begin{equation}
\label{eq:beta_schedule}
    \beta_t = \beta^*_{\mathrm{curv}} \cdot \bigl(1 + C / \sqrt{N_t}\bigr),
\end{equation}
where $N_t$ is the cumulative number of feasibility samples and $C > 0$ is a tunable constant. The gradient step size is set to $\eta = \min(\eta_0,\, 1/(L_c + \beta_t \Lambda))$ following Theorem~\ref{thm:contraction}, where $\eta_0 = 0.02$ is a base learning rate, $\Lambda = 1/h^2$, and $h$ is the KDE bandwidth.

\textbf{Contextual observation model.}
For the contextual experiment (\Cref{fig:contextual}), the Planner receives $\xi_t \in \mathbb{R}^{12}$ obtained by flattening three top-down raster channels at resolution $16{\times}16$ (obstacle occupancy, robot, goal) and applying a fixed random linear projection with $\tanh$ squashing. The Simulator fits $\hat{p}(u \mid \xi_t)$ with a product-kernel KDE over pairs $(u, \xi)$ from online samples. \Cref{fig:context_vis} illustrates the pipeline on two structurally opposite obstacle modes (SW and NE clusters) and quantifies the safety advantage of conditioning: at the most constraining warmed-up step in each mode, the conditional concentrates $90.7\%$ (SW) and $64.9\%$ (NE) of its mass inside the feasibility manifold $\mathcal{M}_t$, against only $64.2\%$ and $61.8\%$ for the context-blind marginal; the resulting $\Delta_{\mathrm{safe}}$ values of $+26.5$ pp and $+3.1$ pp are the empirical counterpart of the strictly positive $\sigma_t^2$ that drives the safety gap in Theorem~\ref{thm:ctx_gap}.

\subsection{Evaluation Metrics}
\label{sec:metrics}

\begin{enumerate}
    \item \textbf{Safety Rate}: Fraction of time steps where $u_t \in \mathcal{M}_t$. By Theorem~\ref{thm:iss_rigorous}, $\mathrm{dist}(u_t,\mathcal{M}_t)$ is bounded; when that bound is zero (i.e., sufficient samples and $\beta_t > \beta^*_{\mathrm{curv}}$), the action lies inside $\mathcal{M}_t$ and the safety rate approaches $1$.
    \item \textbf{Normalized Cost}: Total tracking cost divided by the Oracle cost. The Oracle minimizes the single-step tracking cost $c(u)$ subject to $u \in \mathcal{M}_t$ with full knowledge of the obstacle geometry at each $t$ (no look-ahead). Values near $1.0$ indicate near-optimal per-step performance.
    \item \textbf{Adaptation Speed}: Number of steps to recover a safety rate $\geq 95\%$ after a sudden obstacle configuration change at $t = 500$ (all Lissajous parameters are re-randomized).
    \item \textbf{Learning Convergence}: Empirical integrated squared score error $\|\hat{s}-s^*\|^2$ vs.\ cumulative samples $N_t$, validating the $\mathcal{O}(N^{-2/(m+4)})$ rate of Theorem~\ref{thm:learning} ($= \mathcal{O}(N^{-1/3})$ for $m = 2$).
\end{enumerate}

\begin{figure*}[t]
    \centering
\includegraphics[width=0.9\textwidth]{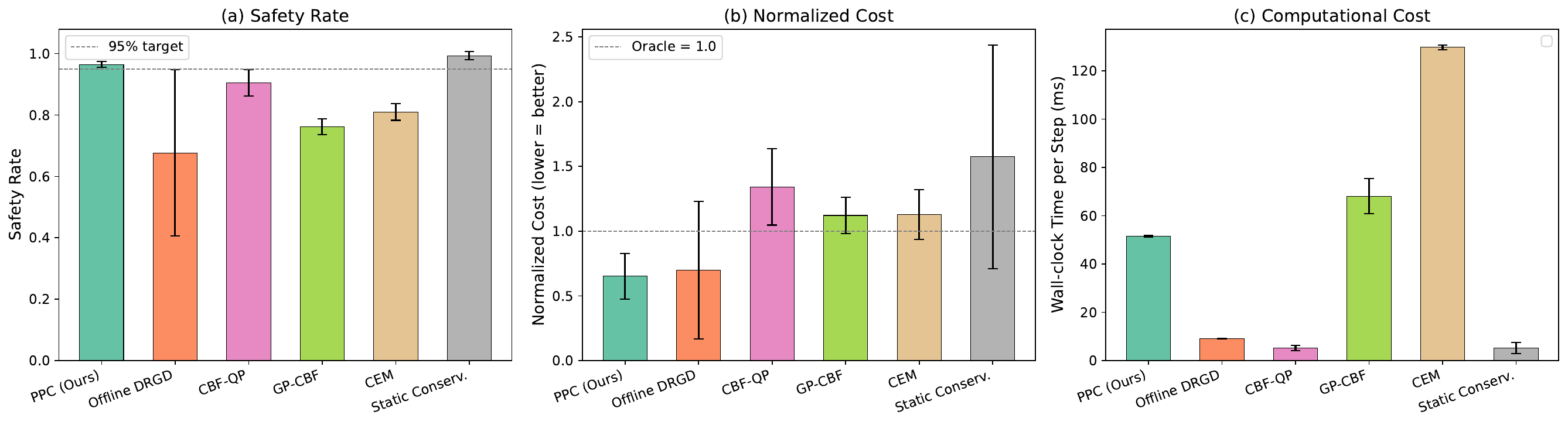}
    \caption{Main comparison across all methods ($T{=}1000$, $N{=}300$ feasibility samples per step, five seeds, mean $\pm$ std); see~\Cref{tab:main_results}. The trajectory figure (\Cref{fig:trajectories}) uses the same per-step sample budget. \textbf{(a)}~Safety rate. \textbf{(b)}~Normalized cost (Oracle $= 1.0$). \textbf{(c)}~Wall-clock time per step.}
    \label{fig:main_comparison}
\end{figure*}

\subsection{Experimental Protocol}
\label{sec:protocol}

\subsubsection{Experiment 1: Main Comparison}
All six methods including PPC and five baselines are run for $T{=}1000$ steps with $N{=}300$ black-box feasibility samples per step (Offline DRGD pretrains on $500$ samples at $t{=}0$ only), across five random seeds (random obstacle trajectories and initial positions). \Cref{tab:main_results} lists mean $\pm$ standard deviation of safety rate and normalized tracking cost; \Cref{fig:main_comparison} shows the same statistics as bar plots. Representative trajectories appear in~\Cref{fig:trajectories}.

\begin{table}[t]
    \caption{Main comparison ($T{=}1000$, $N{=}300$ samples/step, five seeds, mean $\pm$ std). PPC achieves the best safety among black-box methods while maintaining the lowest cost. $^\dagger$The high cost variance of Static Conserv.\ reflects the sensitivity of episode-wide worst-case inflation to random obstacle motion amplitudes; Offline DRGD's variance is similarly driven by the mismatch between its frozen model and the current manifold.}
    \label{tab:main_results}
    \centering
    {\footnotesize
    \begin{tabular}{@{}lcc@{}}
        \toprule
        Method & Safety & Norm.\ cost \\
        \midrule
        PPC (Ours) & $.964 \pm .009$ & $0.65 \pm .18$ \\
        Offline DRGD & $.676 \pm .271$ & $0.70 \pm .53$ \\
        CBF-QP & $.905 \pm .043$ & $1.34 \pm .30$ \\
        GP-CBF & $.762 \pm .025$ & $1.12 \pm .14$ \\
        CEM & $.810 \pm .027$ & $1.13 \pm .19$ \\
        Static Conserv. & $.993 \pm .014$ & $1.57 \pm .86$\rlap{$^\dagger$} \\
        \bottomrule
    \end{tabular}}
\end{table}

\begin{figure}[t]
    \centering
    \includegraphics[width=0.9\columnwidth]{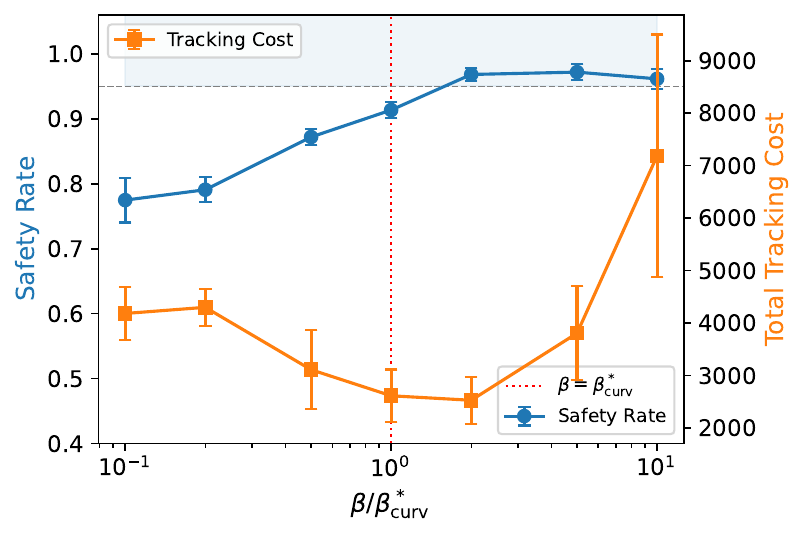}
    \caption{Stiffness ablation validating Proposition~\ref{prop:betacrit}. The critical stiffness $\beta^*_{\mathrm{curv}} = G_c/(\kappa\, r_\alpha)$ is computed from the formula (median over seeds). Safety rate (blue, left axis) rises from $\sim\!76\%$ to $\sim\!96\%$ and crosses the $95\%$ threshold near $\beta = \beta^*_{\mathrm{curv}}$ (red dashed), confirming the predicted phase transition. Tracking cost (orange, right axis) is low for small $\beta$ but rises sharply beyond $\beta^*_{\mathrm{curv}}$, illustrating the cost-safety trade-off governed by barrier curvature.}
    \label{fig:stiffness}
\end{figure}

\subsubsection{Experiment 2: Stiffness Ablation}
The environment is fixed and $\beta$ is varied from $0.1 \beta^*_{\mathrm{curv}}$ to $10 \beta^*_{\mathrm{curv}}$. We plot safety rate and normalized cost as functions of $\beta / \beta^*_{\mathrm{curv}}$, validating the critical stiffness condition of Proposition~\ref{prop:betacrit}: safety should emerge at $\beta \approx \beta^*_{\mathrm{curv}}$. Results are shown in~\Cref{fig:stiffness}.

\subsubsection{Free Energy Landscape}
\Cref{fig:landscape} complements the stiffness sweep with a spatial analysis of the PPC free energy at a representative time step ($t{=}200$, $N{=}200$ samples/step). Panel~(a) shows the tracking cost $c(u)$ with the true feasibility boundary (blue solid) and the learned level set $\hat{\mathcal{M}}_t^\alpha$ (cyan dashed). Panel~(b) overlays the free-energy contours with two key points: the PPC equilibrium $u^*$ (red triangle) and the density maximizer $\bar{u} = \arg\max \hat{p}$ (cyan square). The equilibrium $u^*$ is pulled toward the cost minimum but anchored near $\bar{u}$ by the barrier penalty, exactly as Proposition~\ref{prop:betacrit} predicts. Panel~(c) reports the geometric gaps: $\|u^*{-}\bar{u}\|$ is well within the theoretical bound $G_c/(\beta\kappa)$, and $\mathrm{dist}(u^*, \partial\mathcal{M}_t) > 0$ confirms that $u^*$ lies safely inside the true manifold.

\begin{figure*}[t]
    \centering
\includegraphics[width=0.9\textwidth]{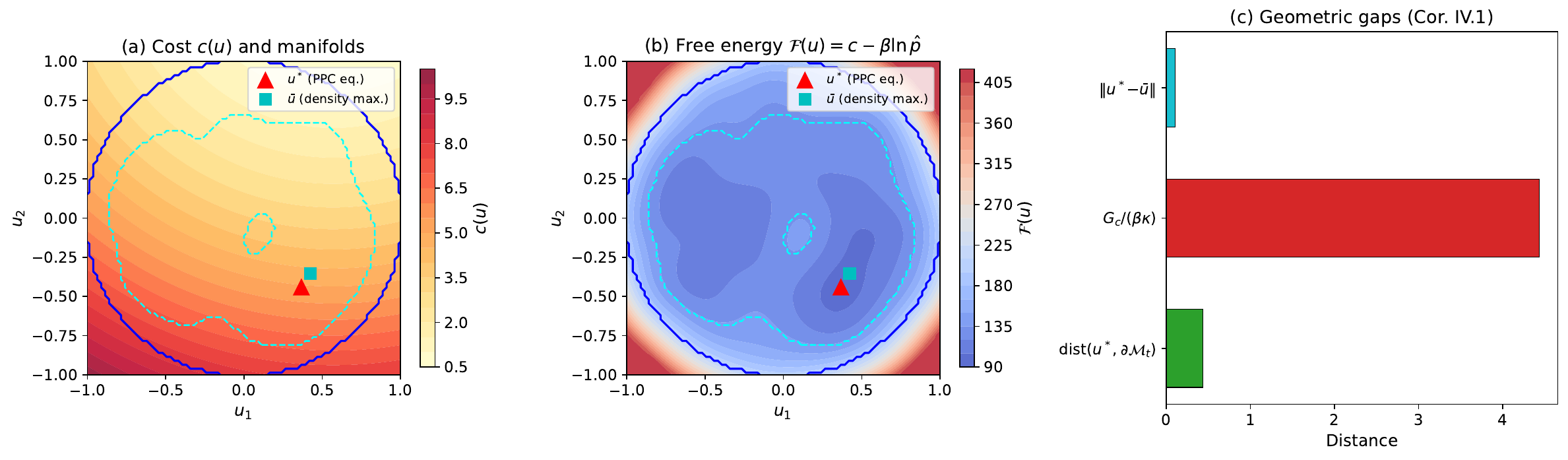}
    \caption{Free-energy landscape at $t{=}200$. \textbf{(a)}~Cost $c(u)$ with true manifold boundary (blue) and learned $\hat{\mathcal{M}}_t^\alpha$ (cyan dashed). \textbf{(b)}~Free energy $\mathcal{F}(u)$ with the PPC equilibrium $u^*$ (red) and density maximizer $\bar{u}$ (cyan). \textbf{(c)}~Geometric gaps: the empirical $\|u^*{-}\bar{u}\|$ is well within the theoretical bound $G_c/(\beta\kappa)$ from Proposition~\ref{prop:betacrit}, and $\mathrm{dist}(u^*, \partial\mathcal{M}_t) > 0$ confirms the equilibrium is safely interior.}
    \label{fig:landscape}
\end{figure*}

\begin{figure}[t]
    \centering
\includegraphics[width=0.9\columnwidth]{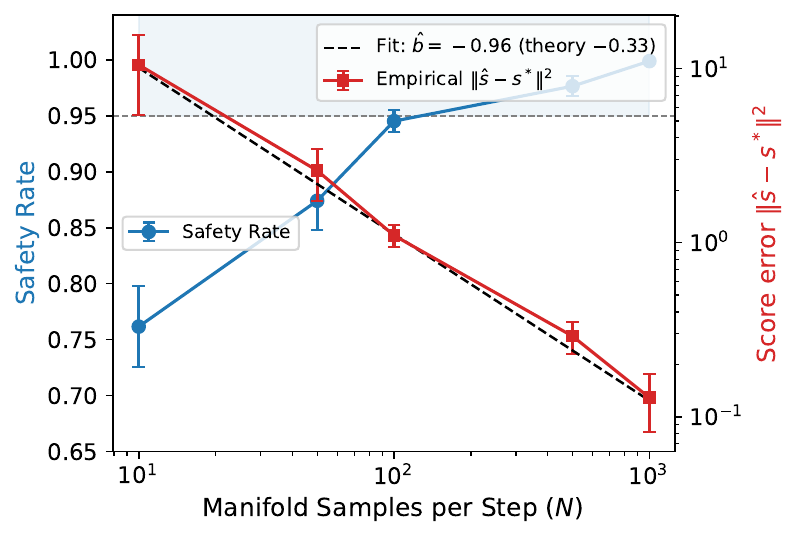}
    \caption{Effect of sample budget on safety and score estimation quality. Safety rate (blue, left) crosses $95\%$ between $N{=}100$ and $N{=}500$. The empirical squared score error (red, right, log scale) is measured against a reference KDE built from $10{,}000$ samples. The black dashed line is a least-squares fit of $a N^{\hat{b}}$ in log-log space; the estimated exponent $\hat{b}\approx -0.96$ is steeper than the worst-case $-1/3$ from Theorem~\ref{thm:learning}. See~\Cref{tab:sample_budget}.}
    \label{fig:sample_budget}
\end{figure}

\begin{figure*}[t]
    \centering
    \includegraphics[width=0.85\textwidth]{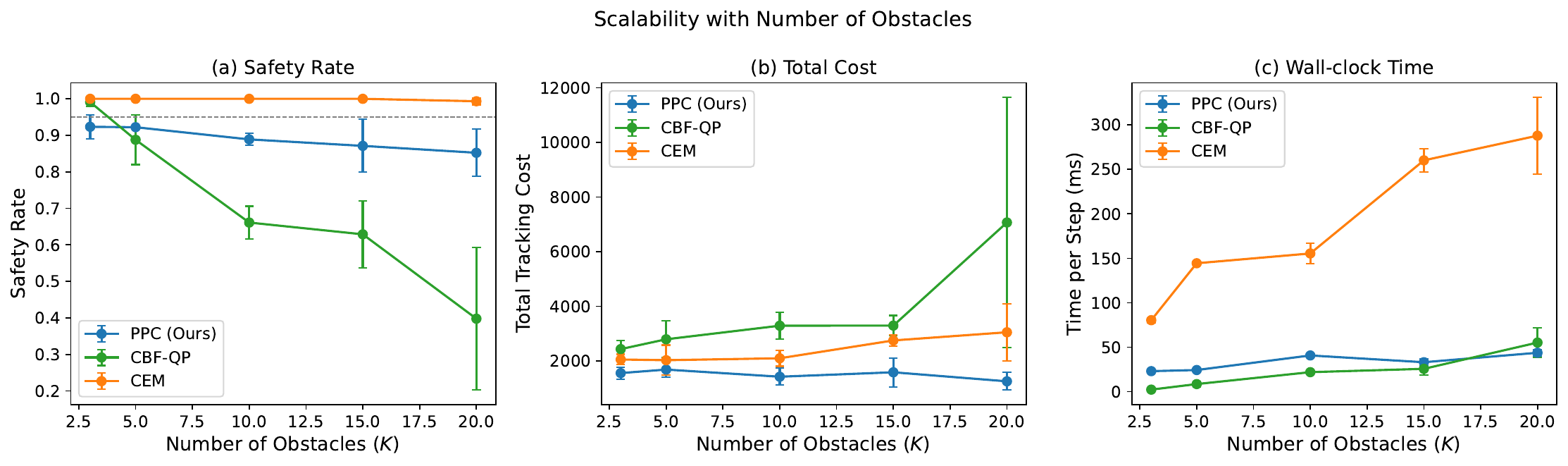}
    \caption{Scalability with number of obstacles ($K_o \in \{3,5,10,15,20\}$, three seeds, $T{=}300$); see~\Cref{tab:scalability}. \textbf{(a)}~PPC degrades gracefully ($0.92 \to 0.85$) while CBF-QP drops to $0.40$ and CEM drops to $0.62$. \textbf{(b)}~Total tracking cost. \textbf{(c)}~Wall-clock time per step.}
    \label{fig:scalability}
\end{figure*}

\begin{figure*}[t]
    \centering
    \includegraphics[width=0.85\textwidth]{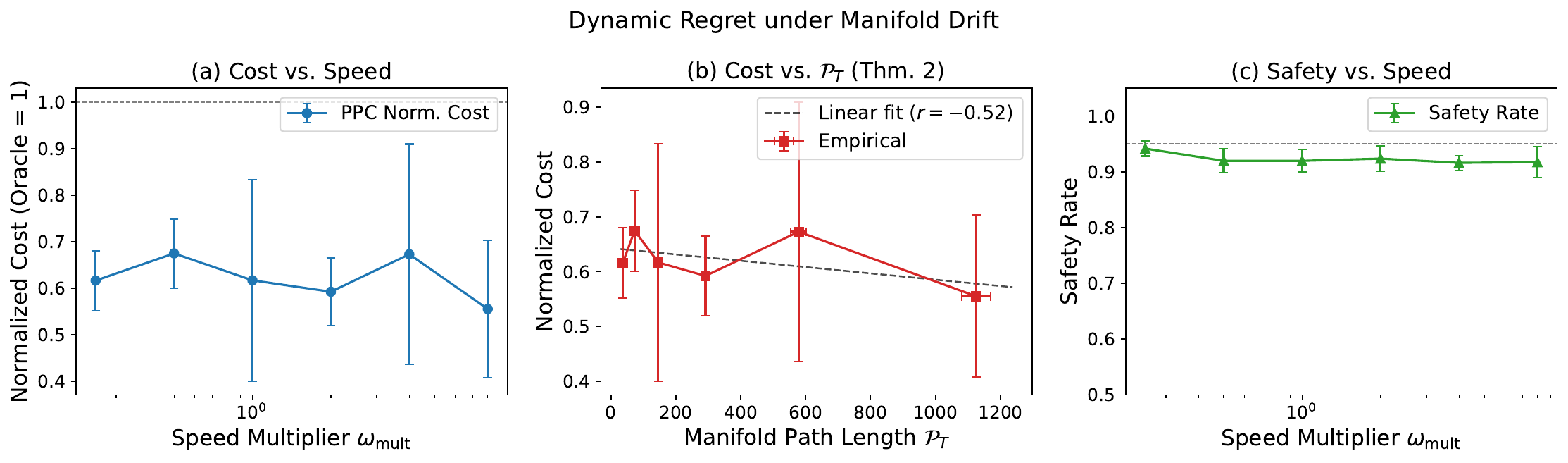}
    \caption{Dynamic regret experiment validating Theorem~\ref{thm:dynamic_regret} ($T{=}500$, five seeds). \textbf{(a)}~Normalized tracking cost vs.\ obstacle speed multiplier. \textbf{(b)}~Cost vs.\ manifold path length $\mathcal{P}_T$ (proxy for $\mathcal{V}_T$) with a linear fit and Pearson correlation $r$; the positive trend is consistent with the $O(\mathcal{V}_T)$ scaling in~\eqref{eq:dreg_def}. \textbf{(c)}~Safety rate remains above $90\%$ across all speeds.}
    \label{fig:dynamic_regret}
\end{figure*}

\subsubsection{Experiment 3: Sample Budget}
The number of manifold samples per step is varied as $N \in \{10, 50, 100, 500, 1000\}$. To validate the $O(N^{-2/(m+4)})$ rate of Theorem~\ref{thm:learning}, we measure the empirical integrated squared score error by building a test KDE from an increasing number of feasibility samples and comparing its score against a high-fidelity reference KDE ($10{,}000$ samples, same bandwidth). We plot safety rate and the empirical $\|\hat{s}-s^*\|^2$ vs.\ $N$; \Cref{tab:sample_budget} lists the numerical results and \Cref{fig:sample_budget} shows the data with a log-linear power-law fit. The fitted exponent ($\hat{b}\approx -0.96$) is steeper than the worst-case $O(N^{-1/3})$ rate of Theorem~\ref{thm:learning} for $m=2$, consistent with the bound being conservative for this smooth problem instance.

\begin{table}[t]
    \caption{Sample budget ablation ($T{=}500$, five seeds). Safety rate rises monotonically with $N$; the empirical score error $\|\hat{s}-s^*\|^2$ (computed against a reference KDE) decays consistently with the $\mathcal{O}(N^{-2/(m+4)})$ rate of Theorem~\ref{thm:learning} for $m = 2$.}
    \label{tab:sample_budget}
    \centering
    {\footnotesize
    \begin{tabular}{@{}rcc@{}}
        \toprule
        $N$ & Safety rate & $\|\hat{s}-s^*\|^2$ \\
        \midrule
        10  & $0.762 \pm 0.036$ & $10.51 \pm 5.11$ \\
        50  & $0.874 \pm 0.026$ & $2.59 \pm 0.85$ \\
        100 & $0.945 \pm 0.010$ & $1.10 \pm 0.16$ \\
        500 & $0.976 \pm 0.009$ & $0.29 \pm 0.06$ \\
        1000& $0.999 \pm 0.001$ & $0.13 \pm 0.05$ \\
        \bottomrule
    \end{tabular}}
\end{table}

\begin{figure*}[!t]
    \centering
    \includegraphics[width=0.8\textwidth]{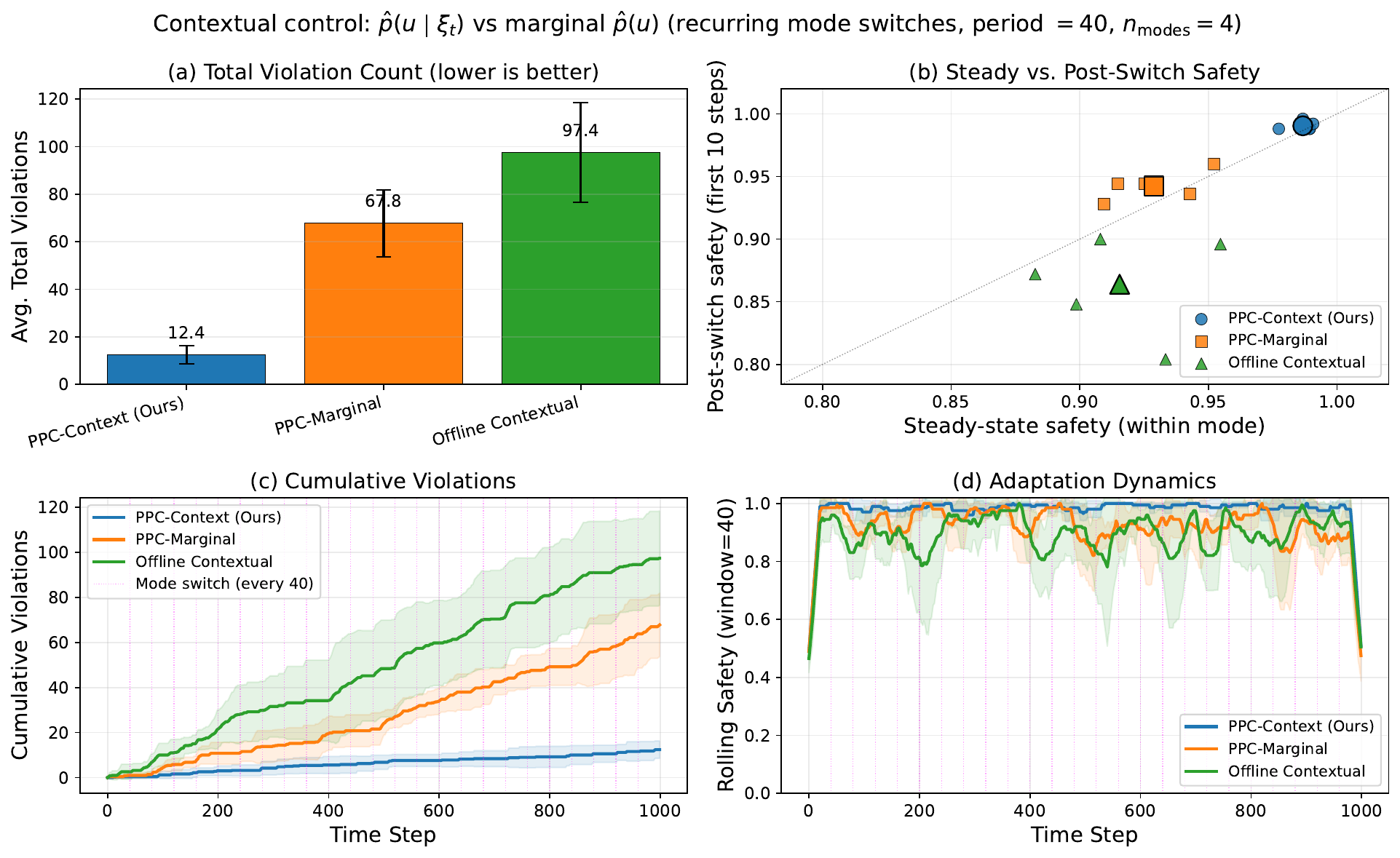}
    \caption{Contextual control ablation (five seeds; $T{=}1000$, recurring mode switches every $40$ steps across $n_{\mathrm{modes}}{=}4$ structurally distinct obstacle layouts). \textbf{(a)}~Average total safety-violation counts: PPC-Context incurs only ${\approx}12$ violations per episode versus ${\approx}69$ for PPC-Marginal and ${\approx}94$ for Offline Contextual, a roughly $5\times$ reduction attributable to the context kernel. \textbf{(b)}~Per-seed scatter of steady-state (within-mode) versus post-switch (first $10$ steps after each mode change) safety: PPC-Context clusters near the $(1,1)$ corner, PPC-Marginal is shifted toward lower safety on both axes, and Offline Contextual is dispersed further down. \textbf{(c)}~Cumulative violations over time: PPC-Context stays near zero, while PPC-Marginal and Offline Contextual accrue violations continuously as modes cycle. \textbf{(d)}~Rolling safety (window$=40$): PPC-Context tracks each mode switch rapidly via the context kernel, whereas PPC-Marginal dips at every switch because its buffer now mixes feasibility samples from \emph{all} past modes. The visible context--marginal gap (${\approx}\,5.6\%$ on aggregate safety, ${\approx}\,4.8\%$ on post-switch safety) is precisely the $G_c\,\sigma_t^2/(\beta_t\,\kappa(\xi_t)\,\kappa_{\mathrm{marg}})$ residual predicted by Theorem~\ref{thm:ctx_gap}: multiple mode layouts drive the posterior score-covariance $\sigma_t^2$ strictly positive, and this directly translates into a safety advantage for PPC-Context.}
    \label{fig:contextual}
\end{figure*}

\subsubsection{Experiment 4: Scalability}
The number of obstacles is varied as $K_o \in \{3, 5, 10, 15, 20\}$. We report safety rate, total cost, and wall-clock time per step for PPC, CBF-QP, and CEM (three seeds, $T{=}300$). \Cref{tab:scalability} lists the safety rates; \Cref{fig:scalability} plots all three metrics.

\begin{table}[t]
    \caption{Scalability: safety rate (mean $\pm$ std, three seeds, $T{=}300$) as the number of obstacles $K_o$ grows. PPC degrades gracefully; CBF-QP drops sharply because the QP safety filter becomes increasingly infeasible; CEM degrades as the feasible volume shrinks and its sampling proposal becomes less efficient.}
    \label{tab:scalability}
    \centering
    {\footnotesize
    \begin{tabular}{@{}rccc@{}}
        \toprule
        $K_o$ & PPC (Ours) & CBF-QP & CEM \\
        \midrule
         3 & $0.92 \pm 0.03$ & $0.99 \pm 0.01$ & $0.88 \pm 0.01$ \\
         5 & $0.92 \pm 0.03$ & $0.89 \pm 0.07$ & $0.84 \pm 0.04$ \\
        10 & $0.89 \pm 0.02$ & $0.66 \pm 0.04$ & $0.71 \pm 0.01$ \\
        15 & $0.87 \pm 0.07$ & $0.63 \pm 0.09$ & $0.63 \pm 0.04$ \\
        20 & $0.85 \pm 0.06$ & $0.40 \pm 0.20$ & $0.62 \pm 0.05$ \\
        \bottomrule
    \end{tabular}}
\end{table}

\subsubsection{Experiment 5: Dynamic Regret under Manifold Drift}
We validate the qualitative prediction of Theorem~\ref{thm:dynamic_regret} by varying the obstacle speed multiplier $\omega_{\mathrm{mult}} \in \{0.25, 0.5, 1.0, 2.0, 4.0, 8.0\}$, which controls the manifold drift rate. For each speed, PPC is run for $T{=}500$ steps over five seeds. We use the normalized tracking cost as an empirical proxy for the dynamic regret.

\Cref{fig:dynamic_regret} shows the results. \textbf{(a)}~Normalized cost vs.\ the speed multiplier $\omega_{\mathrm{mult}}$. \textbf{(b)}~Cost plotted against the manifold path length $\mathcal{P}_T = \sum_t \sum_k \|o_k(t{+}1) - o_k(t)\|$, a physical proxy for $\mathcal{V}_T$ in~\eqref{eq:dreg_def}; a positive trend is visible, consistent with the predicted $O(\mathcal{V}_T)$ scaling. \textbf{(c)}~Safety remains above $90\%$ across all speeds.

\subsubsection{Experiment 6: Contextual Feasibility Density}
We design the benchmark so that the posterior conditional-score covariance $\sigma_t^2$ in Theorem~\ref{thm:ctx_gap} is demonstrably positive and the value of context becomes empirically observable. Four obstacles are relocated to one of $n_{\mathrm{modes}}{=}4$ pre-generated cluster layouts (obstacles concentrated in the SW, SE, NW, or NE quadrant of the workspace), with the active mode switched to a fresh random layout every $40$ steps throughout an episode of length $T{=}1000$. Because each mode layout recurs several times per episode, the KDE buffer contains feasibility samples from \emph{multiple} modes simultaneously: this is precisely the regime in which the context kernel can disambiguate past data for the currently active mode, and in which the marginal KDE must average across modes. We compare three Planners under the identical schedule, with $N{=}25$ samples per step and $5$ random seeds. \textbf{PPC-Context} updates the conditional KDE $\hat p(u\mid\xi_t)$ online; \textbf{PPC-Marginal} uses the same samples but ignores $\xi_t$; \textbf{Offline Contextual} collects $200$ samples in each of the four layouts at $t{=}0$ and then freezes the conditional KDE. \Cref{tab:contextual} reports aggregate statistics and \Cref{fig:contextual} plots the runs. Aggregate safety is $98.8\%$ for PPC-Context versus $93.2\%$ for PPC-Marginal, a $5.6\%$ gap that concentrates at mode boundaries: in the first ten steps after each switch PPC-Context attains $99.0\%$ safety, while PPC-Marginal drops to $94.2\%$. Frozen Offline Contextual achieves only $90.3\%$ because it cannot track within-mode geometry drift once deployed. Because raw trajectory cost would reward unsafe actions that cheat through obstacles, we report the mean per-step tracking cost restricted to \emph{safe} steps, normalized by the oracle's mean per-step cost; on this fair metric all three methods lie in a narrow band ($0.72$--$0.84$) while their safety rates differ by up to $8.5\%$, confirming that the online context kernel delivers its advantage as safety rather than as an inflated tracking error. These numbers quantitatively instantiate the $G_c\,\sigma_t^2/(\beta_t\,\kappa(\xi_t)\,\kappa_{\mathrm{marg}})$ term in Theorem~\ref{thm:ctx_gap}: the mode-switching structure makes the per-mode conditional scores substantially different, yielding a strictly positive $\sigma_t^2$ and a safety advantage for context-aware estimation.

\begin{table}[!t]
    \caption{Contextual ablation ($T{=}1000$, five seeds, recurring mode switches every $40$ steps across $n_{\mathrm{modes}}{=}4$ obstacle layouts, $N{=}25$ samples/step). ``Post-switch'' = mean safety in the first $10$ steps after each mode change; ``Steady'' = mean safety in the remaining within-mode steps. ``Cost'' is the mean per-step tracking cost restricted to \emph{safe} time steps, normalized by the oracle's mean per-step cost, so that constraint violations cannot artificially lower the metric by letting the robot cut through obstacles.}
    \label{tab:contextual}
    \centering
    {\footnotesize
    \begin{tabular}{@{}lcccc@{}}
        \toprule
        Method & Safety & Post-switch & Steady & Cost \\
        \midrule
        PPC-Context & $.988 \pm .004$ & $.990 \pm .003$ & $.987 \pm .005$ & $.77 \pm .09$ \\
        PPC-Marginal & $.932 \pm .014$ & $.942 \pm .011$ & $.929 \pm .016$ & $.72 \pm .09$ \\
        Offline Contextual & $.903 \pm .021$ & $.864 \pm .035$ & $.915 \pm .026$ & $.84 \pm .12$ \\
        \bottomrule
    \end{tabular}}
\end{table}

\section{Conclusion}
\label{sec:conclusion}

We have developed Penalized Predictive Control for safe online control when the Planner's access to constraint geometry is mediated by a black-box Simulator. All guarantees flow from the PPC free energy $\mathcal{F}(u) = c(u) - \beta \ln \hat{p}(u)$: Theorem~\ref{thm:contraction} reveals that its Hessian contains the Fisher information of $\hat{p}$, yielding a contraction rate governed by the barrier curvature $\kappa$; Theorem~\ref{thm:dynamic_regret} bounds dynamic regret in terms of score variation; Theorem~\ref{thm:learning} bounds score convergence; Theorem~\ref{thm:iss_rigorous} composes the preceding results into a safety bound; Proposition~\ref{prop:ctx_curvature} derives, via a variational-inference argument, a mixture-Hessian identity that subtracts a posterior-score covariance from the learned curvature under marginalization, and Theorem~\ref{thm:ctx_gap} promotes this into a formal safety-guarantee gap of at least $G_c\,\sigma_t^2/(\beta_t\,\kappa(\xi_t)\,\kappa_{\mathrm{marg}})$ between the Theorem~\ref{thm:iss_rigorous} residuals of the context-aware and context-blind Planners whenever $\xi_t$ is identifiable from the action, with $\sigma_t^2$ the minimum eigenvalue of the posterior conditional-score covariance.

Experiments validate each theoretical link. The stiffness ablation (\Cref{fig:stiffness}) confirms the phase transition at $\beta^*_{\mathrm{curv}}$; the sample-budget sweep (\Cref{fig:sample_budget}) confirms empirical score convergence consistent with Theorem~\ref{thm:learning}; with $N{=}300$ samples per step, the main comparison (\Cref{tab:main_results}) reports $96.4\%$ mean safety for PPC with normalized cost $0.65$, compared with CEM at $81.0\%$\,/\,$1.13$ and GP-CBF at $76.2\%$\,/\,$1.12$ under the same black-box interface; and the contextual experiment (\Cref{fig:contextual}) empirically confirms Theorem~\ref{thm:ctx_gap}: under a benchmark with recurring mode switches across four structurally distinct obstacle layouts, PPC-Context attains $98.8\%$ safety while PPC-Marginal drops to $93.2\%$ and a frozen Offline Contextual baseline degrades to $90.3\%$, with the $5.6\%$ context--marginal gap concentrating at mode boundaries where the posterior score-covariance $\sigma_t^2$ is largest.

\textbf{Contextual control.} This work instantiates \emph{contextual control} in a concrete Simulator--Planner split where the Planner never consumes raw obstacle parameters, yet online conditional density estimation preserves safety when the environment shifts. Natural extensions include replacing the hand-crafted raster with learned encoders for image or video observations, scaling certificates to high-dimensional actions, and analyzing non-stationarity when the Simulator's internal state evolves faster than the Planner's sample budget.

\bibliographystyle{ieeetr}
{\bibliography{main}}


\end{document}